\newif\iffinal
\else\usepackage[notref,notcite]{showkeys}\fi
\crefname{theorem}{Theorem}{Theorems}
\crefname{lemma}{Lemma}{Lemmas}
\crefname{proposition}{Proposition}{Propositions}
\crefname{section}{\S}{\S\S}
\crefname{equation}{}{}
\DeclareFontFamily{OT1}{eusb}{} \DeclareFontShape{OT1}{eusb}{m}{n}
{<5> <6> <7> <8> <9> <10> <11> <12> <14.4> eusb10}{}
\DeclareMathAlphabet{\eusb}{OT1}{eusb}{m}{n}
\DeclareFontFamily{OT1}{eusm}{} \DeclareFontShape{OT1}{eusm}{m}{n}
{<5> <6> <7> <8> <9> <10> <11> <12> <14.4> eusm10}{}
\DeclareMathAlphabet{\eusm}{OT1}{eusm}{m}{n}
\DeclareFontFamily{OT1}{eufm}{} \DeclareFontShape{OT1}{eufm}{m}{n}
{<5> <6> <7> <8> <9> <10> <11> <12> <14.4> eufm10}{}
\DeclareMathAlphabet{\mathfrak}{OT1}{eufm}{m}{n}
\DeclareFontFamily{OT1}{fraktura}{}
\DeclareFontShape{OT1}{fraktura}{m}{n} {<5> <6> <7> <8> <9> <10> <11>
  <12> <13> <14.4> [1.1] eufm10}{}
\DeclareMathAlphabet{\fraktura}{OT1}{fraktura}{m}{n}
\DeclareFontFamily{OT1}{cmfi}{} \DeclareFontShape{OT1}{cmfi}{m}{n}
{<5> <6> <7> <8> <9> <10> <11> <12> <13> <14.4> [0.9] cmfi10}{}
\DeclareMathAlphabet{\cmfi}{OT1}{cmfi}{b}{n}
\DeclareFontFamily{OT1}{cmss}{} \DeclareFontShape{OT1}{cmss}{m}{n}
{<5> <6> <7> <8> <9> <10> <11> <12> <13> <14.4> cmss10}{}
\DeclareMathAlphabet{\cmss}{OT1}{cmss}{m}{n}
\newtheoremstyle{thm}{1.5ex}{1.5ex}{\itshape\rmfamily}{}
{\bfseries\rmfamily}{}{2ex}{}
\newtheoremstyle{def}{1.5ex}{1.5ex}{\slshape\rmfamily}{}
{\bfseries\rmfamily}{}{2ex}{}
\newtheoremstyle{rem}{1.3ex}{1.3ex}{\rmfamily}{}
{\itshape}
{} {1.5ex}{}
\theoremstyle{thm}
\newtheorem{theorem}{Theorem}[section]
\newtheorem{lemma}[theorem]{Lemma}
\newtheorem{proposition}[theorem]{Proposition}
\newtheorem*{Main Theorem}{Main Theorem.}
\newtheorem{corollary}[theorem]{Corollary}
\newtheorem*{special theorem}{Lindeberg-Feller Theorem for Martingales}
\newtheorem{definition}[theorem]{Definition}
\theoremstyle{rem}
\newtheorem{remark}{{\itshape Remark}}[]
\numberwithin{equation}{section}
\renewcommand{\section}{\secdef\sct\sect}
\newcommand{\sct}[2][default]{%
\refstepcounter{section}
\addcontentsline{toc}{section}{{\tocsection
    {}{\thesection}{\!\!\!\!#1\dotfill}}{}}
\vspace{0.7cm}
\centerline{\scshape\thesection.\ #1} \nopagebreak \vspace{0.2cm}}
\newcommand{\sect}[1]{%
\vspace{0.4cm} \centerline{\large\scshape\rmfamily #1}
\vspace{0.2cm}
}
\renewcommand{\subsection}{\secdef\subsct\sbsect}
\newcommand{\subsct}[2][default]{\refstepcounter{subsection}
\addcontentsline{toc}{subsection}
{{\tocsection{\!\!}{\hspace{1.2em}\thesubsection}{\!\!\!\!#1\dotfill}}{}}
\nopagebreak
{\flushleft\bf
\thesubsection~\bf #1.~}
\noindent
\nopagebreak}
\newcommand{\sbsect}[1]{
\noindent
\textbf{#1.~}
}
\renewcommand{\subsubsection}{%
\secdef \subsubsect\sbsbsect}
\newcommand{\subsubsect}[2][default]{%
\refstepcounter{subsubsection}
\addcontentsline{toc}{subsubsection}{{\tocsection{\!\!}
{\hspace{3.05em}\thesubsubsection}{\!\!\!\!#1\dotfill}}{}}
\nopagebreak
\vspace{0.15\baselineskip} \nopagebreak {\flushleft\rmfamily
\itshape\thesubsubsection
\ \rmfamily #1\/.}\ }
\newcommand{\sbsbsect}[1]{\vspace{0.1cm}\noindent
\rmfamily \itshape
\arabic{section}.\arabic{subsection}.\arabic{subsubsection} \
\sffamily #1\/.\ }
\renewcommand{\caption}[1]{%
\vglue0.5cm
\refstepcounter{figure}
\begin{minipage}{0.9\textwidth}\small {\sc Figure~\thefigure. }#1\end{minipage}}
\newcommand{\ra}{{e_1}}
\def\qed{ \hfill $\square$}
 \DeclareMathOperator{\E}{\mathbb{E}}
 \DeclareMathOperator{\sgn}{sgn}
 \newcommand{\bP}{\mathbb{P}}
\newcommand{\dist}{\operatorname{dist}}
\newcommand{\diam}{\operatorname{diam}}
\newcommand{\Int}{{\text{\rm Int}}}
\newcommand{\Ext}{{\text{\rm Ext}}}
\newcommand{\textd}{\text{\rm d}\mkern0.5mu}
\newcommand{\Sp}{{\text{\rm sp}}}
\renewcommand{\AA}{\mathcal A}
\newcommand{\BB}{\mathcal B}
\newcommand{\DD}{\mathcal D}
\newcommand{\EE}{\mathcal E}
\newcommand{\FF}{\mathcal F}
\newcommand{\HH}{\mathcal H}
\newcommand{\II}{\mathcal I}
\newcommand{\KK}{\mathcal K}
\newcommand{\MM}{\mathcal M}
\newcommand{\NN}{\mathcal N}
\newcommand{\QQ}{\mathcal Q}
\newcommand{\RR}{\mathcal R}
\newcommand{\D}{\mathbb D}
\newcommand{\N}{\mathbb N}
\newcommand{\Q}{\mathbb Q}
\newcommand{\R}{\mathbb R}
\newcommand{\X}{\mathbb X}
\newcommand{\Z}{\mathbb Z}
\def\myffrac#1#2 in #3{\raise 2.6pt\hbox{$#3 #1$}\mkern-1.5mu\raise 0.8pt\hbox{$#3/$}\mkern-1.1mu\lower 1.5pt\hbox{$#3 #2$}}
\newcommand{\ffrac}[2]{\mathchoice%
        {\myffrac{#1}{#2} in \scriptstyle}
        {\myffrac{#1}{#2} in \scriptstyle}
        {\myffrac{#1}{#2} in \scriptscriptstyle}
        {\myffrac{#1}{#2} in \scriptscriptstyle}
}
\newcommand{\pI}{\partial^{\textrm{i}}}
\newcommand{\pO}{\partial^{\textrm{o}}}
\newcommand{\gld}{g^{\lambda, D}}
\newcommand{\gll}{g^{\lambda}}
\newcommand{\gln}{g^{\lambda, N}}
\newcommand{\Gld}{G^{\lambda, D}}
\newcommand{\Gln}{G^{\lambda, N}}
\newcommand{\Gl}{G^{\lambda}}
\newcommand{\ls}{\lesssim}
\author[N. Crawford]{Nicholas Crawford} \thanks{Supported in part at the
    Technion by a Marilyn and Michael Winer Fellowship and a
    Landau Fellowship}
\begin{document}

\title[Random Field Induced Order]{Random Field Induced
Order in Low Dimension I}
\maketitle

\centerline{\textit{Department of Mathematics, The Technion, Haifa, Israel}}

\vspace{2mm} \begin{quote} \footnotesize \textbf{Abstract:} Consider
the classical $XY$ model in a weak random external field pointing
along the $Y$ axis with strength $\epsilon$.  We prove that the model defined on $\Z^3$ with nearest neighbor coupling exhibits residual magnetic order \textit{in the horizontal direction} for arbitrarily weak random field strengths and, depending on field strength, sufficiently low temperature.  \end{quote}
\vspace{2mm}

\section{Introduction}
In this paper we study an interesting phenomena which leads to ordering at low temperatures for spin systems with continuous symmetries:  random field induced ordering (RFO).  To fix ideas and introduce the model which is studied below, consider a classical $O(2)$ model on $\Z^d$.  For each $x \in \Z^d$ let $\sigma_x \in \mathbb S^1$.   A vector of spins $\sigma=(\sigma_x)_{x \in \Z^d}$ will be called a spin configuration.  Let $(\alpha_x(\omega))_{x \in \Z^d}$ be an auxiliary i.i.d. family of standard normal random variables, with $\omega$ representing an element of an auxiliary sample space $\Omega$ on which the $\alpha_x$'s are defined.  For any fixed  spin configuration $\sigma^0_x$ and any bounded region $\Lambda \subset \Z^d$, we can then define the (random) Hamiltonian via
\begin{equation}
\label{E:Ham0}
-\HH_{\Lambda}^{\omega}(\sigma|\sigma^0)= -\frac 12\sum_{\langle x y\rangle \cap \Lambda \neq \varnothing } [\sigma_x- \sigma_y]^2 + \epsilon \sum_{x \in \Lambda} \alpha_x(\omega) e_2 \cdot \sigma_x
\end{equation}
where we set $\sigma_x=\sigma_x^0$ for $x\in \Lambda^c$ and allow $\sigma_x \in \mathbb S^1$ to be arbitrary within $\Lambda$.
Here $\langle xy \rangle$ indicates that $x, y$ are nearest neighbors with respect to the usual graph structure on $\Z^d$ and $e_1, e_2$ denote the standard orhonormal basis of $\R^2$.
Defining finite volume Gibbs measures by
\[
\mu_{\Lambda}^{\sigma_0}(A) =Z_{\Lambda}^{-1}\int_{A} \prod_{x \in \Lambda} \textd \nu(\sigma_x) \exp\{ - \beta \HH_{\Lambda}^{\omega}(\sigma|\sigma^0)\}
\]
and denoting the corresponding Gibbs state by $\langle \cdot \rangle= \langle \cdot \rangle_{\Lambda}^{\sigma_0}$,
the question is whether and when, in terms of $\beta, \epsilon$ and $\sigma_0$, residual magnetic ordering occurs in the limit as $\Lambda \uparrow \Z^d$.  Below we discuss in more detail previous work on the subject, but prior to the present work it was expected that ordering does occur in dimension $d \geq 3$ while in dimension $d=2$ there was uncertainty about the low temperature behavior \cite{ALL, DF1, DF2, Wehr-et-al-2}.  In general, when ordering does occur, the ordering is expected in the horizontal direction $\pm e_1$ \textit{whenever} the projection $\sigma^0_x \cdot e_1$ is either uniformly positive or uniformly negative on $\Z^d$.  In this paper we demonstrate that ordering occurs in the $\pm e_1$ direction only if $d=3$ and only if the boundary condition $\sigma^0_x \equiv \pm e_1$.   Our framework can be extended to prove the same statement for the two dimensional system and also to treat 
more general boundary conditions $\sigma^0_x \equiv u$ for some $u \in \mathbb S^1$.  Because these statements  increase the technical complexity substantially without shedding further light on the methods, we leave them for future work.  When possible, we will comment on modifications which are needed to treat these extensions.

Despite the fact that we only provide a proof in three dimensions, the result is most interesting if $d=2$.  To explain why, we recall the behavior of related models.
\begin{itemize}
\item
\textbf{The Pure $O(2)$ Model.}
In this case we take the same setup as \cref{E:Ham0} except that we set $\epsilon=0$.  When $d=2$ there is no residual magnetic order in the thermodynamic limit (this is the content of the Mermin-Wagner theorem, see \cite{MW, MS, DS, FP} among many other works).  There is however a Kosterlitz-Thouless phase transition \cite{FS} expressed by a change in the behavior of the decay of the spin-spin correlation function
 $\langle \sigma_x \cdot \sigma_y\rangle$ with $|x-y|$.  If $d\geq 3$, residual magnetic ordering occurs \cite{FSS}.

\item
\textbf{The Random Field Ising Model.}
In this case we constrain the spins to be $\pm e_2$, replace $\textd \nu$ by unbiased counting measure and otherwise retain the setup of \cref{E:Ham0}.
When $d \geq 3$,  residual magnetic ordering occurs for $\pm e_2$ boundary conditions.  This was proved for ground states in a system with weak disorder in \cite{Imbrie} and for low temperature and weak disorder in \cite{BK}.   On the other hand in dimension $d=2$ it was proved in \cite{AW}, see also \cite{IM}, that there is a unique infinite volume Gibbs state at arbitrary strength $\epsilon$ of the disorder.

\item\textbf{The Random Field Gaussian Model.}
In this example we replace the vector valued spins $\sigma_x \in \mathbb S^1$ by a scalar field $\phi_x \in \R$, replacing $e_2$ by $1$, and replacing $\textd \nu$ by Lebesgue measure on $\R$.  The model appears in \cite{vE0}, see also \cite{SVBO} for related work. Because the underlying configuration space is no longer compact, the existence of infinite volume Gibbs states is a (somewhat) nontrivial issue and, to an extent, is the analog of the question of residual ordering for compact spin spaces with a continuous symmetry.

To get an indication as to what should be expected we compute two point correlations in finite volume. Fixing a finite volume $\Lambda$, let $-\Delta_{\Lambda}$ denote the discrete Laplace operator on $\Lambda$ with Dirichlet boundary conditions.  Taking boundary conditions $\phi^0_x \equiv 0$ and averaging over the $\alpha_z$'s gives
\[
\E[\langle\phi_x \phi_y \rangle^{\phi^0 \equiv 0}_{\Lambda}]= -\Delta_{\Lambda}^{-1}(x, y)+ \epsilon^2 \Delta^{-2}(x, y).
\]
If $\epsilon$ is $0$, these correlations are finite (uniformly in $\Lambda$) only if $d\geq 3$.  In this case one can define infinite volume Gibbs states for the field $\phi_x$.  When $d=2$, one must instead view the infinite volume measure as existing on the space of gradients.
If $\epsilon > 0$ and if $d \leq 4$, the second term  on the RHS grows with $\Lambda$ while, when $d=2$, it grows even after taking gradients in both arguments $x, y$. 

\item \textbf{Order-by-Disorder.}
On the other hand, the phenomenon of "Order-by-Disorder" provides related examples of systems which exhibit ordering due to various types of fluctuations.  The most relevant example, first considered by Henley \cite{Hen}, concerns a model Hamiltonian on $\Z^2$ of the form
\[
-\HH(\sigma )= \sum_{\|x-y\|_2=1} J_1 [\sigma_x- \sigma_y]^2 + \sum_{\|x-y\|_2^2} J_2 [\sigma_x- \sigma_y]^2\]
with $|J_1|< 2J_2$.  The  ground-states for this (frustrated) system are obtained by choosing a purely anti-aligned configuration of spins on each of the even and odd sub-lattices of $\Z^2$ and are thus parameterized by two angles:  an angle between the spin at $(0, 0)$ and the $e_1$-axis and relative angle between the spin at $(0, 1)$ and $(0, 0)$.  The degeneracy of ground-states is partially lifted under the  introduction of two types of "disorder".  The first type consists in passing from $0$ to positive temperature, see \cite{BCK} for mathematical justification of the effect in this case.  More relevant to the RFIO is a second mechanism: site dilution. Vertices of $\Z^2$ are deleted from the system independently with probability $p\ll 1$.  According to the calculations in \cite{Hen}, at $0$  and low temperature the system prefers the ground states with \textit{relative} angle between spins at $(0, 0)$ and $(0, 1)$ to be fixed at  $\pm \ffrac \pi2$.  

Besides the  obvious differences between this setup and ours, there is one crucial technical difference.  In the site diluted model, large fluctuations due to randomness are substantially weaker than those encountered in the analysis of the RFO(2)  model.  In particular, there is an analog to the field $g_x$ introduced below but the fluctuations of this field are about as singular as the four dimensional version of the RFO(2) model.  While the site diluted order-by-disorder problem has not been rigorously addressed, this feature suggests the conclusions in \cite{Hen} are reliable. It would be interesting to see if our methods can be adapted to this case.  \end{itemize}

Of the first three examples, the most worrying from the perspective of proving magnetic ordering in the RFO(2) model is the last one:  If we represent spins via angular variables -- $\sigma_x=(\cos(\theta_x), \sin(\theta_x))$ -- and make an expansion of \cref{E:Ham0} around $\theta_x=0$ (or any fixed angle $\psi$ for that matter) our model looks exactly like the random field Gaussian model.  This suggests that the whole ansatz of having order may be flawed since fluctuations in the latter model are so strong.

Beginning in the early 1980's, RFIO was the subject of a number of publications in the physics literature,  see \cite{DF1, DF2, MP}.  One group \cite{DF1} concluded there is a low temperature paramagnetic phase  The other \cite{MP} concluded there is an intermediate-temperature ordered phase from which they extrapolate the low-temperature behavior.  
Interesting tangentially related rigorous work was done in the 1990's on ground states in the strong field regime in \cite{Feld1, Feld2}.

The question of ordering in the RFO(2) model has also appeared in recent literature.  For Bose-Einstein condensates in optical traps, the effect was suggested as a possible response to the presence of certain kinds of experimentally realizable disorder \cite{Wehr-et-al-2, SPL-Nature}.  Here the (pseudo-)spin variables arise from internal structure of the atoms, the tuning of interactions and the structure of the optical lattice.  This type of phenomenon appeared as a possible mechanism for the splitting of Landau level degeneracy in graphene \cite{ALL}.  
Finally,  van Enter and coauthors \cite{vE1, vE2, vE3} came to this question during investigations of whether the spatial Markov property which characterizes Gibbs measures is preserved under various coarse-graining procedures.  

Little mathematically rigorous progress has occurred for RFIO except regarding qualitative ground state behavior and the mean field approximation \cite{vE1, vE2, vE3, Wehr-et-al-1, Wehr-et-al-2}.
Most recently, the author derived results \cite{NC} consistent with the picture presented above for the RFO(2) model with a Kac potential.  The fundamental limitation of that work is that the results are only valid if the range of the interaction potential is taken to diverge as a polynomial in $\epsilon^{-1}$.  This limitation is not entirely technical as we show below.  Even the nearest neighbor RFO(2) model has a fundamental length scale which is set by the behavior of energetics.

\subsection{Main Result}
The main theorem we aim to prove is the following:
Let $\epsilon$ be fixed,
\[
L \sim
\epsilon^{-1} \log^{4} \epsilon
\]
 so that $L=2^{k}$ for some $k \in \N$, and 
\[
Q_{L}(z)=
 z + \{0, 1 \dotsc, L-1\}^d.
\]
A subset $\Lambda$ of
$\Z^d$ will be said to be $L$-measurable if $\Lambda$ is a
union of blocks $Q_{L}(r)$ so that $r \in L\Z^d$.
We
define block average magnetizations \begin{align*} &
M_z =
\frac{1}{|Q_{L}|}\sum_{ x \in Q_L(z)}
\sigma_x,
\end{align*}

\begin{theorem}
\label{T:Main2} Let $d = 3$.
For any $\xi\in (0, 1)$ small enough and any $\epsilon \in (0,\epsilon_0(\xi))$ there is
$\beta_0(\epsilon, \xi)$ so that if $\beta> \beta_0$  the following holds:
Suppose that $\Lambda_N$ is a sequence of $L$-measurable volumes which increase to $\Z^3$ in a van Hove sense.    For almost
every $\omega\in \Omega$, there exists an
$L$-measurable subset $\D_{\omega} \subset \Z^d$ and
an $N_0(\omega)\in \N$ such that:
\begin{enumerate}
\item
$ |\D_{\omega}
\cap \Lambda_N| \leq
Ce^{-c|\log \epsilon|^{2}} |\Lambda_N|$ \text{ if $d=3$}, for all $N \geq N_0(\omega)$.

\item
For each $z\in \Lambda_N$ with $Q_{L}(z) \cap \D_\omega=
\varnothing$
and
\[ \|\langle M_z
\rangle_N^{\omega, e_1} - e_1\|_2 \leq \xi.
\]
Here $C, c$ are universal constants.
\end{enumerate}
\end{theorem}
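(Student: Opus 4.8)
\noindent\emph{Strategy.}
I would prove this by a single coarse--graining step followed by a Peierls (contour) argument for an effective discrete--symmetry model. Writing $\sigma_x=(\cos\theta_x,\sin\theta_x)$, \cref{E:Ham0} becomes $-\HH=\sum_{\langle xy\rangle}[\cos(\theta_x-\theta_y)-1]+\epsilon\sum_x\alpha_x\sin\theta_x$, with $\theta_x\equiv0$ on the boundary. Expanding about a uniform direction $\theta\equiv\psi$ and setting $\theta_x=\psi+\phi_x$, the field couples to $\phi$ \emph{linearly}, through $\epsilon\cos\psi\sum_x\alpha_x\phi_x$, and optimizing $\phi$ against the harmonic energy $\tfrac12(\phi,-\Delta\phi)$ gives a bulk gain $\tfrac{\epsilon^2}{2}\cos^2\psi\,(\alpha,(-\Delta)^{-1}\alpha)$. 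Since $(-\Delta)^{-1}(0,0)<\infty$ for $d=3$, averaging over the disorder produces an effective easy--axis anisotropy of strength $h\sim\epsilon^2$ selecting $\psi\in\{0,\pi\}$ --- i.e.\ $\pm e_1$ --- and symmetric between the two. Thus the random field breaks $O(2)$ down to an Ising--type symmetry, and the residual order is of Peierls type rather than of Fr\"ohlich--Simon--Spencer reflection--positivity type: domain walls between the $+e_1$ and $-e_1$ regions have width $\sim\sqrt{J/h}\sim\epsilon^{-1}$, which is the scale $L$, and cost $\sim h\cdot\epsilon^{-1}\sim\epsilon$ per unit interface area.

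\noindent\emph{Set-up and the bad set.}
Concretely, first fix the Gaussian tilt field $g=\epsilon(-\Delta_{\Lambda_N})^{-1}\alpha$ (essentially the field $g_x$ of the introduction) and pass to variables $\tilde\theta_x=\theta_x-g_x$, in which the Hamiltonian is a nearest--neighbour $O(2)$ interaction in $\tilde\theta$ plus a term even under $\tilde\theta\mapsto-\tilde\theta$ at each site, up to controllable remainders. Partition $\Lambda_N$ into the blocks $Q_L(z)$, $z\in L\Z^3$, and declare $Q_L(z)$ \emph{disorder--bad} --- this defines $\D_\omega$ --- when the environment near $Q_L(z)$ is atypical, e.g.\ $\sup_{x\in Q_L(z)}|g_x|$ or a suitable discrete Sobolev norm of $g$ on $Q_L(z)$ exceeds a threshold $\tau_\epsilon\to0$. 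Since $\var(g_x)\sim\epsilon^2L\to0$ at $L\sim\epsilon^{-1}|\log\epsilon|^4$, Gaussian concentration (Borell--TIS) gives $\pr(Q_L(z)\subset\D_\omega)\le Ce^{-c|\log\epsilon|^2}$; item (1) and the constant $N_0(\omega)$ then follow from the ergodic theorem along the van Hove exhaustion, as $\D_\omega$ is built from finite--range functions of $\omega$. On a non--bad block one labels the coarse spin configuration ``$+$'', ``$-$'', or neutral according to whether $\tilde\theta_x$ lies in a narrow window about $0$, about $\pi$, or neither, for all but a small fraction of $x\in Q_L(z)$.

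\noindent\emph{The key estimate (the main obstacle).}
The heart of the matter is the energy--entropy bound for coarse contours. If $\Gamma$ is a $\ast$--connected collection of neutral coarse--blocks separating a ``$-$'' cluster from the $e_1$ boundary, one builds a ``healed'' configuration $\sigma^\Gamma$ --- the ``$-$'' cluster rotated to ``$+$'' and the wall smoothed on scale $L$ --- by a map with uniformly bounded Jacobian, and shows that for $\omega$ outside a small--density bad set, \emph{every} configuration $\sigma$ compatible with the wall obeys
\[
\HH_{\Lambda_N}^\omega(\sigma|e_1)\ \ge\ \HH_{\Lambda_N}^\omega(\sigma^\Gamma|e_1)+c\,\epsilon\,|\Gamma|\,L^{d-1}-C\,\sqrt{\epsilon}\,|\Gamma|\,L^{(d-1)/2}.
\]
The positive term is a discrete variational bound: any interpolation of $\tilde\theta$ from $0$ to $\pi$ across a slab of width $\le L$ costs at least $\gtrsim\min_w(w^{-1}+\epsilon^2w)\,L^{d-1}\sim\epsilon L^{d-1}$ per coarse face once the anisotropy energy missed in the transition region is included; the correction bounds the largest reduction the random field can force, arising because the wall passes through $\pm e_2$, where $\alpha_x$ couples to $\sin\tilde\theta\approx\pm1$. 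The delicate point --- and, in my view, the main obstacle --- is to make this correction valid \emph{simultaneously} over all (exponentially many) wall shapes $\Gamma$: this requires a chaining estimate for the Gaussian field over the family of coarse contours, and it is exactly here that $d=3$ enters, the surface cost $\epsilon|\Gamma|L^{d-1}$ dominating the field fluctuation only by a vanishing power of $\epsilon$ in $d=3$ and not at all in $d\le2$. Controlling the remainders from the angle expansion and from the change of variables $\theta\mapsto\tilde\theta$ uniformly over $\Gamma$ is a further, more routine, layer of work.

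\noindent\emph{Conclusion.}
Granting the estimate, the rest is standard. The Peierls sum $\sum_{\Gamma\ni z}e^{-c\beta\epsilon|\Gamma|L^{d-1}}$ converges and is $\le\xi/2$ once $\beta>\beta_0(\epsilon,\xi)$ --- the number of coarse contours of size $n$ through a fixed block is at most $C^n$, and $\epsilon L^{d-1}\to\infty$ for $d=3$ --- so each block meeting $\D_\omega^{\,c}$ is in the ``$+$'' phase with $\langle\cdot\rangle_N^{\omega,e_1}$--probability $\ge1-\xi/2$, uniformly for $N\ge N_0(\omega)$. On that event a non--bad block has $|g_x|\le\tau_\epsilon$ throughout and $\tilde\theta_x$ in an $o_\epsilon(1)$ window off $0$ for all but an $o_\epsilon(1)$ fraction of its sites, so $M_z$ is within $\xi/2$ of $e_1$; combining the two contributions to $\langle M_z\rangle_N^{\omega,e_1}$ yields $\|\langle M_z\rangle_N^{\omega,e_1}-e_1\|_2\le\xi$, which is item (2). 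The ``for a.e.\ $\omega$, all $N\ge N_0(\omega)$'' form of (1) and (2) follows by Borel--Cantelli, using that the bad events localize at scale $L$ and that the Peierls bounds, being uniform in the volume, pass to the van Hove limit.
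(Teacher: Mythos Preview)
Your overall scheme --- coarse-grain at scale $L\sim\epsilon^{-1}$, identify an effective $\Z_2$ anisotropy, run a Peierls argument --- matches the paper. But there are two genuine gaps.

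First, the change of variables. The global field $g=\epsilon(-\Delta_{\Lambda_N})^{-1}\alpha$ has $\var(g_x)\sim\epsilon^2\cdot\diam(\Lambda_N)$ in $d=3$, not $\epsilon^2 L$, so your bad-box probability bound is wrong as stated; the paper localizes $g$ to boxes of side $\ell$ or $L$ (or adds a mass $\lambda\sim L^{-2}\log^8 L$) precisely to keep $\|g\|_\infty$ small. More importantly, the linear shift $\tilde\theta=\theta-g$ does not produce an effective Hamiltonian even in $\tilde\theta$ with controllable remainders: near $\theta=\pi$ the correct local shift has the opposite sign, and the odd pieces left over are of the same order $\epsilon^2$ as the anisotropy you want. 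The paper's key device (\cref{L:BLayer}) is the \emph{nonlinear} shift $\phi=\theta-\cos(\theta)\,g$, which respects the phase-exchanging symmetry $\theta\mapsto\pi-\theta$ exactly and transforms $-\HH$ into $\sum_e[\cos(\nabla_e\phi)-1]+\tfrac14\sum_x m_x\cos^2\phi_x$ with $m_x=\sum_{y\sim x}[\nabla g]^2$. This clean, manifestly even potential is what the rest of the argument runs on.

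Second, the Peierls estimate. You propose to control the random-field correction to the wall cost by chaining over contour shapes, RFIM-style, and flag this as the main obstacle; the paper sidesteps it entirely. Once the randomness is absorbed via \cref{L:BLayer}, the Peierls cost $\gtrsim\xi^2\epsilon^2|\log\epsilon|^{-24}|\Gamma|$ of a contour (\cref{L:EComp}, \cref{L:Contours}) is \emph{deterministic} on the event that $\Gamma$ is ``clean'', a purely local condition on $(\alpha_x)$; randomness enters only through the sparse set of dirty regions (\cref{P:Dirt4}, \cref{L:BadSetBound}). This is also why the paper's method extends to $d=2$, contrary to your Imry--Ma heuristic: the random field couples to $\sigma\cdot e_2$, which is invariant under the reflection exchanging $\pm e_1$, so there is no random-field competition at the level of the effective Ising variable. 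The price the paper pays is a substantial two-scale ($\ell\ll L$) surgery at contour boundaries (\cref{S:Collar}) and an elliptic analysis of maximizers of $\KK$ (\cref{L:MaxPrince,,L:Relax}), none of which your one-line ``discrete variational bound'' addresses.
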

Looking into the proof, we have a bound on the transition temperature of $\beta(\epsilon)\ls \epsilon^{-2}Poly(|\log \epsilon|)$ (we get this sort of bound also in the two dimensional case).  This is inline with the predictions of mean field theory, see for example \cite{C}.  On the other hand, the analysis in \cite{MP} suggests a random field strength independent transition temperature in the two dimensional case.  Whether this prediction is correct and what might take its place if it is not correct is unclear at present.  
\noindent
\textbf{Extensions.}
We have already mentioned that \cref{T:Main2} may be extended to include $d=2$ and also other (constant) boundary conditions.  We emphasize in the latter case that the derived conclusions will be the same as in \cref{T:Main2}:  if the constant boundary condition $u$ satisfies $u\cdot e_1>0$, then deep inside the volume $M_z$ will still be very close to $e_1$ .  Another direction in which the theorem can be extended is to incorporate spins $\sigma_x \in \mathbb S^n$.  In this case we take the disorder $\alpha_x$to be $n-1$ dimensional i.i.d. standard Gaussian vectors supported on  some hyperplane through the origin.  Ordering then occurs perpendicular to this hyperplane.
Finally, Gaussian disorder is taken only for convenience and \cref{T:Main2} can be extended to incorporate any choice of disorder for which the estimates of  \cref{L:RandBasic} (1)--(4) hold (Statement (5) of that lemma follows from Statements (1) and (3)).

\subsection{Some Intuition and an Outline of the Paper}
\label{S:Intuit}
In this subsection we give the heuristics and basic method of proof which lead to our result. The discussion will be given here for both $d\in\{2, 3\}$.  Here $\delta_{2, d}$ is the Kronecker $\delta$; it is $1$ if $d=2$ and $0$ otherwise. This explanation is a bit impressionistic to avoid some technical details.

First, let us indicate why one should expect ordering to occur in the horizontal direction (if there is to be ordering at all). 
We restrict attention to boxes  $Q_{\ell} \subset \Lambda_N$ of sidelength $\ell \leq  \epsilon^{-1}|\log \epsilon|^{-\frac{\delta_{2, d}}{2}}$ for numerous reasons which we hope will become apparent as the discussion proceeds.  

Let $\sigma_x$ be given in polar coordinates by an angle $\theta_x$.  We perform a "spin wave" analysis supposing that in $Q_{\ell}$ $\theta_x = \psi+ \hat{\theta}_x$ with $\hat{\theta}_x$ small and $\psi$ fixed.   Then we expand $-\HH_{Q_\ell}$ (with free boundary conditions) in $\hat{\theta}_x$ variables.  Keeping only terms up to second order in $\hat{\theta}_x$ and $\epsilon$ we find
\[
\sup_{\stackrel{(\theta_x)_{x \in Q_{\ell}}}{\theta_x \approx \psi}} -\HH_{Q_\ell}(\theta) = -\frac{\epsilon^2}{2}\cos^2(\psi)\sum_{x \in Q_{\ell}} \hat{\alpha}_x \Delta^{-1} \cdot \hat{\alpha}_x + \underbrace{O(\epsilon|\sum_{z \in Q_{\ell}} \alpha_z|)}_{\textrm{I}}.
\]
where $\Delta$ is the Neumann Laplacian for $Q_{\ell}$ and $\hat{\alpha}_x= \alpha_x - |Q_{\ell}|^{-1} \sum_{z\in Q_{\ell}} \alpha_z$.
The directions of presumed ordering are obtained by optimizing the first term in $\psi$ and ignoring Term \textrm{I}.
An important additional observation is that the optimal choice for the deviation variables $\hat{\theta}_x$ is $\hat{\theta}_x=  \cos(\psi) g^N_x$ where 
\[
g^N_x:=-\epsilon \Delta^{-1} \hat{\alpha}_x.
\]

In evaluating the validity of this computation we run into some constraints on the box size $\ell$.
Typically,
\begin{equation}
\label{E:22}
\sum_{x \in Q_{\ell}} \hat{\alpha}_x \Delta^{-1} \cdot \hat{\alpha}_x \sim \epsilon^2  \ell^d\log^{\delta_{2, d}} \ell.
\end{equation}
So, for $d\geq 3$, the central limit theorem implies that Term \textrm{I} will be of lower order (with high probability) if $\ell\gg \epsilon^{-\frac 2d}$.  
For $d=2$ there is crucially an extra factor of $\log \ell$ in \cref{E:22}.   This allows us, for appropriate choice of $\gamma < \ffrac 12 $, to suppress Term \textrm{I} for $\ell \gg \epsilon^{-1}|\log \epsilon|^{-\frac 12-\gamma}$ with probability exponentially small in $-|\log \epsilon|^{1-2\gamma}$.   Together these considerations give a lower bound on $\ell$ for $d \geq 2$.

On the other hand, in low dimension the behavior of the field $g^N_x$ provides an \textit{upper bound} on $\ell$.  This field has a typical order of magnitude in dimensions $d=2, 3, 4$ of $\ell, \sqrt{\ell}, \sqrt{\log \ell}$ respectively.  Thus the calculation cannot be taken too seriously for large boxes as in this case the maximizer is inconsistent with starting assumption that $\hat{\theta}_x$ is (uniformly) small.  In two and three dimensions, we arrive at the following constraints:
\begin{align}
& \epsilon^{-1}|\log \epsilon|^{-\frac 12-\gamma}\ll \ell \ll  \epsilon^{-1}|\log \epsilon|^{-\frac 12} \quad \text{if $d=2$},\\
& \epsilon^{-\frac 23}\ll \ell \ll \epsilon^{-2}|\log \epsilon|^{-\frac 12} \quad \text{ if $d=3$.}
\end{align}
where the $|\log \epsilon|^{-\frac 12}$ factor on the right hand bound was inserted to account for possible fluctuations of $\|g^N\|_{\infty}$.  This indicates the marginal nature of the two dimensional model.

From a technical perspective, the spin wave calculation seems to depend on uniform control of $\hat{\theta}_x$.  Unfortunately, the only control we have, stated in  \cref{L:FBE}, concerns the Dirichlet energy of spin configurations in $Q_{\ell}$,  $\EE_{Q_\ell}(\sigma):= \sum_{\langle x, y\rangle \subset Q_{\ell}} [\sigma_x - \sigma_y]^2$.  It is energetically favorable for
 \begin{equation}
 \label{E:DE}
 \EE_{Q_\ell}(\sigma) \ls  \epsilon^2 |\log \epsilon|^{\delta_{2,d}} \ell^d
 \end{equation}
 but no better.
Indeed, from the spin wave heuristic our best guess on the maximizer of $-\HH_{Q_\ell}$ is $g_x^N$ which has Dirichlet energy on this order.

Because we can only control the oscillations of $\sigma$ in $H^1$, we needed a different computational device which gives the same conclusions as the spin wave analysis.  This is presented in detail in \cref{L:BLayer}. The physical idea which sits in the background is that low energy spin configurations in the RFO(2) model consist of two superposed modes.   In angular variables, with $\sigma_x$ replaced by the angle $\theta_x$, $\theta_x$ consists of a slowly varying configuration $\phi_x$, whose Dirichlet energy is negligibly small, and a quickly varying mode which is proportional to $g^N_x$.  A convenient way to distinguish between these two modes is to define $\phi_x$ by
\begin{equation}
\label{E:COVintro}
\phi_x = \theta_x-\cos(\theta_x)g^N_x.
\end{equation}
If we regard this as a change of variable $\theta_x \mapsto \phi_x$ the Hamiltonian $-\HH_{Q_\ell}$ transforms into a new energy functional
\[
\KK_{Q_\ell}(\phi_x)= \underbrace{\sum_{\langle xy\rangle} \cos(\phi_x- \phi_y)-1}_{\textrm{II}} + \frac 14 \sum_{x} m_x \cos^2(\phi_x)
\]
with
\[
m_x= \sum_{y: |x-y|=1} [g^N_y-g^N_x]^2.
\]
There are errors made in this transformation but we can  control these errors using $\EE_{Q_\ell}(\sigma)$ and the behavior of the two fields $g^N_x$ and $\alpha_x$ in $Q_\ell$ so long as $\ell$ is neither too small or too big as indicated above.  Note that the potential $m_x \sim \epsilon^2 |\log \epsilon|^{\delta_{2, d}}$ typically. Thus if $\phi_x$ only varies at a much longer length scale than  $g^N_x$ we may pretend that $m_x$ has been replaced by this constant.  Because Term \textrm{II} punishes configurations for which $\phi_x$ varies quickly, the conclusions of the spin wave computation are still valid.  

Using the above analysis, we next setup a coarse-grained Peierls argument using the scale $\ell$ as the microscopic scale and show how to extract energetic cost from the occurrence of contours.  The technical problem in implementing this idea is that the spin space is continuous and there is no microscopic surface tension.  This is analogous to the problem encountered in constructing a Peierls estimate for a "soft" Ising model in which continuous scalar spins sit in a double well potential.  Limitations inherent in the bound \cref{E:DE} make the present analysis more challenging.  To solve this problem, we take inspiration from previous work on Kac models, see  \cite{Pres-Book} and also \cite{NC}.  This method requires us to use two scales;  the scale $\ell$ and a second scale $L\gg\ell$ with contours defined relative to the second scale $L$

Before describing contours let us fix the scales $\ell, L$.  A bit later we will see that it is important that $\ell \ll \epsilon^{-1}| \log \epsilon|^{-\ffrac {\delta_{2, d}}2}$.  We take
\begin{align*}
&\ell\sim \epsilon^{-1} |\log \epsilon|^{-4} \quad \text{ and } \quad L\sim \epsilon^{-1} |\log \epsilon|^{4} \text{ if $d=3$},\text{ if $d=3$},\\
&\ell\sim \epsilon^{-1} |\log \epsilon|^{-\ffrac 12-\ffrac 1{128}} \quad \text{ and } \quad L\sim \epsilon^{-1} |\log \epsilon|^{-\ffrac 12+\ffrac 1{128}} \text{ if $d=2$}.
\end{align*}

To define contours, we first classify cubes $Q_{\ell}$ as either good or bad relative to a spin configuration $\sigma$.  Fixing $\sigma$, call a box $Q_\ell$ of side-length $\ell$ \textit{bad} for $\sigma$ if either  the Dirichlet energy $\EE_{Q_\ell}(\sigma)$
is substantially larger than $4\epsilon^2  \ell^d \log^{\delta_{2, d}} \ell$
or if the average of spins in $Q_\ell$, $\sigma(Q_\ell):=\ell^{-d} \sum_{x \in Q_{\ell}} \sigma_x$, is far from $\pm e_1$; see \cref{S:Contours} for further details.
A semi-precise definition of a contour $\Gamma$ is as follows: contours are maximally connected clusters of boxes $Q_L$ of side length $L$  so that within distance $2L$ of  $Q_L$ there is a cube $Q_\ell$ of side-length $\ell$ which is bad for $\sigma$. 

The reason one introduces the second scale $L$ is that it allows us to do surgery at the boundary of a contour to artificially create surface tension.  Given a spin configuration $\sigma$ and an associated contour $\Gamma$ let $\delta(\Gamma)$ denote the neighborhood of radius $L$ around $\Gamma$.
 The idea is to compare $\sigma$ with a new spin configuration $\tilde{\sigma}$ which agrees with either $\sigma$ or the reflection of $\sigma$ across the $e_2$ axis on each component of $\Lambda \backslash \delta(\Gamma)$. We construct $\tilde \sigma$ to have an angle uniformly close to either $0$ or $\pi$ on the whole of $\Gamma$.  The skeleton of this argument is given in \cref{S:Groundstates,,S:Collar,,S:Peierls}, proofs of lemmas from these sections appear in \cref{SS:6,,SS:7,,S:PP}.

The key, and most labor intensive, part of the construction requires us to do surgery on the spin configuration $\sigma$ at the boundary of the contour, i.e. on $\delta(\Gamma) \backslash \Gamma$.
We will assume the randomness is "uniformly typical" in the discussion of this procedure.  Allowances must be made in the actual analysis and precise conditions we require of the randomness on a contour $\Gamma$ are detailed in \cref{S:DisType}. Not all contours satisfy these conditions, but we state lemmas which quantify the size and sparsity of regions where our requirements fail.  The probabilistic bounds required to prove these lemmas and the proofs of the lemmas themselves are postponed until \cref{S:Randomness}.
  
By definition, $\sigma$ is good in $\delta(\Gamma)\backslash \Gamma$: for each cube $Q_\ell$ within $L$ of $\delta(\Gamma)\backslash \Gamma$
\[
\EE_{Q_\ell}(\sigma)\leq 4 \epsilon^2 \ell^d \log^{\delta_{2, d}} \ell,
\]
and $\ell^{-d} \sum_{x\in Q_\ell} \sigma_x$ is close to $\pm e_1$.  Combining these to facts, one can show that the sign $\pm1$ is constant over connected components of $\delta(\Gamma)\backslash \Gamma$.  

Fix a component $R$ of $\delta(\Gamma)\backslash \Gamma$ and suppose $\ell^{-d} \sum_{x\in Q_\ell} \sigma_x$ is close to $e_1$ throughout $R$ (the other case is similar).  The reason our analysis is restricted to low dimension is that if $d\in \{2, 3\}$, the \textit{ a priori } bounds on $\EE_{Q_\ell}(\sigma)$ allow us to find a small neighborhood $R' \supset R$ so that for vertices $x$ in the graph boundary of $R'$, $\theta_x \in (-\ffrac \pi8, \ffrac \pi8)$.  This point is addressed in \cref{S:AuxBound}.

Next, if we start with such a boundary condition $\sigma_0$ for $R'$, we can look for maximizers of $-\HH_{R'}(\sigma|\sigma_0)$.  Actually, it is better to use \Cref{L:BLayer} and look for maximizers after a change of variables. 
The idea is the same as in \cref{E:COVintro} but we use the Dirichlet Laplacian $\Delta_{R'}^D$ in $R'$ instead of the Neumann Laplacian.
Set
\[
g^D_x=\epsilon[-\Delta^D_{R'}+ \lambda]^{-1} \alpha_x
\]
for all $x$ in $R'$.  The mass $\lambda= L^{-2} \log ^8 L$ if $d=3$.  It is a technical convenience used to keep $g^D_x$ small in absolute value throughout $R'$ (this is a point where one has to make modifications in two dimensions).
We make the change of variables
\begin{equation}
\label{E:COV2}
\theta_x \mapsto \theta_x- \cos(\theta_x) g^D_{R,x} =: \phi_x
\end{equation}
Because of the Dirichlet boundary conditions the restriction of $\sigma$ to ${R'}^c$, $\sigma|_{{R'}^c}$, is invariant under this transformation.
On the other hand, $-\HH_{R'}(\sigma|\sigma_0)$ transforms into 
\[
\KK_{R'}(\phi_x|\phi_{{R'}^c})= \sum_{\langle xy\rangle\cap R' \neq \varnothing} \cos(\phi_x- \phi_y)-1 + \frac 14 \sum_{x} m_x \cos^2(\phi_x)
\]
where
\[
m_x= \sum_{y: |x-y|=1} [g^D_y-g^D_x]^2.
\]
This change of variables potentially costs us energetically, but we will more than offset any loss by what we gain in comparing $\sigma$ to $\tilde{\sigma}$ (which is still to be constructed).

One can show that if the randomness is well enough behaved in $R'$, there is a unique maximizer of $\KK_{R'}$ and that this maximizer also has $\phi_x \in (-\pi/8, \pi/8)$ throughout $R'$.  Because of this and the fact that typically $m_x \sim \epsilon^2 |\log \epsilon|^{\delta_{d, 2}}$, one should morally regard the maximizer as behaving like a solution to the discrete elliptic PDE with mass 
\begin{equation}
\label{E:EPDE}
-\Delta \phi_x+ \epsilon^2 |\log \epsilon|^{\delta_{2, d}} \phi_x=0.
\end{equation}
This leads us to the separation of length scales at $\epsilon^{-1}|\log \epsilon|^{-\frac{\delta_{2, d}}{2}}$ which we chose above.  A solution to \cref{E:EPDE} is very close to $0$ deep inside $R'$ only if the inner radius of $R'$ is  much larger than $\epsilon^{-1}|\log \epsilon|^{-\frac{\delta_{2, d}}{2}}$.  This issue is addressed in \cref{S:AuxRelax}.  On the other hand, in order to distinguish between the pure phases $\pm e_1$, we need $\ell$ to be smaller than this scale since $\epsilon^{-1}|\log \epsilon|^{-\frac{\delta_{2, d}}{2}}$ will be the typical thickness of interfaces between pure phases.

After replacing $\sigma$ by the maximizer of $\KK_{R'}$ inside $R'$ for each component $R$ of $\delta(\Gamma) \backslash \Gamma$, we can invert the change of variables \cref{E:COV2} to produce a new spin configuration $\sigma^1$, which disagrees with $\sigma$ only on $\delta(\Gamma) \backslash \Gamma$.
Because $\sigma^1$ is very close to one of $\pm e_1$ deep inside each thickened boundary component of $\Gamma$, at negligible further cost we can force it to be \textit{exactly} $\pm e_1$ in the middle of each of the boundary components, calling the result $\sigma^2$.  

Next we modify $\sigma^2$ into a new configuration $\sigma^3$ on $\Lambda_N\backslash \Gamma$.  We keep $\sigma^2$ fixed on the exterior component $\mathscr E$ of $\Lambda_N\backslash \Gamma$ and note the value $\sgn(\sigma_2\cdot e_1)$ on $\mathscr E \cap \delta(\Gamma) \backslash \Gamma$. We make the following transformation on the interior components $\II_j$.  If on $\II_j \cap \delta(\Gamma) \backslash \Gamma$  $\sgn(\sigma_2\cdot e_1)$ agrees with its value on the boundary of the exterior component, we keep $\sigma_2$ fixed on the entire interior component.  Otherwise, we reflect $\sigma_2$ across the $e_2$ axis on all of $\II_j$.  We call the result, which is defined only in $\Lambda_N\backslash \Gamma$, $\sigma^3$. 

Let $\bar{\delta}(\Gamma)$ be the neighborhood of radius $\ffrac L2$ of $\Gamma$.  To obtain $\tilde{\sigma}$ from $\sigma^3$, we construct (almost-)maximizers for $-\HH_{\bar{\delta}(\Gamma)}(\cdot)$ (here we use free boundary conditions).  This gives two configurations $\sigma^{\pm}_{\Gamma}$ uniformly close  to $\pm e_1$ throughout $\bar{\delta}(\Gamma)$.  The configuration  $\tilde{\sigma}$ is chosen to agree with $\sigma^3$ on $\bar{\delta}(\Gamma)^c$ and to agree with the appropriate maximizer of $-\HH_{\bar{\delta}(\Gamma)}(\cdot)$ inside $\bar{\delta}\Gamma$.  After accounting for all errors and also what we gain in the comparison between $\sigma$ and $\tilde{\sigma}$ we are able to show
\[
-\HH_{\Lambda_N}(\tilde{\sigma}|e_1)+\HH_{\Lambda_N}(\sigma| e_1) \geq C \epsilon^2|\log \epsilon|^{-24}|\Gamma|
\]
for some $C>0$.
With this result in hand we can construct a Peierls argument and prove \cref{T:Main2}.

The remainder of this paper is organized as follows.  In \cref{S:Prelims}  we collect basic notation used throughout the rest of the article.  \Cref{S:Contours} defines what are contours in our model in a careful way, while \cref{S:DisType} addresses what sort of requirements we need from the randomness $\alpha_x$ in order for a subregion of $\Z^3$ to be well behaved.  Regions satisfying these requirements are called clean regions and in \cref{P:Dirt4,,L:BadSetBound} we quantify the likelihood and sparsity of dirty regions.  Proofs related to \cref{S:DisType} are postponed to \cref{S:Randomness}. In \cref{S:4} we state the Peierls estimate, \cref{L:Contours}, and prove \cref{T:Main2} on its basis.  Assuming that a given contour is clean, \cref{S:Groundstates} outlines the construction of almost maximizers to $-\HH_{\bar{\delta}(\Gamma)}(\cdot)$ while \cref{S:Collar} outlines the construction of $\sigma_1$ above.   In \cref{S:Peierls} we construct $\tilde{\sigma}$ and then prove the Peierls estimate.  Proofs of major estimates required in \cref{S:Groundstates,,S:Collar,,S:Peierls} are postponed until \cref{SS:6,,SS:7,,S:PP} respectively.  \Cref{S:Energetics}, while technical, is key to the whole paper as it contains \cref{L:BLayer}.  Finally \cref{S:Aux} contains various technical lemmas needed elsewhere, for example in the analysis of maximizers of  $\KK_{R'}(\phi_x|\phi_{{R'}^c})$.

\section{Preliminaries}
\label{S:Prelims}
Throughout $\sigma_x \in \mathbb S^1$.  We will use the notation $e_1, e_2$ for the usual orthonormal basis of $\R^2$ in which $\mathbb S^1$ sits.  We will also use $\{\hat e_1, \dotsc, \hat e_d\}$ to denote the standard basis in $\Z^d$ and $\R^d$, the $\hat{\cdot}$ being used to distinguish unit vectors which live in the spatial lattice from those which live in spin space.
In general $\|\cdot\|_p$ denotes the $\ell^p$ norm in $\R^d$. If $R \subset \Z^d$ and $f: R \rightarrow \R^k$, we let
\[
\|f\|_{p, R}= \left[\sum_{x\in R} \|f\|_2^p\right]^{1/p}.
\]

We write $ x\sim y$ $x, y \in \Z^d$ if $\|x- y\|_1=1$.
For any finite subset $R \subset \Z^d$,
let
\begin{align*}
\pI R& =\{x \in R: \dist_1(x, R^c) \leq 1\}\\
 \pO R&=\{x \in R^c: \dist_1(x, R) \leq 1\}
\end{align*}
with $\dist_p(A, B)$ the Hausdorff distance between to sets in $\ell^p$. $|R|$ will denote (depending on the context) the cardinality (resp. volume) of a finite set $R\subset \Z^d$ (resp. bounded domain in $\R^d$).

The proof requires that we work on multiple scales.  This is conveniently implemented by considering $L' \in (2^{k})_{k \in \N}$.
We will use the notation
\[
Q_{L'}(x_0)=\{z \in Z^d: z= x_0+ v \text{ for some $v \in \{0, \dotsc, L'-1\}^d$}\}
\]
to denote an arbitrary box of side-length $L'$ indexed (in the "lower left corner") by some fixed $x_0 \in \Z^d$.  Also, let $B_{L'}(x_0)= x_0+ \{-L'/2, \dotsc, L'/2-1\}^d$ denote the cube with side-length $L'$ centered (roughly) at $x_0$.
We shorthand these by suppressing reference to $x_0$, writing $Q_{L'}, B_{L'}$ instead when no confusion should arise.

Let $\alpha_x, x \in \Z^d$ be a field of i.i.d. Gaussian random variables with mean $0$ and variance $1$. We treat in full only the case of Gaussian disorder for now.  Our present proof should work for any random variables with subGaussian tails by extending the estimates of \cref{S:Randomness}.  To what extent we need subGaussian tails, and not just exponential moments in a neighborhood of $0$ is not clear.

We let $\mathcal S_{\Lambda}=[\mathbb S^1]^{\Lambda}$ for $\Lambda \subset \Z^d$ denote the configuration space for a spin system restricted to $\Lambda$, with $\mathcal S=\mathcal S_{\Z^d}$. If $\Lambda_1 \subset \Lambda_2$ and $\sigma \in \mathcal S_{\Lambda_2}$ then $\sigma|_{\Lambda_1} \in \mathcal S_{\Lambda_1}$ denotes the natural restriction to $\sigma$ to $\Lambda_1$.
Let
\[
\mathcal E_R(\sigma)= \sum_{\stackrel{\langle xy \rangle}{x, y \in R}} (\sigma_x- \sigma_y)^2
\]

Given $\sigma_0 \in \mathcal S$ and $\sigma\in \mathcal S_R$ we may extend $\sigma$ to an element of $\mathcal S$ via $\sigma|_{R^c}= \sigma_0|_{R^c}$.  In this case we let
\begin{equation}
\label{L:Ham}
-\HH_{R}(\sigma|\sigma_0)= -\frac 12 \mathcal E_R(\sigma) - \frac 12\sum_{\stackrel{x \in \pI R, y \in \pO R}{x \sim y}} (\sigma_x-\sigma_y)^2+ \epsilon \sum_{x \in R} \alpha_x e_2 \cdot \sigma_x.
\end{equation}
We sometimes consider regions with free boundary conditions, in which case the Hamiltonian is denoted by $-\HH_{R}(\sigma)$ and reads the same as  \cref{L:Ham} except that the second term on the RHS has been dropped.

The edges $e=\langle x, y \rangle$ come with a natural orientation: we will say that $e$ is positively oriented if the vector $(y-x) \cdot \hat{e}_i \geq 0$  for all $i\in \{1, \dotsc, d\}$.  Given a vector valued function $f: \Z^d \rightarrow \R^k$ we can associate a discrete vector field ( a function on edges) via
\[
\nabla_e f= f_{y}-f_x
\]
whenever $e= \langle x, y \rangle$ is positively oriented.  
In a region $R$ and given $f :R \rightarrow \R^k$, we introduce the two discrete Laplacian operators by $\Delta^D_R, \Delta^N_R$ (Dirchlet and Neumann) by
\begin{align*}
&-\Delta^D_R\cdot f_x = \sum_{\stackrel{y \sim x}{y \in R \cup \pO R}} f_x-f_y \text{ where $f$ is extended to be $0$ on $\pO R$},\\  
&-\Delta^N_R\cdot f_x= \sum_{\stackrel{y \sim x}{x, y \in R}}  f_x-f_y.
\end{align*}
As a general notation, if $f: R \rightarrow \R^k$ is any (vector valued) function,
\begin{align*}
&f(R):= \frac 1{|R|}\sum_{x\in R} f_x .\\
&\hat{f}_x= f_x-f(R).
\end{align*}
For any $\lambda \geq 0$ set
\begin{align*}
&\gld_x= g^{\lambda, D}_{x, R} = \epsilon (-\Delta^D_R+ \lambda)^{-1}\cdot \alpha_x,\\
&\gln_x= g^{\lambda, N}_{x, R} = \epsilon (-\Delta^N_R+ \lambda)^{-1} \cdot\hat{\alpha}_x.
\end{align*}

\noindent
\textbf{Calculational Convention:}
In our estimates, we will use the constants $C, c$ to denote generic universal constants whose values change from line to line.  
At various points in the work we use $O(\cdot)$ notation:  An expression $g \in O(f)$ for some other expression $f$ if there exists a \textit{universal} constant $C>0$ so that
\[
|g| \leq C f.
\]
We will also say
\[
g \ls f
\]
if there is a universal constant $C>0$ so that $g \leq Cf$.

\section{Course-Graining, Contours, and Disorder Types} \label{S:Contours}
We introduce two length scales $\ell, L \in  (2^{k})_{k \in \N}$ where
\begin{equation}
\label{e:scales}
\begin{split}
\log_2(\ell)&=\lfloor\log_2\left(\epsilon^{-1} |\log \epsilon|^{-4}\right)\rfloor\\
\log_2(L)&= \lceil \log_2\left(\epsilon^{-1} |\log\epsilon|^{4}\right)\rceil.
\end{split}
\end{equation}
The scales $\ell$ and  $L$ introduce coarse-grainings of $\R^d$ and, by taking intersections, of $\Z^d$.  For any $L_0 \in \N$, we shall
say that a block $Q_{L_0}(r)$ is measurable relative to the scale $L_0\in \N$ if $r \in L_0 \Z^d$.  Denote this standard collection of blocks by $\QQ_{L_0}$ and set $\NN:=\ffrac{L_0}{16}\{-32,\dotsc 32,\}^d$.   For a fixed $L_0$-measurable box $Q_{L_0}(r)$ and any $\eta \in  \NN$ let $Q_{\eta}:= Q_{L_0}(r + \eta)$.  Set $\QQ_{L_0}'= \{ Q_{\eta}: Q\in \QQ_{L_0} \text{ and } \eta \in N\}$.
Finally, $\QQ^s_{L_0}=\{Q : Q-(\ffrac{L_0}{2}, \dotsc, \ffrac{L_0}{2}) \in \QQ_{L_0}\}.$

For any set $A \subset \Z^d$ we can associate a subset $\hat A
\subset \R^d$ as the union of closed boxes of side length $1$ centered
at the elements of $A$.  We shall say that $A$ is connected if
$\hat{A}$ is.  Note this is NOT the same as connectivity in terms of the graph structure on $\Z^d$.  We will refer to connectivity in the latter sense as \textit{graph connectivity}.
$\hat{A}^c$ decomposes into one infinite connected component
$\Ext(\hat{A})$ and a number of finite connected components
$(\Int_i(\hat{A}))_{i=1}^{m}$.  The union of finite components is denoted by $\Int(\hat{A}):= \cup_{i=1}^m
\Int_i(\hat{A})$. Let us denote the $L_0$-enlargement of a set $\hat A
\subset \R^d$ by
\begin{equation}
\label{E:Sets}
\delta_{L_0}(\hat{A})=\cup_{\{Q_{L_0}(r) \:
L_0\text{-measurable}\: :\: \dist(\hat Q_{L_0}(r), \hat{A})<
\ell\}}\hat Q_{L_0}(r)
\end{equation}
where $\dist(\hat Q_{L_0}(r), \hat{A})$ is the
Hausdorff Distance between sets in $\R^d$ in the $\ell_{\infty}$
metric. We also introduce $\delta_{L_0}(A), \Int_i(A), etc.$ by
taking intersection of each defined set with $\Z^d$. Most commonly, we will use this notation with $L_0=L$ in which case we suppress the subscript, writing $\delta(A)$
The closed hull of a
set $A \subset \Z^d$ is defined by $c(A):=\delta(A)\cup \Int(A)$ (with $L_0=L$).

Given a spin
configuration $\sigma \in \mathcal S$ we introduce the following
phase variables:
\begin{itemize}

\item
Let
\[\psi^{(0)}_z(\sigma)=
\begin{cases}
1 \quad \text{ if $\EE_{Q_\ell(r')}(\sigma) \leq  \epsilon^{2}|\log \epsilon| |Q_{\ell}|$}\\
\quad \:\:\text{ for all $r'$ such that $\dist(z,r') \leq 5\ell$.} \\
 0\quad \text{ otherwise}.
\end{cases}
\]
We expect (see \cref{S:Energetics}) that, typically,  low energy configurations have
\[
\EE_{Q_\ell(r')}(\sigma) \ls \epsilon^2 |Q_{\ell}|;
\]
the extra $|\log \epsilon|$ is taken here for convenience and is not optimal.  In two dimensions, we must be more careful.

\item
Let
\[
\psi^{(1), \xi}_z(\sigma)=
\begin{cases} 1 \quad \text{ if $\sigma(Q_\ell(r'))\cdot e_1 \in [1-\xi, 1]$ whenever $\dist(z,r') \leq 5\ell$},\\
-1\quad \text{ if $\sigma(Q_\ell(r'))\cdot e_1 \in [-1, -1+\xi]$ whenever $\dist(z,r') \leq 5\ell$},\\
0 \quad \text{ otherwise.}
\end{cases}
\]

\item
Let
\[
\psi_z=\psi^{(1)}_z\psi^{(0)}_z.
\]
\end{itemize}
We will often suppress sub/superscripts and the argument $\sigma$ from these functions, writing $\psi^{(0)}, \psi^{(1)}, \psi$.

\noindent
These phase variables are extended to the larger length scale $L$ as follows.  For coarse-graining purposes we consider only boxes $Q_{L}(r)$ which are $L$-measurable.  If $z \in Q_L(r)$ with $Q_L(r)$ an $L$-measurable box, we set
\[
\Psi^{\xi}_{z}(\sigma)=
\begin{cases}
1 \quad
\text{ if $\psi^{(1)}_y\psi^{(0)}_y=1$ for all  $y \in Q_{L}(r')$, where $Q_{L}(r') \in \QQ_{L}$}\\
\quad \quad  \text{and $\|r-r'\|_{\infty} \leq 2L$},\\
-1\quad \text{ if $\psi^{(1)}_y\psi^{(0)}_y=-1$ for all  $y \in Q_L({r'})$, where $Q_{L}(r') \in \QQ_{L}$}\\
 \quad \quad  \text{and if $\|r-r'\|_{\infty} \leq 2L$},\\
 0 \quad \text{ otherwise}.
\end{cases}
\]
By definition, $\Psi$ is an $L$-measurable piece-wise constant function on $\Lambda$.

Given $\sigma \in \mathcal S$, it is a standard argument that the connected components of
\[
R^+= \{z \in \Z^d:
\Psi^{\xi}_z=1\}\]
 and
\[R^-= \{z\in \Z^d : \Psi^{\xi}_z=-1\}
\]
are
separated by connected subsets of
\[
R^0= \{z\in \Z^d:
\Psi^{\xi}_z=0\}.
\]

\begin{definition}
\label{D:Contour}
An \textit{abstract contour} $\Gamma$ is defined to be the
pair $(\Sp(\Gamma), \psi_{\Gamma})$ where $\Sp(\Gamma) \subset
\Z^d$ is connected and $L$-measurable and $\psi_{\Gamma}(z)$ is
a $\{-1, 0, 1\}$-valued function on
$\Sp(\Gamma)$ which gives the values of the phase specification on
$\Gamma$. In the definition surrounding \eqref{E:Sets}, whenever the set $A$ in
question happens to be $\Sp(\Gamma)$ we will write $\delta(\Gamma),
c(\Gamma),$ etc.   Let $N^L_{\Gamma}=
|\delta(\Gamma)|/L^d$, $N^\ell_{\Gamma}=|\delta(\Gamma)|/\ell^d$ so that $N^L_{\Gamma}, N^{\ell}_\Gamma \in \mathbb
N$.
\end{definition}
\begin{remark}[Other Choices of Boundary Conditions]
Contours as we have defined here are not adequate for dealing with boundary conditions which are not $\pm e_1$ because spins \textit{should} be close to a chosen boundary condition in boundary boxes.  One way to deal with this to redefine what it means to be good or bad according to how close a box is to the boundary of $\Lambda_N$.  In any case, the analysis of \cref{S:Collar} must be modified for boundary contours since we are not free to change the boundary conditions.
\end{remark}

Let us introduce the notation
$\delta_{ext}(\Gamma)=\delta(\Gamma)\cap \Ext(\Gamma)$ and
$\delta^{i}_{in}(\Gamma)= \delta(\Gamma) \cap \Int_i(\Gamma)$.  These sets are
evidently disjoint.  By definition of $\Gamma$, each of these sets is
connected.

Given an abstract contour $\Gamma$ we shall say that $\Gamma$ is a \textit{concrete contour}, or just a contour, for $\sigma$ if
$\Sp(\Gamma)$ \text{ is a maximal connected subset of }
$R^0(\sigma)\text{ and } \psi_z(\sigma) \equiv
\psi_{\Gamma}(z)\text{ on } \Sp(\Gamma)$.  It is possible that an abstract contour is never be realized concretely.

We shall denote by
\begin{multline} \mathbb X(\Gamma)= \{ \sigma\in
\Omega: \Gamma \text{ is a contour for $\sigma$}\}\\ := \{\sigma:
\Sp(\Gamma) \text{ is a maximal connected subset of }
R^0(\sigma)\text{ and } \psi_z(\sigma) \equiv
\theta_{\Gamma}(z)\text{ on } \Sp(\Gamma)\}. \end{multline}
Put another way, we shall
say that $\Gamma$ is a contour for $\sigma$ if $\sigma \in \mathbb
X(\Gamma)$.

From our definitions, specifying that $\Gamma$ is a contour of $\sigma$
lets us recover the values of $\Psi_z(\sigma), \psi_z(\sigma)$ on
$\delta(\Gamma)$ just from the function $\psi_{\Gamma}(z)$ (see \cite{Pres-Book} for details).
This convenient
property allows us analyze systems of \textit{compatible} contours without worrying about
the microscopic spin configuration far away from the contour. Two
contours $\Gamma_1, \Gamma_2$ are said to be compatible if
$\delta(\Gamma_1)\cap \Sp(\Gamma_2)= \varnothing$ and
$\psi_{\Gamma_1}= \psi_{\Gamma_2}$ on the domain of intersection
of $\delta(\Gamma_1), \delta(\Gamma_2)$.

So far we have considered contours at the level of spin
configurations.  We would like to be able to show that under certain
finite volume Gibbs measures, a contour costs $e^{-q|\Gamma|}$, the constant $q$ being made large by appropriate choice
of $\beta, \xi$ and $\epsilon$.  In general, such an estimate will \textit{never} be true
uniformly in the presence of randomness. However, it turns out that large fluctuations
of the fields, to be quantified in the next subsection, \cref{S:DisType}, and in \cref{S:Randomness}, don't effect
the extraction of energy from \textit{most} contours.

Thus we turn to the interplay between spin configurations and
randomness. We introduce (more) phase variables associated to the
randomness $\omega\in \Omega$.  For this we need some more definitions.

\subsection{Disorder}
\label{S:DisType}
Let $L_0$ be a fixed scale of order $\epsilon^{-1}$. Later we specialize to three scales: $\ffrac{\ell}2, \ell$ and $L$ as in \eqref{e:scales}.  
We will consider the behavior of the fields $\gll_{Q_{L_0}}$ for \text{both} Dirichlet and Neumann boundary conditions. The superscript distinguishing the boundary conditions is suppressed.

First we need a coarse estimate on the behavior of the fields $\gll_{Q}$.  For a given cube $Q$ we briefly introduce the potential
\[
m_{Q, x}:=\sum_{\stackrel{e\cap Q \neq \varnothing}{e\ni x }} [\nabla_e \gld_{Q}]^2
\]
which plays an important role in \cref{S:Groundstates,,S:Collar}.  We take $\lambda \in [0, 1)$ fixed. 
Let $A, B>0$ be fixed constants.  Consider the event
\[
\mathcal A_Q(r)=\\
\left\{\omega: r^{-d}\sum_{\|y-x\|_\infty \leq r} m_{Q, y} \geq A \epsilon^2  \text{ for all $x\in Q$ s.t. }{\dist_{\infty}(x, \pO Q) \geq \frac{L_0}{16} } \right\}.
\]
A box $Q\in \QQ_{L_0}'$, will be called \textit{nice} if for all $\lambda < L_0^{-1}$:
\begin{align}
&\label{E:r1}\text{The event $\AA_{Q_{\eta}}(r)$ occurs with $r= \log^{90} L_0$.}\\
&\label{E:r2} \|\gll_{Q_\eta}\|_{\infty} \leq 
\epsilon L_0^{-\frac 12} |\log \epsilon|^{30}\\
&\label{e:r3}
\|\nabla \gll_{Q_\eta}\|_{\infty} \leq 
 \epsilon |\log \epsilon|^{30}\\
& \label{E:r3}
A \epsilon^2 \leq \|\nabla \gll_{Q_\eta}\|_2^2/|Q_\eta| \leq  B \epsilon^2 \\
&\label{E:r4}
\|\alpha\|_{\infty, Q_\eta} \leq |\log \epsilon|^{30}.\\
&
|\alpha_{Q_{\eta}}|\sqrt{|Q_{\eta}|} \leq 
|\log \epsilon|^{30}.
\end{align}
A box $Q_{L_0}(r)\in \QQ_{L_0}$, will be called \textit{good} if $Q_{\eta}$ is nice for all $\eta \in\NN$.    A box $Q \in \QQ_{L_0}'$ is it called good if it intersects a good box in $ \QQ_{L_0}$.  We introduce the function on $\QQ_{L_0}' $ by
\[
\Xi_{L_0}(Q)=\begin{cases} 1 \text{ if $Q$ is good},\\
0 \text{ otherwise.}
\end{cases}
\]

Appropriate choices of $A, B> 0$ in \eqref{E:r1} and \eqref{E:r3} will be made below.  We make appropriate choices below.  The factor $|\log \epsilon|^{30}$ is so that boxes which violate the bounds are exceedingly rare if $\epsilon$ is small.  We have not attempted to optimize this part of the proof.  All of these requirements must be modified in two dimensions.  Also, the bound \eqref{E:r2} may be, in principle, tightened as the dimension increases.

Let $Y$ be an $L_0$-measurable connected and bounded set.  Suppose that $F, G$ are real valued functions on the collection of cubes in $Y$ which are $L_0$-measurable and define 
\[
[F;G]_{Y}=\sum_{\stackrel{Q \subset Y}{Q\in \QQ_{L_0}}} F(Q)G(Q).
\]

We will say that $Y$ is \textit{good} if 
\begin{equation}
\label{E:Dense}
[1-\Xi_{L_0}, 1]_Y = \sum_{\stackrel{Q \subset Y}{Q\in \QQ_{L_0}}} 1-\Xi_{L_0}(Q) \leq |\log \epsilon|^{-55} N^{L_0}_Y
\end{equation}
where $N^{L_0}_Y= \frac{|Y|}{|Q_{L_0}|}$.

For notational convenience we introduce the functions on $L_0$ measurable blocks
\begin{align*}
&F_{\lambda}(Q_{L_0})=\max_{\eta \in \NN} \frac{\|\gll_{Q_{\eta}}\|_2^2}{|Q_{\eta}|}, \quad \quad \quad F^{\nabla}_{\lambda}(Q_{L_0})=\max_{\eta \in \NN} \frac{\|\nabla\gll_{Q_{\eta}}\|_2^2}{|Q_{\eta}|},\\
&R(Q_{L_0})= \max_{\eta} \|\alpha\|_{\infty, Q_{\eta}}.
\end{align*}
An $L_0$-measurable connected and bounded set $Y$ will be called \textit{regular} if the following estimates hold for $\lambda\in \{0, L_0^{-2} \log L_0^8\}$:
\begin{align}
&\label{E:rr1} \left[F_\lambda^{\nabla};  \mathbf 1_{\{F_\lambda^{\nabla} \geq  \epsilon^2 |\log \epsilon|\}}\right]_{Y} 
\leq
 \epsilon^{\frac{9}4}  N_Y^{L_0}  \\
&\label{E:rr2}\left[F_\lambda;  \mathbf 1_{\{F_\lambda\geq  \epsilon^2[\lambda^{-
\frac 12}\wedge L_0]  |\log \epsilon|\}}\right]_{Y} 
\leq  |\log \epsilon|^2 N_Y^{L_0} \\
&\label{E:rr4}[R^2; 1_{R> |\log \epsilon|^{50}}]_{Y}  \leq |\log \epsilon|^{-75} N_Y^{L_0} \\
&\label{E:rr5} \sum_{\stackrel{Q_{L_0} \subset {Y} }{L_0-\text{measurable}}}\sum_{\eta} |\alpha(Q_{\eta})|\leq 
 L_0^{-\frac 32} \log^{50} L_0N_Y^{L_0}
\end{align}

\begin{definition}[Clean Regions]
An $L$-measurable connected and bounded set $Y$ will be called \textit{clean} if it is good and if $\delta(Y)$ is regular at the scales $L_0\in \{\ffrac \ell2, \ell, L\}$.  Otherwise, call the region dirty.
\end{definition}
\begin{remark}
Not all conditions are used at all scales.  Its just faster to formulate our requirements in a uniform way.
\end{remark}
\begin{remark}[Two Dimensions]
These bounds are not sufficient if $d=2$.  In this case, we need to control $\|\gll\|_{\infty}, \|\nabla \gll\|_{\infty}$ more carefully and also add requirements regarding the size of 
\[
\sum _{Q \subset Y} F_\lambda^{\nabla}(Q), \quad \quad \quad \sum _{Q \subset Y} F_\lambda(Q).
\] 
\end{remark}

The following proposition is fairly straightforward.  The parts of it we prove are postponed until \cref{S:Randomness}.
\begin{proposition}
\label{P:Dirt4}
Let $Y$ be connected $L$-measurable and bounded.   There exist $A, B>$ so that the following holds.  We can find $\epsilon_0 \in (0, 1)$ and a constant $c>0$ so that if $\epsilon \in (0, \epsilon_0)$
\begin{equation}
\mathbb P\left(Y \text{ is dirty}\right) \ls  e^{-  c |\log \epsilon|^2 N^L_Y}.
\end{equation}
\end{proposition}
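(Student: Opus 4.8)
The plan is to run a union bound over the finitely many clauses defining a clean region and to control each clause by a finite‑range large‑deviation estimate whose only genuine input is a package of single‑box Gaussian bounds (this package is exactly what \cref{S:Randomness} provides). By definition $\{Y\text{ is dirty}\}$ is contained in the union, over the three scales $L_0\in\{\ffrac\ell2,\ell,L\}$, of $\{Y\text{ fails }\eqref{E:Dense}\text{ at scale }L_0\}$ and of $\{\delta(Y)\text{ fails one of }\eqref{E:rr1}\text{--}\eqref{E:rr5}\text{ at scale }L_0\}$ — boundedly many events. It therefore suffices to bound each by $\ls e^{-c|\log\epsilon|^2 N^L_Y}$, using at the end that $N^{L_0}_Y\geq N^L_Y$ and $N^{L_0}_{\delta(Y)}\geq N^L_Y$ whenever $L_0\leq L$, and that at the two small scales one even has $N^{L_0}_{\delta(Y)}\gtrsim (L/L_0)^d N^L_Y$, which leaves more room.

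Two structural facts make this work. First, since $g^{\lambda,D}_{x,R}$ and $g^{\lambda,N}_{x,R}$ are functions of $(\alpha_y)_{y\in R}$ only, and every box $Q_\eta$ with $\eta\in\NN$ entering ``$Q_{L_0}(r)$ good'' lies in $r+\{-2L_0,\dots,3L_0\}^d$, the indicator $\Xi_{L_0}$ at $Q_{L_0}(r)$ and each per‑box functional ($F_\lambda(Q),F^\nabla_\lambda(Q),R(Q),\sum_\eta|\alpha(Q_\eta)|$, the event $\AA_{Q_\eta}(r)$, etc.) is measurable with respect to $(\alpha_x)$ in a box of side $O(L_0)$ about $r$; partitioning $L_0\Z^d$ into $K=K(d)=O(1)$ residue classes of pairwise separation $>C(d)L_0$ decouples all these functionals into i.i.d.\ families within each class. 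Second, with $A$ chosen small and $B$ large, every \emph{elementary} bad event — a box $Q_\eta$ failing one clause of ``nice'', or a box $Q$ with $F^\nabla_\lambda(Q)\geq\epsilon^2|\log\epsilon|$, $F_\lambda(Q)\geq\epsilon^2[\lambda^{-1/2}\wedge L_0]|\log\epsilon|$, $R(Q)>|\log\epsilon|^{50}$, or $\sum_\eta|\alpha(Q_\eta)|$ above threshold — has probability at most $e^{-c|\log\epsilon|^{\kappa}}$ for a large universal $\kappa$. This is the job of \cref{S:Randomness}: the $N(0,1)$ tail for the $\alpha$‑clauses; the Gaussian maximal inequality for $\|g^\lambda_{Q_\eta}\|_\infty,\|\nabla g^\lambda_{Q_\eta}\|_\infty$ (for both $D$ and $N$), which needs two‑sided bounds on the pointwise variances $\var(g^{\lambda}_{Q_\eta,x})$ and $\var(\nabla_e g^{\lambda}_{Q_\eta})$, i.e.\ $d=3$ estimates for the discrete Green's functions of $(-\Delta^{D}_{Q_\eta}+\lambda)^{-1}$ and $(-\Delta^{N}_{Q_\eta}+\lambda)^{-1}$ with $\lambda=0$ or $\lambda\asymp L_0^{-2}\log^8 L_0$; and the Hanson--Wright inequality for the quadratic forms $\|\nabla g^\lambda_{Q_\eta}\|_2^2=\epsilon^2\,\alpha^\top M_\eta\alpha$ and for the local averages of $m_{Q_\eta,\cdot}$, using $\tr M_\eta\asymp|Q_\eta|$, $\tr M_\eta^2\ls L_0^2|Q_\eta|$, $\|M_\eta\|\ls L_0^2$, so that an order‑$|\log\epsilon|$ relative deviation costs $e^{-c|\log\epsilon|\,L_0^{d-2}}=e^{-c|\log\epsilon|\,\epsilon^{-1}|\log\epsilon|^{O(1)}}$, far below $e^{-c|\log\epsilon|^\kappa}$. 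The margins $|\log\epsilon|^{30},|\log\epsilon|^{50}$ are inserted precisely to push each threshold deep into the bulk of its distribution.

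Granting this package, the large‑deviation step is standard. In the ``not good'' case, more than $|\log\epsilon|^{-55}N^{L_0}_Y$ of the boxes $Q\subset Y$ in $\QQ_{L_0}$ are bad, so some residue class contains more than $|\log\epsilon|^{-55}N^{L_0}_Y/K$ bad boxes, which within a class is a binomial tail:
\[
\mathbb P\bigl(Y\text{ fails }\eqref{E:Dense}\text{ at }L_0\bigr)\ \le\ K\binom{N^{L_0}_Y}{a}p^{a}\ \le\ K\bigl(epK|\log\epsilon|^{55}\bigr)^{a}\ \le\ e^{-c|\log\epsilon|^{\kappa-55}N^{L_0}_Y},\qquad a=\bigl\lfloor|\log\epsilon|^{-55}N^{L_0}_Y/K\bigr\rfloor ,
\]
which is $\ls e^{-c|\log\epsilon|^2 N^L_Y}$ once $\kappa$ exceeds $55$ by at least $2$ (and $N^{L_0}_Y\ge N^L_Y$). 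For each of \eqref{E:rr1}--\eqref{E:rr5} the failure event reads $\sum_{Q\subset\delta(Y)}Z_Q>\theta N^{L_0}_{\delta(Y)}$ with $Z_Q\ge 0$, $Z_Q$ vanishing off an event $E_Q$ of probability $\le p=e^{-c|\log\epsilon|^\kappa}$ and having a light conditional tail on $E_Q$; truncating $Z_Q$ at a level that is a fixed power of $|\log\epsilon|$, the truncated contribution is again a binomial tail for $\#\{Q:E_Q\}$ (which must then exceed a power‑of‑$|\log\epsilon|$ multiple of $N^{L_0}_{\delta(Y)}$), while the over‑truncation part is killed by a direct exponential‑moment bound, its moments being super‑exponentially small by Hanson--Wright/Gaussian tails. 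Choosing $\kappa$ larger than all exponents $55,75,\dots$ occurring in the thresholds, every exponent produced is $\ge c|\log\epsilon|^2 N^{L_0}_{\delta(Y)}\ge c|\log\epsilon|^2 N^L_Y$. Summing the boundedly many contributions gives the proposition, with $A,B$ fixed along the way so that \eqref{E:r1} and \eqref{E:r3} hold with the stated probability.

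I expect the decoupling and the binomial/exponential‑moment estimates above to be routine bookkeeping; the real work is the single‑box package of the second paragraph. Concretely, the crux is the $d=3$ spectral/Green's‑function analysis of $(-\Delta^{D}_{Q_\eta}+\lambda)^{-1}$ and $(-\Delta^{N}_{Q_\eta}+\lambda)^{-1}$ on cubes of side $\sim\epsilon^{-1}$ with a weak mass: one needs the pointwise variances of $g^\lambda$ and $\nabla g^\lambda$, together with $\tr M_\eta,\tr M_\eta^2,\|M_\eta\|$, carrying the correct powers of $\epsilon$ and $|\log\epsilon|$, and one must verify that every threshold in the ``nice''/``regular'' clauses genuinely lies in the bulk — equivalently, that $\kappa$ can be taken as large as the exponents $55,75,\dots$ demand. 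This is precisely the content of \cref{S:Randomness}; everything else here is assembly.
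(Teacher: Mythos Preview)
Your proposal is correct and follows essentially the same route as the paper. The paper's proof is terser --- it works out only the clause \eqref{E:rr1} in detail and declares the rest ``similar or simpler'' --- but the ingredients are identical: single-box tail bounds from \cref{L:RandBasic} (what you package as Gaussian maximal inequalities plus Hanson--Wright, which in the Gaussian case the paper obtains by Fourier diagonalization and momentum-shell decomposition), independence across disjoint boxes, and a standard entropy/large-deviation count for the sum. One cosmetic difference: where you decouple by partitioning $L_0\Z^d$ into $O(1)$ residue classes, the paper simply fixes the shift $\eta\in\NN$ and notes that for fixed $\eta$ the boxes $Q(r+\eta)$, $r\in L_0\Z^d$, are already pairwise disjoint, hence the fields $g^\lambda_{Q(r+\eta)}$ are independent; this is a slightly cleaner way to get the same independence. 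A minor caveat on your uniform claim that every elementary event has probability $\le e^{-c|\log\epsilon|^\kappa}$ with large $\kappa$: for the indicator in \eqref{E:rr2} the per-box bound from \cref{L:RandBasic}(2) in $d=3$ is only $e^{-c|\log\epsilon|}$, but the threshold $|\log\epsilon|^2$ on the right of \eqref{E:rr2} is so far above the typical scale $\epsilon^2[\lambda^{-1/2}\wedge L_0]$ that your truncation-plus-moment step still gives an exponent well beyond $|\log\epsilon|^2 N^L_Y$; this does not affect the argument.
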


Let
\begin{align*}
\AA&=\{Y \subset \Z^d: Y \text{ is connected $L$-measurable and is not clean}\},\\
\mathbb D&= \cup_{Y \in \AA} c(Y), \quad \quad \quad \text{and } \mathbb D_{\Lambda}= \mathbb D \cap \Lambda,
\end{align*}
where $\Lambda$ is a connected, bounded $L$-measurable subset of $\Z^d$ and 

For any $x \in \Z^3$, let $Q(x)$ denote the $L$-measurable box containing $x$ and define
\[
A(x)=\left\{\omega:  Q(x) \text{ is in $c(Y)$ for some $Y$ which is not clean}\right\}.
\]
We relegate the proof to \cref{S:Randomness}.
\begin{lemma}
\label{L:BadSetBound}
Choose $A, B$ so that the estimates in  \cref{P:Dirt4} hold.  There exists $\epsilon_0>0$ and $c>0$ so that if $\epsilon \in (0, \epsilon_0)$
\begin{equation}
\text{Var}(\mathbb D_\Lambda) \ls e^{-c\log^2 \epsilon}|\Lambda|
\end{equation}
Further, for any sequence $\Lambda_N \uparrow \Z^d$, $\omega$ a.s. there exists $N_0(\omega)>0$ so that
\[
 \frac{|\mathbb D_{\Lambda_N}|}{|\Lambda_N|}\ls e^{-c|\log \epsilon|^2} 
\] 
for all $N \geq N_0(\omega)$.
\end{lemma}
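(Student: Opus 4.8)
The plan is to reduce both assertions to the single pointwise estimate
\[
\mathbb P\bigl(x\in\mathbb D\bigr)=\mathbb P\bigl(A(x)\bigr)\ls e^{-c|\log\epsilon|^2},\qquad x\in\Z^3,
\]
valid with a universal $c>0$ once $\epsilon<\epsilon_0$. Granting this, the first bound — which I read as a bound on $\E\bigl[|\mathbb D_\Lambda|\bigr]$, since $|\mathbb D_\Lambda|=\sum_{x\in\Lambda}\mathbf 1_{\{x\in\mathbb D\}}$ — is immediate from additivity of expectation; if the literal $\mathrm{Var}(|\mathbb D_\Lambda|)$ is intended, the same union bound applied to non-clean $Y$ whose closed hull meets both $Q(x)$ and $Q(y)$ produces a covariance decay $\ls e^{-c|\log\epsilon|^2(1+\|x-y\|_\infty/L)}$ (the contribution of well-separated non-clean sets cancels by independence of $\alpha$ on disjoint regions), and summing over $y$ costs only a factor $\ls(L/|\log\epsilon|^2)^3=\epsilon^{-3}|\log\epsilon|^{O(1)}$, absorbed after shrinking $c$. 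For the almost sure statement I would note that $\mathbf 1_{\{x\in\mathbb D(\alpha)\}}=\mathbf 1_{\{0\in\mathbb D(\tau_x\alpha)\}}$, where $\tau_x$ is the lattice shift, because cleanness of an $L$-measurable set $Y$ is decided by a translation-invariant rule applied to $\alpha$ on a neighbourhood of $Y$ and $c(\cdot)$ commutes with translations; since the $\Z^3$-shift action on the i.i.d.\ field $(\alpha_x)$ is ergodic, the pointwise ergodic theorem along the van Hove sequence $\Lambda_N$ gives $|\mathbb D_{\Lambda_N}|/|\Lambda_N|\to\mathbb P(0\in\mathbb D)\ls e^{-c|\log\epsilon|^2}$ almost surely, and any $N_0(\omega)$ past which this average stays below twice its limit does the job. (Alternatively, Chebyshev with the variance bound and Borel--Cantelli handles the cubes $B_{2^k}$ directly, which transfers to $\Lambda_N$ by the van Hove property.)

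To prove the pointwise estimate, observe that $x\in\mathbb D$ precisely when $Q(x)\subset c(Y)$ for some connected $L$-measurable $Y$ that fails to be clean, so a union bound together with \cref{P:Dirt4} gives
\[
\mathbb P\bigl(A(x)\bigr)\leq\sum_{n\geq1}\#\bigl\{Y\ \text{connected, }L\text{-measurable},\ N^L_Y=n,\ Q(x)\subset c(Y)\bigr\}\cdot Ce^{-c|\log\epsilon|^2 n}.
\]
The geometric point is that $c(Y)=\delta(Y)\cup\Int(Y)$ containing $Q(x)$ forces $Y$ to sit near $x$: if $Q(x)\subset\delta(Y)$ then $\dist(Q(x),Y)\ls L$, while if $Q(x)\subset\Int(Y)$ then $\hat Y$ separates $x$ from infinity, so $\dist(Q(x),Y)\leq\diam(\hat Y)\ls nL$ because a connected union of $n$ cubes of side $L$ has diameter $\ls nL$. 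Hence, in the renormalised lattice whose sites are the $L$-blocks, $Y$ is a lattice animal of $n$ sites one of which lies within $\ls n$ of the site of $Q(x)$; there are $\ls n^3$ choices of that anchor site and at most $K^n$ connected $n$-block sets through a fixed site ($K$ a universal constant), so the cardinality is $\ls n^3K^n$ and $\mathbb P(A(x))\ls\sum_{n\geq1}n^3K^ne^{-c|\log\epsilon|^2 n}$. Since $|\log\epsilon|^2\to\infty$, I would fix $\epsilon_0$ so small that $c|\log\epsilon|^2\geq 2\log K+1$ for $\epsilon<\epsilon_0$; then every summand is at most $n^3e^{-\frac c2|\log\epsilon|^2 n}$, the series is bounded by its $n=1$ term up to a universal constant, and the estimate follows with $c$ replaced by $c/2$.

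The main obstacle is exactly that middle, combinatorial-geometric step: correctly handling the enclosure case $Q(x)\in\Int(Y)$ so that the lattice-animal count is applicable, and then verifying that the entropy factor $n^3K^n$ is genuinely swallowed by the stretched-exponential weight $e^{-c|\log\epsilon|^2 n}$ of \cref{P:Dirt4}, uniformly over small $\epsilon$. The remaining ingredients — additivity of expectation, the routine covariance estimate needed only if the $\mathrm{Var}$ form is wanted, and the $\Z^3$ ergodic theorem along van Hove sequences — are standard.
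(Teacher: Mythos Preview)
Your argument is correct and follows essentially the same route as the paper: a union bound over connected $L$-measurable $Y$ with $Q(x)\subset c(Y)$, lattice-animal entropy versus the $e^{-c|\log\epsilon|^2 N^L_Y}$ weight from \cref{P:Dirt4} for the pointwise estimate, then covariance decay from independence of the $\alpha$-field on disjoint regions, and finally Chebyshev plus Borel--Cantelli along a dyadic subsequence (the paper takes exactly this last route rather than the ergodic theorem, though your ergodic alternative is valid once you note the relevant shift action is by $L\Z^d$). Two small differences in execution: the paper handles the enclosure case $Q(x)\subset\Int(Y)$ via the discrete isoperimetric inequality, giving an entropy prefactor $r^{d/(d-1)}$ rather than your $n^d$ from the diameter bound---either is swallowed by the exponential; and the paper's covariance argument is phrased slightly more generally as a bound on the probability that dirty $Y_1\ni Q(x)$ and $Y_2\ni Q(y)$ have overlapping $2L$-neighbourhoods of their closed hulls, not just that a single $Y$ meets both, but your parenthetical about well-separated contributions cancelling shows you have the right mechanism.
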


\section{Statement of the Peierls Estimate and Proof of \cref{T:Main2}}
\label{S:4}
Given a contour $\Gamma$, $\Gamma^*$ will denote the $(\Omega, \BB,
\mathbb P)$ event
\begin{equation} \label{E:1}
\Gamma^*=\{\omega: \delta(\Gamma) \text{ is clean and $c(\Gamma)$ is not
strictly contained in $\D$}\}
 \end{equation}

 If $\Gamma^*$ occurs, $\Gamma$ will be called a $*$-clean contour for short.
Given a spin configuration $(\sigma_{\Lambda_N},
\sigma_{\Lambda_N^c})$, let \[ \mathbb X(\Gamma_1^*, \dotsc,
\Gamma_m^*, \Gamma_{m+1},\dotsc, \Gamma_{m+n})= \cap_{i}  \mathbb
X(\Gamma_i) \] where $ (\Gamma_1, \dotsc, \Gamma_m)$ satisfy the
event defined in \eqref{E:1} and $( \Gamma_{m+1},\dotsc,
\Gamma_{m+n})$ do not.  Otherwise we define the right hand side to be
the empty set. As a variation of standard definitions, let us say
that \[ (\Gamma_1^*, \dotsc, \Gamma_m^*, \Gamma_{m+1},\dotsc,
\Gamma_{m+n}, \omega) \] are $*$-compatible if \[ \mathbb
X(\Gamma_1^*, \dotsc, \Gamma_m^*, \Gamma_{m+1},\dotsc, \Gamma_{m+n})
\neq \varnothing. \]

\begin{lemma}[The Peierls Estimate] \label{L:Contours}  There exist $\delta \in (0, 1)$ so that if $0< \xi<
\delta$, we can find $\epsilon_0=\epsilon_0(\xi)$ so that the following holds.  If $\epsilon< \epsilon_0$ there exists
$\beta_{\epsilon}$ so that for $\beta> \beta_{\epsilon}$ we have the estimate:

\noindent Let $N$ be fixed and consider the event $\mathbb
X(\Gamma_1^*, \dotsc, \Gamma_m^*, \Gamma_{m+1},\dotsc, \Gamma_{m+n})$.  Then
\[
\mu_{\Lambda_N}^{e_1}(\mathbb X(\Gamma_1^*, \dotsc, \Gamma_m^*,
\Gamma_{m+1},\dotsc, \Gamma_{m+n})) \leq e^{-\beta q \sum_{i=1}^m
|\Gamma_i|}
\]
where
\begin{equation}
\label{E:ContBon} q = C \xi^2 \epsilon^{2} |\log\epsilon|^{- 24}
 \end{equation}
 for some constant $C>0$.
\end{lemma}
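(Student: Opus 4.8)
The plan is a coarse-grained Peierls argument in the spirit of \cref{S:Intuit}, with microscopic scale $\ell$ and contours at scale $L$, driven by a deterministic surgery map. If $\mathbb X(\Gamma_1^*,\dots,\Gamma_m^*,\Gamma_{m+1},\dots,\Gamma_{m+n})=\varnothing$ there is nothing to prove, so assume it is nonempty; then the $\Gamma_i$ are pairwise compatible and, by definition of the event in \eqref{E:1}, each $\delta(\Gamma_i)$ is clean. Put $\Delta:=\bigcup_{i\le m}\big(\Gamma_i\cup\delta(\Gamma_i)\cup\bar\delta(\Gamma_i)\big)$, a union of shells of width $O(L)$ about the $\Gamma_i$, so $|\Delta|\le C\sum_{i\le m}|\Gamma_i|$ with $C=C(d)$. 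The heart of the proof is a map $\sigma\mapsto\tilde\sigma$, defined on the above event, which: erases the $*$-clean contours $\Gamma_1,\dots,\Gamma_m$; leaves $\sigma$ untouched off $\Lambda_N$; depends only on $b:=\sigma|_{\Delta^c}$; agrees with $b$ off $\Delta$ up to a $\nu$-preserving reflection; has $\tilde\sigma|_\Delta$ an almost-maximizer of $-\HH_{\Lambda_N}(\cdot\,|e_1)$ over the $\Delta$-spins given $\tilde\sigma|_{\Delta^c}$; and satisfies $\HH_{\Lambda_N}(\sigma|e_1)-\HH_{\Lambda_N}(\tilde\sigma|e_1)\ge\hat q\sum_{i\le m}|\Gamma_i|$ with $\hat q\asymp\xi^2\epsilon^2|\log\epsilon|^{-24}$.

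Granting this, write $\mu=\mu_{\Lambda_N}^{e_1}$, $Z=Z_{\Lambda_N}$; then
\[
\mu\big(\mathbb X(\Gamma_1^*,\dots)\big)=\frac1Z\int_{\mathbb X(\Gamma_1^*,\dots)}e^{-\beta\HH_{\Lambda_N}(\sigma|e_1)}\prod_{x\in\Lambda_N}\textd\nu(\sigma_x)\le\frac{e^{-\beta\hat q\sum_{i\le m}|\Gamma_i|}}{Z}\int_{\mathbb X(\Gamma_1^*,\dots)}e^{-\beta\HH_{\Lambda_N}(\tilde\sigma|e_1)}\prod_{x\in\Lambda_N}\textd\nu(\sigma_x).
\]
Because $\tilde\sigma$ depends only on $b$, one integrates out the $\Delta$-spins, performs the measure-preserving change of variables $b\mapsto\tilde\sigma|_{\Delta^c}$ on $\mathcal S_{\Delta^c}$, and compares $e^{-\beta\HH_{\Lambda_N}(\tilde\sigma|e_1)}$ with the integral of $e^{-\beta\HH_{\Lambda_N}(\cdot\,|e_1)}$ over the $\Delta$-spins by an entropy-of-small-fluctuations estimate: since $\tilde\sigma|_\Delta$ is an almost-maximizer given $\tilde\sigma|_{\Delta^c}$, restricting the $\Delta$-spins to within angle $\beta^{-1}$ of $\tilde\sigma|_\Delta$ and bounding the oscillation of $\HH_{\Lambda_N}$ there by the cleanness estimate $\|\alpha\|_{\infty,\Delta}\le|\log\epsilon|^{30}$ gives a factor $e^{C(\log\beta)|\Delta|}$. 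Hence the last integral is $\le e^{C(\log\beta)\sum_{i\le m}|\Gamma_i|}\,Z$, so $\mu(\mathbb X(\Gamma_1^*,\dots))\le e^{-(\beta\hat q-C\log\beta)\sum_{i\le m}|\Gamma_i|}$; choosing $\xi$ below a dimensional threshold $\delta$, then $\epsilon<\epsilon_0(\xi)$, then $\beta>\beta_\epsilon$ with $\beta_\epsilon$ the solution of $C\log\beta=\tfrac12\beta\hat q$ (so $\beta_\epsilon\ls\xi^{-2}\epsilon^{-2}|\log\epsilon|^{25}$), the exponent is $\le-\tfrac12\beta\hat q\sum_{i\le m}|\Gamma_i|$, which is \eqref{E:ContBon} with $q=\tfrac12\hat q$.

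Constructing the map and proving the energy bound is the substance of \cref{S:Groundstates,,S:Collar,,S:Peierls} (estimates deferred to \cref{S:PP}), following the heuristics of \cref{S:Intuit}. For each $i\le m$, using cleanness of $\delta(\Gamma_i)$: (i) by \cref{L:BLayer}, build almost-maximizers of $-\HH_{\bar\delta(\Gamma_i)}(\cdot)$ lying within $\pi/8$ of $0$ or of $\pi$ throughout $\bar\delta(\Gamma_i)$; (ii) on each thickened boundary component $R\subset\delta(\Gamma_i)\setminus\Gamma_i$, use the a priori Dirichlet-energy bound of \cref{L:FBE} — the step where $d\le3$ is essential — to enlarge $R$ to $R'$ with boundary angles in $(-\pi/8,\pi/8)$, make the change of variables \eqref{E:COV2}, replace $\sigma|_{R'}$ by the unique maximizer of $\KK_{R'}$, and invert; (iii) freeze the result to $\pm e_1$ exactly along the seams, reflect whole interior components $\Int_j$ across the $e_2$-axis so the signs are globally consistent, and glue in the $\bar\delta(\Gamma_i)$-maximizers. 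The resulting $\tilde\sigma$ has the required properties: which interior components get reflected is fixed by $\psi_{\Gamma_i}$ (declaring $\Gamma_i$ a contour of $\sigma$ determines $\psi,\Psi$ on $\delta(\Gamma_i)$), so $\tilde\sigma$ is a function of $b$; $\rho\colon s\mapsto(s\cdot e_2)e_2-(s\cdot e_1)e_1$ preserves $\nu$; $\tilde\sigma|_\Delta$ is an almost-maximizer by construction; and the energy bound, with the sub-optimality of seam-freezing and of $R\to R'$ absorbed, is what \cref{S:Peierls} establishes.

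The main difficulty is this construction, and its delicate point is \emph{locality}: the liberated energy and the volume $|\Delta|$ of the region on which $\tilde\sigma$ forgets $\sigma$ must both scale with $\sum_{i\le m}|\Gamma_i|$ and not with $\sum_{i\le m}|c(\Gamma_i)|$. Locality of the energy gain rests on $\rho$ being an exact symmetry of $-\HH$ (the interaction is isometry-invariant and the field term $\epsilon\sum_x\alpha_x\,e_2\cdot\sigma_x$ is invariant under reflection across the $e_2$-axis), so reflecting an entire interior component costs nothing; this is legitimate only because the surgery has first set $\tilde\sigma$ equal to the $\rho$-fixed values $\pm e_1$ along every seam between a reflected and an unreflected region. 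Extracting $\ge\hat q|\Gamma_i|$ from the shell $\delta(\Gamma_i)\cup\bar\delta(\Gamma_i)$ with only $H^1$-control of $\sigma$ there (\cref{L:FBE}, not pointwise control of $\theta_x$) is exactly what forces the two-scale setup and is the role of \cref{L:BLayer}; and cleanness of $\delta(\Gamma_i)$ is what makes the fields $g^{\lambda,D}$, $g^{\lambda,N}$, $\alpha$ well enough behaved for the maximizers to be controllable and for the entropy estimate above to be valid.
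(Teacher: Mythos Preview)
Your proposal captures the right ingredients (the surgery of \cref{S:Groundstates,,S:Collar,,S:Peierls}, the energy gain of \cref{L:EComp}, and an entropy--energy balance), but it departs from the paper's proof in its \emph{organization}, and that departure creates a genuine gap.

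The paper does \emph{not} erase all clean contours with a single global map. It proves
\[
\mu_{\Lambda_N}^{e_1}\bigl(\mathbb X(\Gamma_1^*,\dotsc,\Gamma_m^*,\Gamma_{m+1},\dotsc,\Gamma_{m+n})\bigr)\le \prod_{i=1}^m c_d^{|\Gamma_i|}\,\|W(\Gamma_i;\cdot)\|
\]
by \emph{induction on $m$}: one reorders so that $c(\Gamma_{k+1})$ contains no other clean $\Sp(\Gamma_i)$, applies the DLR equations to condition on $c_N(\Gamma_{k+1})^c$, and bounds the inner expectation by $\|W(\Gamma_{k+1};\cdot)\|$ times the probability of a system with one fewer clean contour (the dirty contours inside $c(\Gamma_{k+1})$ getting replaced by $T_{\Gamma_{k+1}}(\tilde\Gamma_\ell)$). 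The bound $\|W(\Gamma;\cdot)\|\le e^{-q|\Gamma|}$ is \cref{L:Key}, proved from \cref{L:EComp} by comparing with the small ball $F_{\sigma'}$ of radius $\epsilon^3$ about $\textrm S_\Gamma^\pm(\sigma')$; the entropy factor is $C|\log\epsilon|$ per site, not $C\log\beta$.

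Your all-at-once scheme runs into two problems the induction is designed to avoid. First, clean contours may be \emph{nested}: erasing an outer $\Gamma_i$ reflects interior components and thereby turns an inner clean $\Gamma_j$ into $-\Gamma_j$; handling this consistently in one pass is exactly what forces the inductive peeling. Second, your claim that ``$\tilde\sigma$ depends only on $b=\sigma|_{\Delta^c}$'' is inconsistent with the surgery you invoke: the collar construction of \cref{S:Collar} (Modifications~1--4) depends on $\sigma$ \emph{on} the collar $\mathfrak C(\Gamma)\subset\Delta$, and Modification~1 is at most $c_d^{|\Gamma|}$-to-one (\cref{L:>0}(5)). You therefore cannot first integrate out the $\Delta$-spins and then change variables on $\Delta^c$; the paper instead absorbs the $\sigma$-dependence of the surgery into the boundary condition for the ratio $W(\Gamma;\sigma_{\delta_N(\Gamma)^c})$ and carries the factor $c_d^{|\Gamma|}$ explicitly through the induction.
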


This lemma is proved in several steps below: \cref{S:Energetics,,S:Groundstates,,S:Collar,,S:Peierls}.  \cref{T:Main2} is then completed via the
following Peierls contour counting argument.

\begin{proof}[Proof of \cref{T:Main2}]
The proof of \cref{T:Main2} is an application of \cref{L:BadSetBound,,L:Contours}.

Let $A, B>0$ be as in \cref{L:BadSetBound}.  Let $x$ is taken so that $Q_{L}(x) \cap \D = \varnothing$.  Note that this does not depend on $\Lambda_N$.  If $\Lambda_N \uparrow \Z^d$ in the Van Hove sense then eventually $Q_L(x) \subset \Lambda_N$.
Consider the event
\[
\{\sigma: \Psi_{Q_{L}(x)}(\sigma) \neq 1\} \subset \mathcal S_{\Lambda_N}.
\]
By definition of $\psi,
\Psi$ and the $e_1$ boundary condition, there exists a largest
contour $\Gamma$ so that $Q_{L}(x) \subset c(\Gamma)$. Moreover,
since $Q_{L}(x) \cap \D = \varnothing$, $\Gamma$ must be
clean.
Decomposing $\{\Psi_{Q_{L}(x)} \neq 1\}$
into disjoint subsets according to this largest contour we have:
\[
\mu_N^{e_1, \omega}(\Psi_{Q_{L}(x)} \neq 1) \leq \sum_{\Gamma,
\omega\: * \text{-compatible}: \: Q_{L}(x) \subset c(\Gamma)}
\mu_N^{e_1, \omega}(\mathbb X(\Gamma^*)).
\]
By \cref{L:Contours}, \[ \mu_N^{e_1, \omega}(\Psi_{Q_{L}(x)} \neq
1) \leq C \sum_{r \geq 1} r^{d/(d-1)} (2a_0)^{r} e^{-cq r} \leq
C_1e^{-c\frac{q}{2}} \] as long as $q> 2\log(2a_0)$.  The theorem now
follows easily.
\end{proof}

\section{Approximate Ground States in the Bulk of a Clean Contour}
\label{S:Groundstates}
Let $\Gamma$ be fixed and assume $\sigma \in \X(\Gamma)$.  
Let
\[
\begin{array}{lr}
\bar{\delta}(\Gamma)=\delta_{\ffrac L2}(\Gamma) \cap \Lambda_N, & \RR_{\ffrac \ell2 }(\Gamma)=\{Q\in \QQ_{\ffrac \ell2}: Q \cap \bar{\delta}(\Gamma) \neq \varnothing\}.
\end{array}
\]
Given a region $R \subset \Lambda_N$ let $ext$ denote the boundary condition which is set to $e_1$ on $\pO R \cap \Lambda_N^c$ and is free otherwise. 
With this definition set
\[
\textrm{E}_0(Q)=\max_{\sigma\in \mathcal S_Q} -\HH_Q(\sigma|\textrm{ext})
\]
The goal of the present section is the following Lemma:

\begin{lemma}[Bulk Ground States]
\label{L:GSd3}
Suppose that $\Gamma$ is a clean contour.
There exists $\epsilon_0>0$ and a constant $C>0$ so that if $0<\epsilon< \epsilon_0$ the following holds.  On $\bar{\delta}(\Gamma)$, there is a spin configuration $\sigma^{\bar{\delta}(\Gamma)}$ which satisfies
\[
 0 \leq \sum_{Q\in \RR_{\ffrac \ell2 }(\Gamma)}  \textrm{E}_0(Q) + \HH_{\bar{\delta}(\Gamma)}(\sigma^{\bar{\delta}(\Gamma)})  \ls \epsilon^2 |\log \epsilon|^{- 40}|\Gamma| 
\]
\noindent
Further,
\[
1- \sigma^{\bar{\delta}(\Gamma)}_y\cdot e_1\ls \sqrt{ \epsilon} \log ^{30} \epsilon 
\]
 for all $y \in \bar{\delta}(\Gamma)$
and
\[
\|\sigma^{\bar{\delta}(\Gamma)} \cdot e_2\|_{\infty, \pI \bar{\delta}(\Gamma)} \ls 
\epsilon |\log \epsilon|^{35} .
\]
\end{lemma}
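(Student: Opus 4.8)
We establish the lower bound first; it holds for \emph{every} configuration on $\bar{\delta}(\Gamma)$. Since $\ffrac\ell2\mid\ffrac L2$ and $\Lambda_N$ is $L$-measurable, $\bar{\delta}(\Gamma)$ is the disjoint union of the blocks of $\RR_{\ffrac\ell2}(\Gamma)$; breaking the free-boundary Hamiltonian into its restriction to each such block plus the nonnegative cost of the edges joining distinct blocks gives, for any $\sigma$, $-\HH_{\bar{\delta}(\Gamma)}(\sigma)\le\sum_{Q\in\RR_{\ffrac\ell2}(\Gamma)}\bigl(-\HH_Q(\sigma|\textrm{ext})\bigr)\le\sum_{Q\in\RR_{\ffrac\ell2}(\Gamma)}\textrm{E}_0(Q)$, where we use that for a block interior to $\Lambda_N$ the condition $\textrm{ext}$ is free (we assume $\Gamma$ interior to $\Lambda_N$; blocks meeting $\partial\Lambda_N$ require only cosmetic changes, $\textrm{ext}=e_1$ there being compatible with the configuration we build). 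So the content of the lemma is the matching upper bound, which we obtain by building $\sigma^{\bar{\delta}(\Gamma)}$ one $\ffrac\ell2$-block at a time.

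The first ingredient is an \emph{unconditional} upper bound on each $\textrm{E}_0(Q)$. Writing $s_x=\sigma_x\cdot e_2$ and using $|\sigma_x-\sigma_y|^2\ge(s_x-s_y)^2$, we have $-\HH_Q(\sigma|\textrm{ext})\le-\tfrac12\sum_{\langle xy\rangle\subset Q}(s_x-s_y)^2+\epsilon\sum_{x\in Q}\alpha_xs_x$; splitting $s$ and $\alpha$ into their mean over $Q$ and their fluctuation, estimating the fluctuation pairing by Cauchy--Schwarz against the Neumann Laplacian, $\epsilon|\langle\hat\alpha,\hat s\rangle|\le\|\nabla g^{N}_{Q}\|_2\,\bigl(\sum_{\langle xy\rangle\subset Q}(s_x-s_y)^2\bigr)^{1/2}$ with $g^{N}_{Q}=\epsilon(-\Delta^N_Q)^{-1}\hat\alpha$, and optimizing in $\sum_{\langle xy\rangle\subset Q}(s_x-s_y)^2$, we get $\textrm{E}_0(Q)\le\tfrac12\|\nabla g^{N}_{Q}\|_2^2+\epsilon|Q|\,|\alpha(Q)|$ for every $Q$. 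On a block $Q$ that is \emph{good} in the sense of \cref{S:DisType} (cleanliness of $\Gamma$ guarantees \eqref{E:Dense} and the regularity bounds \eqref{E:rr1}--\eqref{E:rr5} on $\bar{\delta}(\Gamma)$, so all but a fraction $\le|\log\epsilon|^{-55}$ of the blocks of $\RR_{\ffrac\ell2}(\Gamma)$ are good) we match this with a near-maximizer: apply \cref{L:BLayer} with the field $g^{N}_{Q}$, under which $-\HH_Q(\cdot|\textrm{ext})$ becomes $\KK_Q(\phi)=\sum_{\langle xy\rangle}(\cos(\phi_x-\phi_y)-1)+\tfrac14\sum_xm_{Q,x}\cos^2\phi_x$ up to an error controlled by $\EE_Q$ and the niceness data of $Q$, and $\KK_Q$ attains its maximum over all configurations exactly at the constants $\phi\equiv0$ and $\phi\equiv\pi$, with value $\tfrac14\sum_xm_{Q,x}=\tfrac12\|\nabla g^{N}_{Q}\|_2^2$ up to negligible boundary terms. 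As $\|g^{N}_{Q}\|_{\infty}\ls\epsilon(\ffrac\ell2)^{-1/2}|\log\epsilon|^{30}<1$ by \eqref{E:r2}, for each $x$ the map $t\mapsto t-\cos(t)\,g^{N}_{Q,x}$ is a diffeomorphism of the circle; let $\tau^{(Q)}$ be the preimage of $\phi\equiv0$, so its angle solves $\theta_x=\cos(\theta_x)\,g^{N}_{Q,x}$, whence $\|\theta^{(Q)}\|_{\infty,Q}\le2\|g^{N}_{Q}\|_{\infty,Q}$ and $\EE_Q(\tau^{(Q)})\ls\|\nabla g^{N}_{Q}\|_2^2\ls\epsilon^2|Q|$ by \eqref{E:r3}. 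Precisely because $\EE_Q(\tau^{(Q)})$ is this small and $\|g^{N}_{Q}\|_{\infty}$, $\|\nabla g^{N}_{Q}\|_{\infty}$, $\|\alpha\|_{\infty,Q}$ are bounded by \eqref{E:r2}, \eqref{e:r3}, \eqref{E:r4}, \cref{L:BLayer} gives $-\HH_Q(\tau^{(Q)}|\textrm{ext})\ge\tfrac12\|\nabla g^{N}_{Q}\|_2^2-\mathrm{err}_Q$ with $\mathrm{err}_Q\ls\epsilon^3|\log\epsilon|^{C}|Q|\ll\epsilon^2|\log\epsilon|^{-40}|Q|$. On the remaining (bad) blocks we set $\sigma^{\bar{\delta}(\Gamma)}\equiv e_1$.

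The blocks tiling $\bar{\delta}(\Gamma)$, the resulting $\sigma^{\bar{\delta}(\Gamma)}$ is well defined; using $-\HH_Q(e_1|\textrm{ext})=0$ on bad blocks,
\[
\sum_{Q}\textrm{E}_0(Q)+\HH_{\bar{\delta}(\Gamma)}\bigl(\sigma^{\bar{\delta}(\Gamma)}\bigr)=\sum_{Q\text{ good}}\bigl(\textrm{E}_0(Q)+\HH_Q(\tau^{(Q)})\bigr)+\sum_{Q\text{ bad}}\textrm{E}_0(Q)+\tfrac12\!\!\sum_{\langle xy\rangle\text{ crossing}}\!\!(\sigma_x-\sigma_y)^2 .
\]
By the previous paragraph the first sum is $\le\sum_{Q\text{ good}}\bigl(\epsilon|Q|\,|\alpha(Q)|+\mathrm{err}_Q\bigr)$, which is $\ll\epsilon^2|\log\epsilon|^{-40}|\Gamma|$ (the $\alpha(Q)$-part via \eqref{E:rr5}, the $\mathrm{err}_Q$-part since $\sum_Q|Q|=|\bar{\delta}(\Gamma)|\ls|\Gamma|$); the second is $\le\tfrac12\sum_{Q\text{ bad}}\|\nabla g^{N}_{Q}\|_2^2+\epsilon\sum_{Q\text{ bad}}|Q|\,|\alpha(Q)|$, again $\ll\epsilon^2|\log\epsilon|^{-40}|\Gamma|$ (split the bad blocks according to whether or not $F^{\nabla}_0(Q)\ge\epsilon^2|\log\epsilon|$ and use \eqref{E:rr1}, the density bound \eqref{E:Dense}, and \eqref{E:rr5}); and on a crossing edge both endpoints are within $\ls\epsilon(\ffrac\ell2)^{-1/2}|\log\epsilon|^{30}$ of $e_1$ (or equal to it), so, there being $\ls|\bar{\delta}(\Gamma)|/(\ffrac\ell2)$ crossing edges, the third sum is $\ls\epsilon^2(\ffrac\ell2)^{-2}|\log\epsilon|^{60}|\Gamma|\ll\epsilon^2|\log\epsilon|^{-40}|\Gamma|$. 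Adding the three estimates gives the first claim. For the pointwise claims: at every site $\sigma^{\bar{\delta}(\Gamma)}$ is either $e_1$ or has angle at most $\ls\epsilon(\ffrac\ell2)^{-1/2}|\log\epsilon|^{30}$ in absolute value (here we use that \eqref{E:r2} bounds $g^{N}_{Q}$ on all of $Q$, including near $\partial Q$), so $1-\sigma^{\bar{\delta}(\Gamma)}_y\cdot e_1\ls\sqrt{\epsilon}\log^{30}\epsilon$ throughout $\bar{\delta}(\Gamma)$ and $|\sigma^{\bar{\delta}(\Gamma)}_y\cdot e_2|\ls\epsilon|\log\epsilon|^{35}$, in particular on $\pI\bar{\delta}(\Gamma)$.

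The heart of the matter is \cref{L:BLayer} itself, but that is proved separately in \cref{S:Energetics}; granting it, the one genuinely delicate point above is that the error it produces on a good block really is $\ll\epsilon^2|\log\epsilon|^{-40}|Q|$ rather than, say, $O(\epsilon^2|Q|)$ --- this rests on feeding it the near-maximizer $\tau^{(Q)}$, whose Dirichlet energy is only $O(\epsilon^2|Q|)$, together with the sup-norm bounds \eqref{E:r2}, \eqref{e:r3}, and it is where the specific scale $\ell\sim\epsilon^{-1}|\log\epsilon|^{-4}$ and the exponents in \eqref{E:r2}--\eqref{E:rr5} enter. (We deliberately never apply \cref{L:BLayer} to the true maximizer of $-\HH_Q(\cdot|\textrm{ext})$, which would force us to control its Dirichlet energy; the unconditional Cauchy--Schwarz bound on $\textrm{E}_0(Q)$ bypasses that.) The remaining work is bookkeeping: routing each way in which a block can fail to be good through the appropriate regularity inequality among \eqref{E:rr1}--\eqref{E:rr5}, and carrying the symmetric alternative $\phi\equiv\pi$ in parallel to produce the companion configuration $\sigma^-_\Gamma$ close to $-e_1$ needed in \cref{S:Peierls}.
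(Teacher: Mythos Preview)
Your argument has a genuine gap, and it traces back to a typo in the paper's condition \eqref{E:r2}: the correct bound is $\|g^{\lambda}_{Q_\eta}\|_\infty \le \epsilon L_0^{+1/2}|\log\epsilon|^{30}$, not $L_0^{-1/2}$. This is confirmed by the variance $\varsigma_2^2\sim L_0$ in \cref{L:RandBasic} for $d=3$, by the remark in \cref{S:Intuit} that $g^N$ has typical size $\sqrt{\ell}$ in three dimensions, and explicitly by the paper's own proof in \cref{SS:6}, which uses $\|g^N_Q\|_\infty<2\epsilon\ell^{1/2}|\log\epsilon|^{30}$. With the correct bound, $\|g^N_Q\|_\infty\sim\sqrt{\epsilon}|\log\epsilon|^{28}$, two of your conclusions fail. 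First, at a crossing edge $\langle x,y\rangle$ between blocks $Q\neq Q'$ the angles $\theta_x\approx g^N_{Q,x}$ and $\theta_y\approx g^N_{Q',y}$ are driven by disjoint pieces of the disorder and are each of size $\sim\sqrt{\epsilon}$, so $(\sigma_x-\sigma_y)^2$ can be of order $\epsilon|\log\epsilon|^{56}$; summing over the $\sim|\Gamma|/\ell$ crossing edges gives a contribution $\sim\epsilon^2|\log\epsilon|^{60}|\Gamma|$, which swamps the target $\epsilon^2|\log\epsilon|^{-40}|\Gamma|$. Second, your configuration has angle $\sim\sqrt{\epsilon}$ everywhere, so $\|\sigma^{\bar\delta(\Gamma)}\cdot e_2\|_{\infty,\pI\bar\delta(\Gamma)}\sim\sqrt{\epsilon}$, violating the claimed bound $\lesssim\epsilon|\log\epsilon|^{35}$.

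The paper's construction repairs both points by damping the angle near each block boundary: on a good block $Q$ it sets $\theta_y=g^N_{Q,y}$ but then replaces the angle by $\tau_y\theta_y$ with $\tau_{Q,x}=\dist(x,\pO Q)/\sqrt{\ell}\wedge 1$. Because $\|g^N_Q\|_\infty\lesssim\epsilon\sqrt{\ell}|\log\epsilon|^{30}\lesssim\ell^{-1/2}$, this forces $|\tau_y\theta_y|\lesssim\ell^{-1}$ on $\pI Q$, so crossing edges cost $O(\ell^{-2})$ each and the boundary $e_2$-projection is $O(\ell^{-1})\lesssim\epsilon|\log\epsilon|^{35}$. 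The interpolation is supported on a shell of volume $\lesssim\ell^{-1/2}|Q|$, and using $\|\nabla\tau\|_\infty\lesssim\ell^{-1/2}$ together with the good-block sup-norm bounds one checks its energetic cost is $\lesssim\epsilon^{9/4}|Q|$ per block. The remainder of your argument---the unconditional Cauchy--Schwarz upper bound on $\textrm{E}_0(Q)$, and the bookkeeping for bad blocks via \eqref{E:Dense}, \eqref{E:rr1}, \eqref{E:rr5}---is correct and matches the paper's treatment.
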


In the remainder of this section we give the basic construction.  In \cref{SS:6} we estimate the errors made in the construction.  The construction proceeds by stitching together approximate ground states on the microscopic blocks $Q \in \RR_{\ffrac \ell2}(\Gamma)$ 
There are two cases to consider.

\noindent \textbf{Case 1:} $\Xi_{\ffrac \ell2}(Q)=0$.  We then set $\sigma^{\bar{\delta}(\Gamma)}_y \equiv e_1$ for all $y \in Q$.  This is unlikely to be (even nearly) optimal in $Q$ but because we are working in a regular contour  this error will not have a large contribution when calculating the energy difference between the configuration we are constructing now and any $\sigma \in \X(\Gamma)$ (this follows from a combination of \cref{L:FBE,,E:rr1,,E:rr5}).

\vspace{10pt}
\noindent \textbf{Case 2:} $\Xi_{\ffrac \ell2}(Q)=1$.  Provisionally we define, for all $y \in Q$, the angle
\[
\theta_y=
g^N_{Q, y}
\]
and the spin configuration
\begin{equation}
\label{E:GSS}
\sigma_{Q,y}= \left(\cos(\theta_y), \sin(\theta_y)\right).
\end{equation}
In order to stitch configurations in neighboring boxes we interpolate the function $\theta_y$ in $Q$ so that the angle is small near the boundary of a box.  
Let
\[
\tau_{Q, x}=
\frac{\dist(x, \pO Q)}{\sqrt{\ell}} \wedge 1
\]
 \text{ if $x \in Q$}
and
\[
\sigma^{\bar{\delta}(\Gamma)}_y= \left(\cos(\tau_y \: \theta_y), \sin(\tau_y \: \theta_y)\right) \text{ if $y \in Q$}.
\]
The reason for  this slightly complicated definition is that it guarantees $|\tau_y \:\theta_y| \leq 2 \ell^{-1}$ for all $y \in \pI Q$.

\section{Surgery at the Boundary of a Contour}
\label{S:Collar}
Suppose a contour $\Gamma$ is given and that $\sigma \in \X(\Gamma)$.  These will be fixed throughout this section.
Let 
\begin{align*}
&\mathfrak C(\Gamma)= \delta(\Gamma) \cap \textrm{sp}(\Gamma)^c \cap \Lambda_N, \quad  \quad \mathfrak C^{\pm}(\Gamma)= \{z \in \mathfrak C(\Gamma): \Psi_z(\sigma)= \pm 1\}\\
\end{align*}
i.e. $\mathfrak C(\Gamma)$ is a thickened version of the boundary of $\Gamma$.
Consider also the `middle' portion of $\mathfrak{C}(\Gamma)$:
\[
\begin{array}{ll}
\mathfrak{M}(\Gamma)= \{z \in \mathfrak C(\Gamma): \dist(z, [\pO \mathfrak C(\Gamma) \backslash \Lambda_N^c]) \geq \frac L2- 100\},& \MM(\Gamma)=\{Q\in \QQ_{\ffrac \ell2}: Q \cap \textrm{M}(\Gamma) \neq \varnothing\}, \\
\textrm{M}(\Gamma)= \cup_{Q \in \MM(\Gamma)} Q, & \textrm{M}^{\pm}(\Gamma)= \textrm{M}(\Gamma)\cap \mathfrak C^{\pm}(\Gamma).
\end{array}  
\]
We will often drop reference to $\Gamma$ in these sets below as $\Gamma$ is fixed throughout.

The main result of this section is the following.
\begin{lemma}
\label{L:Collar}
There exists $\epsilon_0>0$ so that the following holds for all $\epsilon \in (0, \epsilon_0)$.  Let $\Gamma$ be a clean contour.

Let $\sigma\in \X(\Gamma)$. From $\sigma$, we can construct a spin configuration $\sigma^{\mathfrak C}$ with the following properties.
\begin{enumerate}
\item $\sigma_x = \sigma^{\mathfrak C}_x$ if $\dist(x, \mathfrak C(\Gamma)) \geq \frac {3L}{2}$.

\item If $x \in \pO \textrm{M}^{\pm}(\Gamma)$ then $\|\sigma^{\mathfrak C}_x \mp e_1\|_2\leq \epsilon |\log \epsilon|$.

\item For any choice of boundary condition ${e_1}$ compatible with $\Gamma$ and for which $\Sp(\Gamma) \subset \Lambda_N$,
\begin{equation}
-\HH_{\Lambda_N}(\sigma^{\mathfrak C}|{e_1}) \geq -\HH_{\Lambda_N}(\sigma|{e_1})  - C\epsilon^2 |\log \epsilon|^{- 25}|\Gamma|.
\end{equation}
\end{enumerate}
\end{lemma}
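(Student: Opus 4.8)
The plan is to build $\sigma^{\mathfrak C}$ by replacing $\sigma$, on a collar region $R'$ sitting well inside $\delta(\Gamma)$, with a maximiser of the transformed energy functional $\KK_{R'}$ of \cref{L:BLayer}, and to charge the replacement against the a priori Dirichlet energy available from $\sigma\in\X(\Gamma)$. First I decompose $\mathfrak C(\Gamma)$ into connected components. On (and within $2L$ of) $\mathfrak C^{\pm}(\Gamma)$ one has $\psi^{(0)}\equiv1$, hence $\EE_{Q_\ell}(\sigma)\le\epsilon^2|\log\epsilon||Q_\ell|$, while the block averages are $\xi$-close to $\pm e_1$; so the sign $\pm$ is constant on each component, and distinct components of $\mathfrak C(\Gamma)$ are separated by $\Sp(\Gamma)$, hence by distance $\ge L$. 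Because $-\HH_{\Lambda_N}$ is a sum of local terms and the collars $R'$ attached to distinct components will be pairwise non-adjacent, estimate~(3) reduces to a per-component statement; a ``$-$'' component is treated by conjugating the construction below with the reflection $\rho:(a,b)\mapsto(-a,b)$ of $\mathbb S^1$, which fixes $e_2$ and so leaves $-\HH$ invariant. So fix a ``$+$'' component $C$.

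\emph{The collar $R'$.} Using the Poincar\'e inequality in the $\ell$-boxes meeting $C$ (where $\EE_{Q_\ell}(\sigma)\le\epsilon^2|\log\epsilon||Q_\ell|$) together with the proximity of block averages to $e_1$, the density of spins with angle outside a fixed small window of $0$ is $\ls$ a large negative power of $|\log\epsilon|$; a slicing argument over the $\sim\ell$ parallel hyperplane slices of a thin shell --- this is \cref{S:AuxBound} --- produces graph-spheres on which every spin has angle in $(-\pi/8,\pi/8)$. Take $R'\subset\Lambda_N$ to be the region they enclose, arranged so that $R'$ lies within distance $3L/2$ of $\mathfrak C(\Gamma)$, $R'$ does not meet $\Sp(\Gamma)$, $\pO R'\subset\{x:\theta_x\in(-\pi/8,\pi/8)\}$, and $\textrm{M}(\Gamma)\cap C$ together with its graph boundary is deep inside, $\dist(\textrm{M}(\Gamma)\cap C,\pO R')\ge cL$; the last point is possible because $\mathfrak M(\Gamma)$ is at distance $\ge L/2-100$ from $\pO\mathfrak C(\Gamma)\setminus\Lambda_N^c$ while $L\gg\ell$. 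Since $\Gamma$ is clean and $R'$ is $\delta(\Gamma)$ enlarged by $\ll L$, all of $R'$ still lies where $\psi^{(0)}\equiv1$, so $\EE_{R'}(\sigma)\ls\epsilon^2|\log\epsilon||R'|$, the niceness bounds \eqref{E:r1}--\eqref{E:r4} hold off a $|\log\epsilon|^{-55}$-fraction of sub-blocks of $R'$, and \eqref{E:rr1}--\eqref{E:rr5} govern the exceptional ones.

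\emph{Maximiser and verification.} Put $g^D=\epsilon(-\Delta^D_{R'}+\lambda)^{-1}\alpha$ with $\lambda=L^{-2}\log^8L$, apply \cref{L:BLayer} with the substitution $\theta_x\mapsto\phi_x=\theta_x-\cos\theta_x\,g^D_x$ (the identity off $R'$, as $g^D$ is extended by $0$), and let $\phi^*$ be a maximiser of $\KK_{R'}(\,\cdot\mid\phi_{(R')^c})$ over configurations equal to $\phi_{(R')^c}=\theta_{(R')^c}$ off $R'$, which exists by compactness. From \cref{S:Aux}: since the boundary data lie in $(-\pi/8,\pi/8)$ and the potential $m_x$ is small, a maximum-principle comparison of the Euler--Lagrange equation for $\KK_{R'}$ with the linear elliptic equation $-\Delta\phi+c'\epsilon^2\phi=0$ gives $\|\phi^*\|_{\infty,R'}<\pi/8$, and the relaxation estimate of \cref{S:AuxRelax} (solutions of that equation decay at rate $\sim\epsilon$) gives $|\phi^*_x|\le e^{-c''\epsilon\,\dist(x,\pO R')}$, hence $|\phi^*_x|\le e^{-c'''|\log\epsilon|^4}$ on $\textrm{M}(\Gamma)\cap C$. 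Now set $\sigma^{\mathfrak C}$ on $R'$ to be the inverse substitution applied to $\phi^*$ and $\sigma^{\mathfrak C}=\sigma$ elsewhere (conjugated by $\rho$ on ``$-$'' components). Property~(1) is immediate; for~(2), $\pO\textrm{M}^{\pm}(\Gamma)\subset R'$ lies deep inside $R'$, so $|\theta^*_x|\le|\phi^*_x|+|g^D_x|\le\epsilon|\log\epsilon|$ there and $\|\sigma^{\mathfrak C}_x\mp e_1\|_2\le\epsilon|\log\epsilon|$. For~(3), by locality and non-adjacency of the collars $-\HH_{\Lambda_N}(\sigma^{\mathfrak C}|e_1)+\HH_{\Lambda_N}(\sigma|e_1)=\sum_{R'}\bigl(-\HH_{R'}(\sigma^{\mathfrak C}|\sigma)+\HH_{R'}(\sigma|\sigma)\bigr)$; applying \cref{L:BLayer} to both terms (the substitution of $\sigma^{\mathfrak C}$ is $\phi^*$, and the substitution $\phi^\sigma$ of $\sigma$ agrees with the boundary data off $R'$, so $\KK_{R'}(\phi^*)\ge\KK_{R'}(\phi^\sigma)$), the difference on each $R'$ equals $\bigl(\KK_{R'}(\phi^*)-\KK_{R'}(\phi^\sigma)\bigr)+\bigl(E(\sigma^{\mathfrak C})-E(\sigma)\bigr)\ge-|E(\sigma^{\mathfrak C})|-|E(\sigma)|$, and each error term is bounded via \cref{L:BLayer} using $\EE_{R'}(\sigma)\ls\epsilon^2|\log\epsilon||R'|$, the analogous bound for $\sigma^{\mathfrak C}$ (valid because $\phi^*$ is slowly varying and $g^D$ small), the niceness bounds on good sub-blocks, and \eqref{E:rr1}--\eqref{E:rr5} on the dirty ones; summing over components and using $|\delta(\Gamma)|\ls|\Gamma|$ yields the claimed $C\epsilon^2|\log\epsilon|^{-25}|\Gamma|$.

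\emph{Main obstacle.} The heart of the matter is the error accounting in the last step joined to the maximiser analysis: the crude bound $\EE_{R'}(\sigma)\ls\epsilon^2|\log\epsilon||R'|$ is off by a positive power of $|\log\epsilon|$ from what is claimed, so one must extract the extra smallness from the change-of-variables remainder in \cref{L:BLayer} (it carries a factor $\|g^D\|_\infty$ or $\lambda$, both $\ll$ any negative power of $|\log\epsilon|$), control the contribution of the disorder-dirty sub-blocks through \eqref{E:rr1}--\eqref{E:rr5}, bound the Dirichlet energy of the newly built $\sigma^{\mathfrak C}$, and --- simultaneously --- establish that the maximiser of $\KK_{R'}$ genuinely relaxes to $\pm e_1$ across a collar of width $\gg\epsilon^{-1}$ without excursions toward $\pm\pi/2$. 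This is exactly the content of \cref{S:Energetics} (which houses \cref{L:BLayer}) and \cref{S:Aux}.
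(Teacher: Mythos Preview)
Your overall strategy---find a collar with controlled boundary angles, change variables via \cref{L:BLayer}, maximise $\KK$, invert---matches the paper's. But there are two concrete gaps.

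\textbf{Property (2) is not achieved.} You claim $|g^D_x|\le\epsilon|\log\epsilon|$ on $\pO\textrm{M}^{\pm}$. This is false: with $\lambda=L^{-2}\log^8L$ the pointwise bound is $\|g^{\lambda,D}\|_\infty\ls\epsilon\sqrt{L}|\log\epsilon|^{30}\sim\epsilon^{1/2}|\log\epsilon|^{32}$ (see \eqref{E:r2} and \cref{L:TechBadRandomeness}). So after inverting the substitution your $\sigma^{\mathfrak C}$ has angle of order $\epsilon^{1/2}$ on $\pO\textrm{M}^{\pm}$, not $\epsilon|\log\epsilon|$. The paper fixes this with an extra interpolation step (its Modification~4): once the maximiser $\vartheta$ is shown to satisfy $|\Phi^+_x-g^{\lambda,D}_x|\ls e^{-c\log^4\epsilon}$ near $\textrm{M}^+$, one multiplies the angle by a cutoff $\tau_x=\dist(x,\pO\textrm{M}_i)/\sqrt{\ell}\wedge1$, forcing the configuration to be essentially $\pm e_1$ on $\pO\textrm{M}^\pm$ at a further cost $\ls\epsilon^{9/4}|\Gamma|$. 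Without this step the lemma's conclusion (2) simply does not hold, and the later gluing to $\sigma^{\bar\delta(\Gamma)}$ in \cref{S:Peierls} would incur an uncontrolled boundary mismatch.

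\textbf{Dirty $L$-boxes inside the collar are not handled.} You optimise $\KK_{R'}$ over the full collar and plan to ``control the contribution of the disorder-dirty sub-blocks through \eqref{E:rr1}--\eqref{E:rr5}''. But those regularity bounds are $\ell^2$-type averages; the maximiser analysis you invoke from \cref{S:AuxRelax} needs \emph{pointwise} control of $m_x$ and $\nabla g^{\lambda,D}$ (for the Azuma step in \cref{L:Relax}) and $|g^{\lambda,D}|<1$ (for the change of variables to be nonsingular and for \cref{L:MaxPrince} to apply). These come from \cref{L:TechBadRandomeness}, which transfers the good-box bounds \eqref{E:r2}--\eqref{E:r4} to $g^{\lambda,D}_{R'}$ only when the region is covered by boxes with $\Xi_L=1$. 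The paper therefore first excises a $5L$-neighbourhood $\mathscr D^\pm$ of the boxes with $\Xi_L=0$, brutally interpolating $\sigma$ to $\pm e_1$ there (its Modification~2, \cref{L:INT2}); the cost is absorbed because $|\mathscr D^\pm|\ls|\log\epsilon|^{-55}|\Gamma|$ by \eqref{E:Dense}. Only then is $\KK$ optimised on $\mathscr C^\pm=\mathfrak C^\pm\setminus\mathfrak D^\pm_{L/12}$, where the pointwise bounds are available. Your argument needs this excision (or an equivalent device) before the relaxation estimate can be justified.
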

The proof of this lemma  consists of four separate modifications of $\sigma$.  We will first present the essential steps and Lemmas used in the construction.  Proofs will be given in later sections.

\subsection{Modification 1}
Let $\mathfrak A^{\pm}$ denote the smallest set containing $\mathfrak C^{\pm}$ so that
\[
\sgn(\sigma_x \cdot e_1)= \pm 1 \text{ for all $x \in \pO \mathfrak A^{\pm}$}.
\]
The following is a consequence of \cref{L:Defect}.  Recall that $\xi$ defines the cutoff for whether block averages of spins are good or bad.
\begin{proposition}
\label{C:LocRef}
There exist $\xi_0> 0$ and $\epsilon_0(\xi_0)$ so that if $\epsilon \in (0, \epsilon_0)$ and $\xi \in (0,  \xi_0)$,
\[
\mathfrak A^{\pm}\cup \pO \mathfrak A^{\pm} \subset \{x: \dist(x, \mathfrak C^{\pm}\leq L\}
\]
\end{proposition}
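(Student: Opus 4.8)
The claim is a structural consequence of \cref{L:Defect} together with the slack built into the coarse phase label $\Psi$, so the proof is mostly a matter of unwinding definitions plus a short combinatorial step. First I would propagate the good‑phase information away from $\mathfrak C^{\pm}$. By hypothesis $\Gamma$ is a contour for $\sigma$ and, for $z\in\mathfrak C^{\pm}$, $\Psi_z(\sigma)=\pm1$; unwinding the definition of $\Psi$, this forces $\psi^{(0)}_y\psi^{(1)}_y=\pm1$ for every $y$ lying in every $L$-measurable box $Q_L(r')$ with $\|r-r'\|_\infty\le 2L$, where $Q_L(r)$ is the box containing $z$. Let $R^{\pm}$ denote the union of all such boxes as $z$ ranges over $\mathfrak C^{\pm}$. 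Then $\psi\equiv\pm1$ throughout $R^{\pm}$; in particular every $Q_\ell$-box meeting $R^{\pm}$ has Dirichlet energy $\le\epsilon^{2}|\log\epsilon|\,|Q_\ell|$ and block average $\sigma(Q_\ell)\cdot e_1\in[1-\xi,1]$ (resp. $[-1,-1+\xi]$). Since the $L$-boxes of $\mathfrak C^{\pm}$ sit at $L$-box distance exactly one from $\Sp(\Gamma)$ and the definition of $\Psi$ carries a slack of $2L$, the region $R^{\pm}$ contains the $L$-neighborhood of $\mathfrak C^{\pm}$ inside $\Lambda_N$, \emph{including} at least one $L$-thick layer of $\Sp(\Gamma)$ adjacent to $\mathfrak C^{\pm}$ as well as a layer exterior to $\delta(\Gamma)$. (We treat only bulk contours here; boundary contours are excluded as in the Remark after \cref{D:Contour}.)

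Next I would apply \cref{L:Defect} to $R^{\pm}$. Since $\psi\equiv\pm1$ on $R^{\pm}$, so that every nearby $Q_\ell$-box carries the energy and block-average control, \cref{L:Defect} shows that, outside a collar of width $\le C_0\ell$ along $\pO R^{\pm}$, the bad set $W^{\pm}:=\{x:\sgn(\sigma_x\cdot e_1)\ne\pm1\}$ has all connected components of diameter $\le d_0$, where $d_0\ll\ell$. The smallness $d_0\ll\ell$ is the place where $\xi$ must be taken small and then $\epsilon\in(0,\epsilon_0(\xi))$: a bad component of diameter $\gtrsim\ell$ would force either a sharp spin interface whose surface energy, summed along its length, overruns the per-box budget $\epsilon^{2}|\log\epsilon|\,|Q_\ell|$, or a diffuse interface of thickness $\gg\ell$, which would then occupy an $\ell$-box whose block average is too far from $\pm e_1$ to allow $\psi=\pm1$. (This subcriticality of the a priori bound \cref{E:DE} at the scale $\ell$ is precisely why the analysis is limited to $d\in\{2,3\}$ and why $\ell\ll\epsilon^{-1}|\log\epsilon|^{-\delta_{2,d}/2}$.)

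It remains to identify $\mathfrak A^{\pm}$ and bound how far it grows. The minimality defining $\mathfrak A^{\pm}$, together with the obvious flood-fill from $\mathfrak C^{\pm}$ — repeatedly adjoin the wrong-signed vertices of the current outer boundary and stop once that boundary is correctly signed — gives
\[
\mathfrak A^{\pm}=\mathfrak C^{\pm}\cup\bigcup\bigl\{K:\ K\text{ is a connected component of }W^{\pm}\text{ with }K\cap\pO\mathfrak C^{\pm}\ne\varnothing\bigr\}.
\]
Fix such a $K$ and pick $v\in K\cap\pO\mathfrak C^{\pm}$. By Step~1, $\pO\mathfrak C^{\pm}$ lies in $R^{\pm}$ at distance $\ge L-1$ from $\pO R^{\pm}$, hence — since $L-1\gg C_0\ell$ — outside the collar of Step~2; thus $v$ lies in a connected component of $W^{\pm}$ restricted to the collar-free part of $R^{\pm}$, which has diameter $\le d_0$. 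Because $d_0\ll L$, this small component cannot reach the collar, so it coincides with $K$, i.e.\ $\diam K\le d_0$. As $v$ is within distance $1$ of $\mathfrak C^{\pm}$, we obtain $K\cup\pO K\subseteq\{x:\dist(x,\mathfrak C^{\pm})\le d_0+2\}\subseteq\{x:\dist(x,\mathfrak C^{\pm})\le L\}$ once $\epsilon$ is small. Combining with the displayed identity proves $\mathfrak A^{\pm}\cup\pO\mathfrak A^{\pm}\subseteq\{x:\dist(x,\mathfrak C^{\pm})\le L\}$.

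The genuinely hard input is \cref{L:Defect}, specifically the assertion that the defect scale $d_0$ is $\ll\ell$; everything else is bookkeeping. Within that bookkeeping the delicate point is Step~1: one must verify that the $2L$ slack in $\Psi$ really does carry the labels $\psi=\pm1$ far enough beyond $\mathfrak C^{\pm}$ — in particular a full $L$-layer into the contour $\Sp(\Gamma)$ itself, where $\Psi$ is by definition $0$ — so that \emph{every} component of $W^{\pm}$ meeting $\pO\mathfrak C^{\pm}$ is captured by \cref{L:Defect} and inherits the diameter bound. Once this is secured, the component-size estimate and the flood-fill description of $\mathfrak A^{\pm}$ close the argument with room to spare.
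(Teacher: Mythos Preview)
Your argument is correct and matches the paper's approach: the paper's own proof is the single line ``These are easy consequences of \cref{L:Defect},'' and you have supplied precisely the bookkeeping the paper omits --- propagating the $\psi=\pm1$ labels via the $2L$ slack in $\Psi$, invoking \cref{L:Defect} on the resulting region, and reading off $\mathfrak A^{\pm}$ via the flood-fill description. One small imprecision: \cref{L:Defect} (via Definition~\ref{d:defect}) only excludes wrong-sign components of diameter $\ge \ell/4$, so you get $d_0\le \ell/4$ rather than $d_0\ll\ell$; since $\ell/4\ll L$ (indeed $L/\ell\sim|\log\epsilon|^{8}$), this is harmless for the conclusion.
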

On the basis of this fact it is convenient to make a first modification of $\sigma$, into $\sigma^1_x$, as follows:  If $x \in \mathfrak A^{\pm}$ and $\sgn(\sigma_x\cdot e_1)= \mp1$,  set $\sigma^1_x$ to be the  reflection of $\sigma_x$ across the $e_2$ axis.  Otherwise set $\sigma^1_x=\sigma_x$.
Let us record, without proof, some simple properties of this transformed configuration.
\begin{lemma}
\label{L:>0}
Given $\sigma \in \X(\Gamma)$, let $\sigma^1$ be given as above.
\begin{enumerate}
\item
$\sigma^1\in \X(\Gamma)$.

\item
If $x \in \mathfrak C^{\pm}$, $\sgn(\sigma^1_x\cdot e_1) = \pm 1$ if $\sigma_x \cdot e_1 \neq 0$.

\item
\[
|\sigma^1(Q_\ell) \cdot e_1|\geq |\sigma(Q_\ell) \cdot e_1|
\]
whenever $Q_{\ell} \subset \mathfrak C$.

\item
If $\Gamma$ is compatible with the boundary condition ${e_1}$,
\[
-\HH_{\Lambda_N}(\sigma^1|{e_1}) \geq -\HH_{\Lambda_N}(\sigma|{e_1}).
\]
For any region $R \subset \Lambda_N$
\[
\EE_R(\sigma^1) \leq \EE_R(\sigma).
\]

\item
The number of spin configurations $\sigma\in \X(\Gamma)$ which map to the same $\sigma^1$ is bounded by $c_d^{|\Gamma|}$ for some universal constant $c_d>0$.
\end{enumerate}
\end{lemma}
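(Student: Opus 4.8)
Write $\rho$ for the reflection of $\R^2$ across the $e_2$-axis, $\rho(a,b)=(-a,b)$; it is a linear isometry fixing $\mathbb S^1$, with $\rho e_2=e_2$ and $\rho e_1=-e_1$. Since every $\sigma^1_x$ equals either $\sigma_x$ or $\rho\sigma_x$, we have $e_2\cdot\sigma^1_x=e_2\cdot\sigma_x$ for all $x$, so the external-field term $\epsilon\sum_x\alpha_x\,e_2\cdot\sigma_x$ of $-\HH$ is invariant under $\sigma\mapsto\sigma^1$. The only elementary identity used below is that, for $u,v\in\mathbb S^1$,
\[
(\rho u-v)^2=(u-v)^2+4(u\cdot e_1)(v\cdot e_1).
\]

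\textbf{Parts (2), (3), (5).} For (2): if $x\in\mathfrak C^{\pm}\subseteq\mathfrak A^{\pm}$ and $\sigma_x\cdot e_1\neq0$, then either $\sgn(\sigma_x\cdot e_1)=\pm1$ and $\sigma^1_x=\sigma_x$, or $\sgn(\sigma_x\cdot e_1)=\mp1$ and $\sigma^1_x=\rho\sigma_x$; either way $\sgn(\sigma^1_x\cdot e_1)=\pm1$. For (3): a box $Q_\ell\subseteq\mathfrak C$ is connected of diameter $\ell$, while the components $\mathfrak C^{+},\mathfrak C^{-}$ of $\mathfrak C$ are separated by $\Sp(\Gamma)$ (components of $R^{+}$ and $R^{-}$ are separated by $R^{0}$, and $R^{0}\cap\delta(\Gamma)=\Sp(\Gamma)$), so $Q_\ell\subseteq\mathfrak C^{\pm}\subseteq\mathfrak A^{\pm}$ for one sign; hence $\sigma^1_x\cdot e_1=\pm|\sigma_x\cdot e_1|$ on $Q_\ell$ and $|\sigma^1(Q_\ell)\cdot e_1|=|Q_\ell|^{-1}\sum_{x\in Q_\ell}|\sigma_x\cdot e_1|\geq|\sigma(Q_\ell)\cdot e_1|$. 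For (5): by \cref{C:LocRef} the set $\{x:\sigma^1_x\neq\sigma_x\}\subseteq\mathfrak A^{+}\cup\mathfrak A^{-}$ is contained in $W(\Gamma):=\{x:\dist(x,\mathfrak C(\Gamma))\leq L\}$, a set depending on $\Gamma$ alone with $|W(\Gamma)|\ls|\delta(\Gamma)|\ls|\Gamma|$; a preimage $\sigma$ of $\sigma^1$ agrees with $\sigma^1$ off $W(\Gamma)$ and on $W(\Gamma)$ is obtained from $\sigma^1$ by applying $\rho$ on some subset, so there are at most $2^{|W(\Gamma)|}\leq c_d^{|\Gamma|}$ of them.

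\textbf{Part (4).} Argue edge by edge. If an edge $\langle xy\rangle$ has both endpoints reflected, or neither, then $(\sigma^1_x-\sigma^1_y)^2=(\sigma_x-\sigma_y)^2$. If exactly one, say $x$, is reflected, then $x\in\mathfrak A^{s}$ with $\sgn(\sigma_x\cdot e_1)=-s$ for some $s\in\{+,-\}$; as $x\sim y$, either $y\in\pO\mathfrak A^{s}$ and then $\sgn(\sigma_y\cdot e_1)=s$ by the defining property of $\mathfrak A^{s}$, or $y\in\mathfrak A^{s}$ is unreflected and then $\sgn(\sigma_y\cdot e_1)\in\{0,s\}$ (the case $y\in\mathfrak A^{-s}$ cannot occur, as \cref{C:LocRef} places $\mathfrak A^{\pm}$ within $L$ of $\mathfrak C^{\pm}$ while $\mathfrak C^{+},\mathfrak C^{-}$ are separated by the $L$-thick $\Sp(\Gamma)$). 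In every surviving case $(\sigma_x\cdot e_1)(\sigma_y\cdot e_1)\leq0$, so the identity above gives $(\sigma^1_x-\sigma^1_y)^2\leq(\sigma_x-\sigma_y)^2$. Summing over the internal edges of any region $R$ yields $\EE_R(\sigma^1)\leq\EE_R(\sigma)$. For the Hamiltonian with $e_1$ boundary condition, the external edges $\langle xy\rangle$ with $y\in\pO\Lambda_N$, $\sigma_y=e_1$, are treated identically: compatibility of $\Gamma$ with the $e_1$ boundary condition keeps $\mathfrak C^{-}$, hence $\mathfrak A^{-}$, away from $\pO\Lambda_N$, so a reflected $x$ adjacent to $\pO\Lambda_N$ lies in $\mathfrak A^{+}$ and satisfies $\sgn(\sigma_x\cdot e_1)=-1$. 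Combining with the invariance of the field term gives $-\HH_{\Lambda_N}(\sigma^1|e_1)\geq-\HH_{\Lambda_N}(\sigma|e_1)$.

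\textbf{Part (1), the main obstacle.} To see $\sigma^1\in\X(\Gamma)$ one must verify that $\Sp(\Gamma)$ is still a maximal connected component of $R^{0}(\sigma^1)$ and that $\psi_z(\sigma^1)=\psi_\Gamma(z)$ on $\Sp(\Gamma)$ --- equivalently, that the microscopic phase variables $\psi^{(0)}$ and $\psi^{(1),\xi}$ (hence $\psi$, $\Psi$, and $R^{0}$) are unchanged wherever they are probed. Since $\sigma$ and $\sigma^1$ agree off $W(\Gamma)$, $W(\Gamma)$ lies within $L$ of $\mathfrak C^{\pm}\subseteq R^{\pm}$, and these phase variables see $\sigma$ only through $\ell$-boxes within distance $5\ell\ll L$, it suffices to control $\ell$-boxes meeting $W(\Gamma)$. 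For $\psi^{(0)}$: by part (4) $\EE_{Q_\ell}(\sigma^1)\leq\EE_{Q_\ell}(\sigma)$, so $\psi^{(0)}_z$ can only pass from $0$ to $1$, and an $\ell$-box whose Dirichlet energy drops below $\epsilon^{2}|\log\epsilon||Q_\ell|$ under the surgery would have to contain a non-sparse cluster of reflected (hence wrong-sign) spins, which is forbidden within distance $L$ of $\mathfrak C^{\pm}$ by the clean/good hypotheses --- this is the quantitative content of \cref{L:Defect}, which also underlies \cref{C:LocRef}. For $\psi^{(1),\xi}$: $\sigma(Q_\ell)\cdot e_2$ is untouched, and $\sigma(Q_\ell)\cdot e_1$ moves only by reflecting those same sparse wrong-sign spins, hence by much less than $\xi$, so no $\ell$-box crosses either threshold $[1-\xi,1]$ or $[-1,-1+\xi]$. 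The genuinely delicate point --- the one that forces us to work inside a clean contour --- is precisely to exclude that flipping the sign-defects accidentally ``heals'' a bad $\ell$-box lying just inside $\Sp(\Gamma)$ into a good one; granting this, $\psi(\sigma^1)\equiv\psi(\sigma)$ and $R^{0}(\sigma^1)=R^{0}(\sigma)$, so $\sigma^1\in\X(\Gamma)$.
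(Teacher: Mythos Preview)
The paper records this lemma ``without proof'' as a list of ``simple properties,'' so there is nothing to compare against; your arguments for parts (2)--(5) are correct and are presumably what the author had in mind. The identity $(\rho u-v)^2=(u-v)^2+4(u\cdot e_1)(v\cdot e_1)$ together with the sign discipline on $\pO\mathfrak A^{\pm}$ is exactly the right engine for (4), and your counting in (5) via $|W(\Gamma)|\ls|\Gamma|$ is the intended one.

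Your treatment of part (1), however, is both more complicated than necessary and not quite right as written. You worry that an $\ell$-box in $\Sp(\Gamma)$ might ``heal'' ($\psi^{(0)}$ passing from $0$ to $1$) and then appeal to cleanliness and \cref{L:Defect} to rule this out. But the lemma is stated for an arbitrary contour, not a clean one, and in any case that mechanism cannot occur: the modification set $\mathfrak A^{\pm}$ lies within $L$ of $\mathfrak C^{\pm}$ by \cref{C:LocRef}, and $\psi_z$ depends on $\sigma$ only through $\ell$-boxes within $5\ell$ of $z$, so $\psi_z$ can change only for $z$ within $L+O(\ell)<2L$ of $\mathfrak C^{\pm}$. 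For any such $z$, the very definition of $\Psi=\pm1$ on $\mathfrak C^{\pm}$ (which forces $\psi_y=\pm1$ on all $L$-boxes within $2L$) already gives $\psi_z(\sigma)=\pm1$. In particular $\psi^{(0)}_z(\sigma)=1$ to begin with, so there is no $0\to1$ transition to exclude. It then remains only to check that $\psi_z(\sigma^1)=\pm1$ persists: $\psi^{(0)}_z$ stays $1$ because $\EE_{Q_\ell}$ only decreases, and $\psi^{(1)}_z$ stays $\pm1$ because on the relevant $\ell$-boxes the $e_1$-projection can only move toward $\pm1$ (reflected spins in $\mathfrak A^{+}$ have $\sigma^1_x\cdot e_1=-\sigma_x\cdot e_1>0>\sigma_x\cdot e_1$, and spins outside $\mathfrak A^{+}$ are untouched). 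Thus $\psi(\sigma^1)\equiv\psi(\sigma)$ globally, hence $\Psi$ and $R^0$ are unchanged and $\sigma^1\in\X(\Gamma)$. No defect or cleanliness argument is needed.
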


\subsection{Modification 2}
From now on, we will work with $\sigma^1$.  Ideally, we want to use \cref{L:BLayer} to show that by replacing $\sigma^1$ by the optimizer of  $-\HH_{\mathfrak C}(\sigma|\sigma^1)$ in $\mathfrak C$ the resulting configuration behaves as claimed in \cref{L:Collar}.  However, because the random field may behave rather poorly in ( relatively small) subsets of $\mathfrak C$ this cannot be done directly.  

Set 
\begin{align}
&D= \cup_{Q\in \QQ_L: Q\subset \mathfrak C \text{ and }  \Xi_{L}(Q)=0} Q\\
&\mathscr D= \{x: \dist(x, D) \leq 5L\}, \quad \quad \mathscr D^{\pm}= \mathscr D \cap \mathfrak C^{\pm}(\Gamma), \\
&\mathfrak D_{L'}^{\pm}=\{x \in \mathscr D^{\pm}: \dist(x, \pO \mathscr D^{\pm}) \geq L' \}.
\end{align}
In the following, we work with the regions $\mathfrak D_{\ffrac L{12}}^{\pm} \subset \mathfrak D_{\ffrac L{16}}^{\pm}$.

To prepare the way for \cref{L:BLayer}, we first we will brutally change $\sigma^1$ in $\mathscr D^{\pm}$.
Define
\[
\tau_x=
\begin{cases}
\frac{16 \dist(x, \mathfrak D^{\pm}_{\ffrac L{16}})}{L} \wedge 1 \quad \text{ for $x \in \mathscr D^{\pm} \backslash \mathfrak D_{\ffrac{L}{16}}^{\pm}$},\\
0 \quad \text{ for $x \in \mathfrak D_{\ffrac{L}{16}}^{\pm}$},\\
1 \quad \text{ otherwise.}
\end{cases}
\]
Representing $\sigma_x= (\cos(\theta_x), \sin(\theta_x))$ with $\theta_x \in [- \frac {\pi}{2}, \frac{ \pi}{2}]$ (resp. $\theta_x \in [\frac {\pi}{2}, \frac{3\pi}{2}]$) if $x\in \mathfrak C^{+}$ (resp.  if $x\in \mathfrak C^{-}(\Gamma)$) let
\[
\sigma^{2}_x=
\begin{cases}
\sigma_x^1 \quad \text{ if $x \notin \mathscr D^{\pm}$},\\
(\cos(\tau_x\theta_x), \sin(\tau_x \theta_x)) \quad \text{ if $x \in \mathscr D^+$},\\
(\cos(\tau_x[\theta_x-\pi]+ \pi), \sin(\tau_x [\theta_x- \pi] + \pi)) \quad \text{ if $x \in \mathscr D^-$}.
\end{cases}
\]

\begin{lemma}
\label{L:INT2}
There is $\epsilon_0$ so that if $\epsilon \in (0, \epsilon_0)$ and if $\Gamma$ is a clean contour,  
\[
\left|-\HH_{\Lambda_N}(\sigma^{2}|{e_1})+ \HH_{\Lambda_N}(\sigma^1|{e_1})\right|
\ls \epsilon^2|\log \epsilon|^{-25} |\Gamma|
\]
\end{lemma}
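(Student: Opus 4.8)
\emph{Proof strategy for \cref{L:INT2}.}
The plan is to reduce the claim to an estimate over $\mathscr D:=\mathscr D^+\cup\mathscr D^-$. Since $\sigma^2$ is obtained from $\sigma^1$ by the angular rescaling $\theta_x\mapsto\tau_x\theta_x$ inside $\mathscr D^{\pm}$ and $\tau\equiv 1$ outside, the configurations agree off $\mathscr D$ up to an $O(L^{-1})$ discrepancy on $\pI\mathscr D^{\pm}$, so
\[
-\HH_{\Lambda_N}(\sigma^2|e_1)+\HH_{\Lambda_N}(\sigma^1|e_1)
=-\tfrac12\sum_{\langle xy\rangle\,\cap\,\mathscr D\neq\varnothing}\bigl[(\sigma^2_x-\sigma^2_y)^2-(\sigma^1_x-\sigma^1_y)^2\bigr]
+\epsilon\sum_{x\in\mathscr D}\alpha_x\,e_2\cdot(\sigma^2_x-\sigma^1_x),
\]
and I would bound the Dirichlet sum and the random-field sum separately by $\ls\epsilon^2|\log\epsilon|^{-25}|\Gamma|$.

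Two structural inputs feed both bounds. First, cleanness of $\Gamma$ makes $\delta(\Gamma)$ good, so by \eqref{E:Dense} there are $\ls|\log\epsilon|^{-55}|\Gamma|/L^d$ many $L$-blocks inside $\mathfrak C(\Gamma)$ with $\Xi_L=0$; since $\mathscr D$ is their union dilated by $5L$ and intersected with $\mathfrak C^{\pm}$, this yields $|\mathscr D|\ls|\log\epsilon|^{-55}|\Gamma|$. Second, $\sigma^1\in\X(\Gamma)$ by \cref{L:>0}, so $\Psi(\sigma^1)=\pm1$ on $\mathfrak C(\Gamma)$ and hence, by the definitions of $\psi^{(0)}$ and $\psi^{(1),\xi}$, one has $\EE_Q(\sigma^1)\le\epsilon^2|\log\epsilon||Q|$ and $\sigma^1(Q)\cdot e_1\in\pm[1-\xi,1]$ for every $\ell$-block $Q$ within $O(\ell)$ of $\mathscr D$. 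Writing $\sigma^1_x=(\cos\theta_x,\sin\theta_x)$ with $\theta_x\in[-\tfrac\pi2,\tfrac\pi2]$ on $\mathfrak C^+$ (reflected convention on $\mathfrak C^-$), discrete Poincaré on each such $Q$ and $\ell^2\epsilon^2\sim|\log\epsilon|^{-8}$ give $\sum_{x\in Q}|\sigma^1_x-e_1|^2\ls(\xi+|\log\epsilon|^{-7})|Q|$, hence $\sum_{x\in\mathscr D}\sin^2\theta_x\ls\xi|\log\epsilon|^{-55}|\Gamma|$ and $\sum_{x\in\mathscr D}|\sin\theta_x|\ls\sqrt\xi\,|\log\epsilon|^{-55}|\Gamma|$.

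For the Dirichlet sum, $\tau$ is $\tfrac{16}{L}$-Lipschitz, $|\tau_x\theta_x|\le|\theta_x|$, and $|\theta_x-\theta_y|\le\tfrac\pi2|\sigma^1_x-\sigma^1_y|$, so a chain-rule estimate gives $|(\sigma^2_x-\sigma^2_y)^2-(\sigma^1_x-\sigma^1_y)^2|\ls(\sigma^1_x-\sigma^1_y)^2+L^{-2}\max(\theta_x^2,\theta_y^2)$ on every edge incident to $\mathscr D$; summing and covering $\mathscr D$ by $\ell$-blocks, this is $\ls\EE_{\mathrm{nbhd}(\mathscr D)}(\sigma^1)+L^{-2}|\mathscr D|\ls\epsilon^2|\log\epsilon|^{-54}|\Gamma|$ since $L^{-2}\sim\epsilon^2|\log\epsilon|^{-8}$. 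For the random-field sum, put $h_x=e_2\cdot(\sigma^1_x-\sigma^2_x)=\sin\theta_x-\sin(\tau_x\theta_x)$, so $|h_x|\le|\theta_x|$ and $\EE_Q(h)\ls\epsilon^2|\log\epsilon||Q|$ on each $\ell$-block; on $Q$ I would split $\epsilon\sum_{x\in Q}\alpha_x h_x=\epsilon\sum_{x\in Q}\hat\alpha_x\hat h_x+\epsilon\,\alpha(Q)\sum_{x\in Q}h_x$. For the fluctuation part, Green's identity for $-\Delta^N_Q$ and the relation $(-\Delta^N_Q)^{-1}\hat\alpha=\epsilon^{-1}g^{0,N}_Q$ give $\epsilon\sum_{x\in Q}\hat\alpha_x\hat h_x\le\|\nabla g^{0,N}_Q\|_2\sqrt{\EE_Q(h)}\ls\sqrt{F^{\nabla}_0(Q)}\,\epsilon|\log\epsilon|^{1/2}\ell^d$; summing over $\ell$-blocks meeting $\mathscr D$, the blocks with $F^{\nabla}_0<\epsilon^2|\log\epsilon|$ (which includes all good ones, by \eqref{E:r3}) contribute $\ls\epsilon^2|\log\epsilon|^{-54}|\Gamma|$ by the count bound $|\mathscr D|/\ell^d$, and for the remaining bad blocks \eqref{E:rr1} on $\delta(\delta(\Gamma))$ bounds $\sum\sqrt{F^{\nabla}_0(Q)}$ and yields $\ls\epsilon^{17/8}|\log\epsilon|^{1/2}|\Gamma|$. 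For the mean part, $|\sum_{x\in Q}h_x|\le\tfrac\pi2\sum_{x\in Q}|\sin\theta_x|\ls\sqrt\xi|Q|$, while \eqref{E:rr5} on $\delta(\delta(\Gamma))$ bounds $\sum_Q|\alpha(Q)|$, giving $\ls\sqrt\xi\,\epsilon^{5/2}|\log\epsilon|^{56}|\Gamma|$. Every contribution is $\ll\epsilon^2|\log\epsilon|^{-25}|\Gamma|$.

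The main obstacle is the random-field sum. Even restricted to the disorder-good $\ell$-blocks the crude bound $\epsilon\|\alpha\|_{\infty}\|h\|_1$ (via \eqref{E:r4}) is only of order $\sqrt\xi\,\epsilon|\log\epsilon|^{-25}|\Gamma|$, i.e.\ loses a factor $\epsilon^{-1}$ against the target, so one genuinely needs the elliptic identity $(-\Delta^N_Q)^{-1}\hat\alpha=\epsilon^{-1}g^{0,N}_Q$ together with the a~priori bound $\|\nabla g^{0,N}_Q\|_2^2\ls\epsilon^2|Q|$ of \eqref{E:r3} to recover the missing power of $\epsilon$; the only genuinely laborious point is then the bookkeeping that absorbs the rare disorder-bad $\ell$-blocks of $\mathscr D$ via \eqref{E:rr1} and \eqref{E:rr5}, the Dirichlet estimate being immediate.
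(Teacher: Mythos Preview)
Your argument is correct, but it takes a genuinely different route from the paper's own proof. The paper applies the change-of-variables formula \eqref{E:BL} of \cref{L:BLayer} directly to the region $R=\mathscr D^+$ with the Dirichlet field $g^{\lambda,D}_{\mathscr D^+}$ (and $\lambda=L^{-2}\log^8 L$): since $\sigma^1=\sigma^2$ on $\pO\mathscr D^+$, the boundary terms in \eqref{E:BL} cancel and the Hamiltonian difference is bounded at once by $\EE_{\mathscr D'}(\sigma)+\EE_{\mathscr D'}(g^{\lambda,D}_{\mathscr D^+})+\lambda\|g^{\lambda,D}_{\mathscr D^+}\|_2\sqrt{|\mathscr D^+|}$. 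The field $g^{\lambda,D}_{\mathscr D^+}$ is then localized to $L$-blocks via \cref{L:TechBadRandomeness}, after which the regularity conditions \eqref{E:rr1}, \eqref{E:rr2}, \eqref{E:rr4} at scale $L$ (together with the density bound $|\mathscr D^+|\ls|\log\epsilon|^{-55}|\Gamma|$) finish the estimate.

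By contrast, you never invoke \cref{L:BLayer} or \cref{L:TechBadRandomeness}: you split the Hamiltonian difference into its Dirichlet and random-field pieces, bound the first directly from the Lipschitz property of $\tau$, and for the second cover $\mathscr D$ by $\ell$-blocks and use the Neumann identity $\epsilon\hat\alpha=-\Delta^N_Q g^{0,N}_Q$ (essentially \eqref{H:FML}) block by block, controlling good blocks via \eqref{E:r3} and bad ones via \eqref{E:rr1}, \eqref{E:rr5} at scale $\ell$. Your route is more elementary and makes explicit where the missing power of $\epsilon$ in the random-field term is recovered; the paper's route is more systematic, reusing its central computational lemma and working uniformly at the scale $L$ at which $\mathscr D$ is defined, which spares the separate good/bad $\ell$-block bookkeeping you carry out.
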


\subsection{Modification 3}
Set
\[
\mathscr C^{\pm}= \mathfrak C^{\pm} \backslash \mathfrak D^{\pm}_{\ffrac L{12}}.
\]
Our next goal is to modify $\sigma^{2}$ in $\mathscr C^{\pm}$ in such a way that the resulting configuration has small projection onto the vertical ($e_2$) axis for all $x \in \pO \textrm{M}$ without too much cost in energy.  Note this aim was achieved already within
$ \mathfrak D_{\ffrac{L}{16}}^{\pm}$.

To be concrete with our calculations, we restrict attention to $\mathscr C^+, \mathfrak D^{+}_{\ffrac L{12}}$; the region $\mathscr C^-, \mathfrak D^{-}_{\ffrac L{12}}$ is treated similarly.
Let $\gld_{x, \mathscr C^+}=[-\Delta+ \lambda]^{-1} \alpha$ with dirichlet boundary conditions on $\pO\mathscr C^+$ and with \begin{equation}
\label{E:lambda3}
\lambda=L^{-2} \log^8 L.
\end{equation}

Because of Modification 1, for $x \in \mathscr C^+ \cup \pO \mathscr C^+$, we can express $\sigma^2_x$ in angular variables as $\sigma^2_x=(\cos(\theta_x),\sin(\theta_x))$ with $\theta_x \in [-\frac{\pi}{2}, \frac{\pi}{2}]$.
Consider change of variables
\[
\phi_x= \theta_x - \cos(\theta_x)\gld_{x, \mathscr C^+} \text{ for $x \in \mathscr C^+$}.
\]
We show later that $\|\gld_{\mathscr C^+}\|_{\infty}\ll \ffrac\pi{12}$( using \cref{L:TechBadRandomeness} and the definition of $\mathscr C^+$).  Thus the transformation is nonsingular and preserves the half-space $\sigma\cdot e_1 \geq 0$ in spin space.  

Let
\[
\mathscr F(\Gamma)=\left \{ x\in  \mathfrak C^+: \dist(x, \pO \mathfrak C^+\cap \Lambda_N) \geq \ffrac{L}{8}\right \}
\]
and let $\mathfrak f$ be the smallest subset of $\Z^3$ containing $\mathscr F$ and so that
\[
|\phi_x | \leq \ffrac{\pi}{6}
\]
for all $x \in \pO \mathfrak f$.  

For $\epsilon$ small enough, our definition of $\mathfrak C$, \cref{L:Defect} and the fact that $\|\gld_{\mathscr C^+}\|_{\infty}\ll \ffrac\pi{12}$ together imply the following.  This is the second point at which we use low dimensionality of the lattice $\Z^3$.
\begin{proposition}
There exist $\xi_0> 0$ and $\epsilon_0(\xi_0)$ so that if $\epsilon \in (0, \epsilon_0)$ and $\xi \in (0, \xi_0)$,
\label{C:E1}
\[
\mathfrak f \subset \left \{x \in \mathfrak C^+: \dist(x,  \pO \mathfrak C \cap \Lambda_N) \geq \ffrac L9 \right\} .
\]
\end{proposition}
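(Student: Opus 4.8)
\emph{Proof proposal.} The plan is to run the argument that proved \cref{C:LocRef}, now with the variable $\phi$ in place of $\sigma\cdot e_1$. The ingredients are: the bound $\|\gld_{\mathscr C^+}\|_{\infty}\ll\ffrac\pi{12}$ stated just above; the a priori Dirichlet-energy control on $\mathfrak C^+$ coming from $\sigma\in\X(\Gamma)$ together with $\sigma(Q_\ell)\cdot e_1\in[1-\xi,1]$ there; and \cref{L:Defect}. First I reduce the defining property of $\mathfrak f$ to a statement about the angular variable $\theta_x$ of $\sigma^{2}$. On the half-space $\sigma\cdot e_1\geq 0$ one has $\theta_x\in[-\ffrac\pi2,\ffrac\pi2]$ and $\cos\theta_x\in[0,1]$, so $|\phi_x-\theta_x|=\cos(\theta_x)\,|\gld_{x,\mathscr C^+}|\leq\|\gld_{\mathscr C^+}\|_\infty\ll\ffrac\pi{12}$; hence $\{x\in\mathscr C^+:|\phi_x|>\ffrac\pi6\}\subseteq\{x\in\mathscr C^+:|\theta_x|>\ffrac\pi{12}\}$, while $|\theta_x|\leq\ffrac\pi{12}$ forces $|\phi_x|<\ffrac\pi6$. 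A connected component of $\{|\theta_x|>\ffrac\pi{12}\}$ is, by maximality, bordered only by vertices with $|\theta_x|\leq\ffrac\pi{12}$, hence with $|\phi_x|<\ffrac\pi6$; therefore $\mathfrak f$ is contained in $\mathscr F$ together with those connected components of $\{x\in\mathscr C^+:|\theta_x|>\ffrac\pi{12}\}$ that meet $\pO\mathscr F$ (on $\mathfrak D^{+}_{\ffrac L{12}}\subset\mathfrak D^{+}_{\ffrac L{16}}$ one has $\sigma^{2}_x=e_1$, so no growth occurs there). It thus suffices to bound the diameter of such components.

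Then I apply \cref{L:Defect}. Because $\Gamma$ is a clean contour and $\sigma\in\X(\Gamma)$, on $\mathfrak C^+$ one has $\EE_{Q_\ell}(\sigma)\ls\epsilon^2|\log\epsilon||Q_\ell|$ on every good $\ell$-box and $\sigma(Q_\ell)\cdot e_1\in[1-\xi,1]$. Modifications~1 and 2 do not decrease $\sigma_x\cdot e_1$ on $\mathfrak C^+$ (the reflections of Modification~1 replace $\sigma_x\cdot e_1$ by $|\sigma_x\cdot e_1|$ there, and Modification~2 replaces $\theta_x$ by the contraction $\tau_x\theta_x$) and they raise the Dirichlet energy of an $\ell$-box only through $|\nabla\tau|^{2}$-terms supported in the sparse dirty part of the clean contour, which are absorbed using \eqref{E:rr1} and \eqref{E:rr5} (cf.\ \cref{L:>0}(4)); so the same bounds hold for $\sigma^{2}$ on $\mathfrak C^+$. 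For $\xi\in(0,\xi_0)$, \cref{L:Defect} then bounds the diameter of every connected component of $\{x\in\mathfrak C^+:|\theta_x|>\ffrac\pi{12}\}$ by a quantity $\ll L$ once $\epsilon$ is small: this is exactly where $d=3$ enters, since the capacity of such a component relative to the ambient near-$e_1$ region it must be separated from is comparable to its linear size, whereas the available energy per $\ell$-box is only $\ls\epsilon^2|\log\epsilon||Q_\ell|$, and the corresponding capacity estimate degrades for $d\geq 4$.

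Finally I assemble the pieces. By definition $\mathscr F\subseteq\{x\in\mathfrak C^+:\dist(x,\pO\mathfrak C^+\cap\Lambda_N)\geq\ffrac L8\}$, so adjoining components of diameter $\leq\ffrac L{100}$ gives $\mathfrak f\subseteq\{x:\dist(x,\pO\mathfrak C^+\cap\Lambda_N)\geq\ffrac L8-\ffrac L{100}\}$. For $x$ this deep inside $\mathfrak C^+$, by our definition of $\mathfrak C$ the $\Psi=\pm1$ collars $\mathfrak C^+$ and $\mathfrak C^-$ are separated by more than $L$, so $\mathfrak C$ and $\mathfrak C^+$ agree within distance $\ffrac L8$ of $x$ and hence $\dist(x,\pO\mathfrak C\cap\Lambda_N)\geq\dist(x,\pO\mathfrak C^+\cap\Lambda_N)$. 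Since $\ffrac L8-\ffrac L{100}>\ffrac L9$, this yields $\mathfrak f\subseteq\{x\in\mathfrak C^+:\dist(x,\pO\mathfrak C\cap\Lambda_N)\geq\ffrac L9\}$, which is the claim.

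The step I expect to be the real obstacle is the application of \cref{L:Defect}: verifying that its hypotheses genuinely survive the passage $\sigma\to\sigma^{1}\to\sigma^{2}$ (that Modifications~1 and 2 have not spoiled the a priori energy control on $\mathfrak C^+$), and — inside \cref{L:Defect} itself — pinning the defect diameter strictly below the power of $|\log\epsilon|$ built into $L$, since if a component of $\{|\theta_x|>\ffrac\pi{12}\}$ could reach size $\sim L$ the conclusion would fail. The remaining points (the difference between $\pO\mathfrak C$ and $\pO\mathfrak C^+$, the cushion between $\ffrac L8$, $\ffrac L9$ and $\ffrac L{100}$, and the behaviour on $\mathfrak D^{\pm}$) are routine.
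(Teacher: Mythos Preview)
Your proposal is correct and follows the paper's route: the paper records only that \cref{C:E1} (together with \cref{C:LocRef}) is an ``easy consequence of \cref{L:Defect}'', and you have correctly unpacked the three ingredients it names just before the statement --- the definition of $\mathfrak C$, the bound $\|\gld_{\mathscr C^+}\|_\infty\ll\ffrac\pi{12}$, and \cref{L:Defect}.

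One simplification dissolves what you flag as the ``real obstacle''. Modifications~1 and~2 can only \emph{increase} $\sigma_x\cdot e_1$ pointwise on $\mathfrak C^+$: the reflection replaces $\sigma_x\cdot e_1$ by $|\sigma_x\cdot e_1|$, and multiplying the angle by $\tau_x\in[0,1]$ contracts toward $e_1$. Hence $\{|\theta_x(\sigma^2)|>\ffrac\pi{12}\}\subset\{\sigma_x\cdot e_1<\cos\ffrac\pi{12}\}$, a set defined in terms of the \emph{original} $\sigma$, and \cref{L:Defect} may be applied to $\sigma$ itself --- for which the energy and average hypotheses hold directly from $\sigma\in\X(\Gamma)$. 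This avoids having to verify that the Dirichlet-energy control survives the modifications; your appeal to \eqref{E:rr1} and \eqref{E:rr5} is in any case misplaced, since those bounds concern the random field rather than the spin energy (the correct pointwise bound, if one insists on working with $\sigma^2$, is simply $|\nabla\tau|\ls L^{-1}\ll\epsilon$).
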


Set $\mathfrak g^+:= \mathfrak f \cap \mathscr C^+$. After the change of variables in $\mathscr C^+$, according to \cref{L:BLayer} it is sufficient to optimize (after restricting $\KK_{\mathscr C^+}$ to $\mathfrak g^+$)
\begin{equation}
\label{E:AuxHam}
\KK_{\mathfrak g^+}(\varphi|\phi)=\sum_{e \cap \mathfrak g^+ \neq 0} \cos(\nabla_e\varphi)-1 + \frac 14 \sum_{x\in  \mathfrak g^+} m_x \cos^2(\varphi_x)
\end{equation}
where $m_x= \sum_{y \sim x} [\gld_{\mathscr C^+, y}-\gld_{\mathscr C^, x}]^2$.
In \cref{E:AuxHam}, we neglected the boundary terms in \cref{E:BL} and also the $Error^{\lambda, D}_{\mathscr C^+}$ term.  The former plays no role in the optimization and the latter will be bounded carefully in the proof of \cref{L:INT3}.
The advantage of \cref{E:AuxHam} over the original Hamiltonian is that  stationary points of $\KK_{\mathfrak g^+}(\cdot| \phi)$ behave as solutions to discrete elliptic PDE's with (random) mass.

Let $\vartheta$ be the maximizer of $\KK_{\mathfrak g^+}(\varphi|\phi)$ on $\mathfrak g^+$ and let
\begin{equation}
\phi'_x=\begin{cases}
\vartheta_x \text{ if $x \in \mathfrak g^+$}\\
\phi_x \text{ if $x \in \mathfrak C^+ \backslash \mathfrak g^+$}.
 \end{cases}
 \end{equation}

Inverting the change of variables, define  $\Phi^+_x$ by
\[
\phi'_x=\Phi^+_x- \cos(\Phi^+_x) \gld_{\mathscr C^+, x}
\]
We construct the configuration $\Phi^-$ close to $\pi$ in a similar way on $\mathfrak g^-$ (fluctuating around the angle $\pi$).
Modification 3 of the original spin configuration is defined by
\[
\sigma^{3}_x:=
\begin{cases}
\left(\cos(\Phi^+_x), \sin(\Phi^+_x)\right) \text{ for $x \in \mathscr C^+$},\\
\left(\cos(\Phi^-_x), \sin(\Phi^-_x)\right) \text{ for $x \in \mathscr C^-$},\\
\sigma^{2}_x \text{ otherwise.}
\end{cases}
\]
\begin{lemma}
\label{L:INT3}
There exists $\epsilon_0$ so that if $\epsilon< \epsilon_0$
\begin{equation}
\label{E:Trans1}
-\HH_{\Lambda_N}(\sigma^{3}|{e_1})\geq -\HH_{\Lambda_N}(\sigma|{e_1})
- C \epsilon^2 |\log \epsilon|^{-25}|\Gamma|.
\end{equation}
Moreover, 
\begin{align*}
&\sigma^{3} \equiv \sigma^2 \text{ for $x \in \Lambda_N\backslash \mathfrak g^+\cup \mathfrak g^-$},\\
&|\Phi^+_x- \gld_{\mathscr C^+, x}| \ls \exp(-c \log^4 \epsilon) \text{ if $\dist(x, \textrm{M}^+)< \ffrac L5$},
\end{align*}
and similarly for $\Phi^-_x$.
\end{lemma}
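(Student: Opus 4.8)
The plan is to compare $\sigma^{3}$ with $\sigma^{2}$ by undoing the change of variables of \cref{L:BLayer} on $\mathscr C^{\pm}$, exploit that on $\mathfrak g^{\pm}$ we replaced the transformed field by the maximizer $\vartheta$ of $\KK_{\mathfrak g^{\pm}}(\cdot|\phi)$, and then chain with \cref{L:INT2} and \cref{L:>0}(4). The assertion $\sigma^{3}\equiv\sigma^{2}$ on $\Lambda_N\setminus(\mathfrak g^{+}\cup\mathfrak g^{-})$ is immediate: off $\mathfrak g^{\pm}$ we have $\phi'=\phi$, and inverting $\phi'_x=\Phi^{\pm}_x-\cos(\Phi^{\pm}_x)\gld_{\mathscr C^{\pm},x}$ there returns exactly the angular representation of $\sigma^{2}_x$. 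The inversion is legitimate because $\|\gld_{\mathscr C^{\pm}}\|_{\infty}\ll\ffrac{\pi}{12}$ on $\mathscr C^{\pm}$ by \cref{L:TechBadRandomeness} (since $\mathscr C^{\pm}$ omits the $\ffrac{L}{12}$-interior of the dirty set $\mathscr D^{\pm}$), which also keeps $\Phi^{\pm}$ in the correct half-space.

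Since $\sigma^{3}$ and $\sigma^{2}$ differ only on the disjoint sets $\mathfrak g^{+},\mathfrak g^{-}$, which lie in the interior of $\Lambda_N$,
\[
-\HH_{\Lambda_N}(\sigma^{3}|e_1)+\HH_{\Lambda_N}(\sigma^{2}|e_1)=\sum_{\pm}\Big[-\HH_{\mathscr C^{\pm}}(\sigma^{3}|\sigma^{2})+\HH_{\mathscr C^{\pm}}(\sigma^{2}|\sigma^{2})\Big].
\]
I would treat the $+$ term (the $-$ term being identical after reflecting about the $e_2$ axis). Applying \cref{L:BLayer} on $\mathscr C^{+}$ with the Dirichlet Laplacian and mass $\lambda=L^{-2}\log^{8}L$, and noting that $\sigma^{2}$ and $\sigma^{3}$ carry the same boundary data $\sigma^{2}$ on $(\mathscr C^{+})^{c}$, both configurations are sent to $\KK_{\mathscr C^{+}}(\cdot|\phi)$ plus the same boundary terms of \cref{E:BL} and their respective error terms $\mathrm{Error}^{\lambda,D}_{\mathscr C^{+}}$. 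Because $\phi'=\phi$ off $\mathfrak g^{+}$, the difference of the $\KK_{\mathscr C^{+}}$-functionals collapses to $\KK_{\mathfrak g^{+}}(\vartheta|\phi)-\KK_{\mathfrak g^{+}}(\phi|\phi)\geq0$, the inequality holding since $\vartheta$ maximizes $\KK_{\mathfrak g^{+}}(\cdot|\phi)$ and $\phi$ restricted to $\mathfrak g^{+}$ is an admissible competitor. Hence the $+$ contribution is $\geq -|\mathrm{Error}^{\lambda,D}_{\mathscr C^{+}}(\sigma^{2})|-|\mathrm{Error}^{\lambda,D}_{\mathscr C^{+}}(\sigma^{3})|$, and the task reduces to bounding each error by $\ls\epsilon^{2}|\log\epsilon|^{-25}|\Gamma|$. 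For this I would feed into the estimate of \cref{L:BLayer}: the a priori Dirichlet energy of $\sigma\in\X(\Gamma)$, hence of $\sigma^{1},\sigma^{2}$ on $\mathscr C^{+}$ (via \cref{L:FBE}, \cref{L:>0}(4), \cref{L:INT2}); the Dirichlet energy of $\sigma^{3}$, controlled because $\vartheta$ solves the Euler--Lagrange equation of $\KK_{\mathfrak g^{+}}$ with boundary datum of size $\leq\ffrac{\pi}{6}$; and the cleanliness of $\Gamma$, i.e.\ the niceness bounds \cref{E:r1}--\cref{E:r4} and the regularity bounds \cref{E:rr1}--\cref{E:rr5} on $\delta(\Gamma)$ at the scales $\ffrac{\ell}{2},\ell,L$, which govern the field-dependent pieces of the error (including the contribution of the mass $\lambda$ and the summed cost of the sparse bad boxes met by $\mathscr C^{+}$). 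Combining the two signs and chaining $-\HH(\sigma^{3})\geq-\HH(\sigma^{2})-C\epsilon^{2}|\log\epsilon|^{-25}|\Gamma|$ with \cref{L:INT2} and $-\HH(\sigma^{1})\geq-\HH(\sigma)$ from \cref{L:>0}(4) yields \cref{E:Trans1}.

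For the decay statement, stationarity of $\vartheta$ for $\KK_{\mathfrak g^{+}}(\cdot|\phi)$ means it solves the nonlinear discrete elliptic equation $\sum_{y\sim x}\sin(\vartheta_y-\vartheta_x)=\frac14 m_x\sin(2\vartheta_x)$ on $\mathfrak g^{+}$ with boundary datum $\phi$ on $\pO\mathfrak g^{+}$, which behaves like a massive Laplace equation $(-\Delta+\frac12 m)\vartheta=0$; here $|\phi|\leq\ffrac{\pi}{6}$ on $\pO\mathfrak g^{+}$ (by the construction of $\mathscr F$, $\mathfrak f$, $\mathscr C^{+}$, \cref{C:E1}, and $\sigma^{2}\equiv e_1$ on the $\ffrac{L}{16}$-interior of $\mathscr D^{+}$) and $m_x\gtrsim\epsilon^{2}$ coarse-grained over boxes of side $\log^{90}L$, from cleanliness of $\Gamma$. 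The relaxation/maximizer analysis of \cref{S:Aux} (cf.\ the discussion in \cref{S:AuxRelax}) then gives uniqueness, the global bound $|\vartheta|<\ffrac{\pi}{8}$ (legitimizing the inversion), and exponential decay of $\vartheta$ towards $0$ away from $\pO\mathfrak g^{+}$ at rate $\sim\epsilon$. Since $\textrm{M}^{+}$ lies at distance $\gtrsim L\sim\epsilon^{-1}|\log\epsilon|^{4}$ from $\pO\mathfrak g^{+}$ — a consequence of \cref{C:E1} and the definition of $\mathfrak M(\Gamma)$ — one gets $|\vartheta_x|\ls\exp(-c|\log\epsilon|^{4})$ whenever $\dist(x,\textrm{M}^{+})<\ffrac{L}{5}$. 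Finally, inverting $\phi'_x=\Phi^{+}_x-\cos(\Phi^{+}_x)\gld_{\mathscr C^{+},x}$ with $|\phi'_x|=|\vartheta_x|$ this small and $\|\gld_{\mathscr C^{+}}\|_{\infty}\ll1$ gives $\Phi^{+}_x=\gld_{\mathscr C^{+},x}+O(|\vartheta_x|)+O(\|\gld_{\mathscr C^{+}}\|_{\infty}^{3})$, hence the claimed bound; $\Phi^{-}$ is handled identically about the angle $\pi$.

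The main obstacle is the error control in the second paragraph: $\mathrm{Error}^{\lambda,D}_{\mathscr C^{+}}$ of \cref{L:BLayer} is quadratic in the fast fluctuations of the configuration and couples them to $\gld_{\mathscr C^{+}}$ and $\alpha$, so obtaining the $\epsilon^{2}|\log\epsilon|^{-25}|\Gamma|$ bound requires carefully packaging the $H^{1}$ a priori bounds on $\sigma$ and on the maximizer together with all of the cleanliness estimates and the sparsity of bad boxes. A secondary difficulty is establishing existence, uniqueness and the $|\vartheta|<\ffrac{\pi}{8}$ bound for the maximizer of $\KK_{\mathfrak g^{+}}$ on a domain that may be geometrically irregular near the dirty set, which is precisely what makes inverting the change of variables legitimate deep inside the contour.
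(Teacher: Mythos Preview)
Your approach is essentially the same as the paper's: apply \cref{L:BLayer} on $\mathscr C^{\pm}$, use that $\vartheta$ maximizes $\KK_{\mathfrak g^{\pm}}$ so the $\KK$-difference is nonnegative, reduce to bounding $\mathrm{Error}^{\lambda,D}_{\mathscr C^{+}}$, and chain with \cref{L:INT2} and \cref{L:>0}(4). Two points of imprecision are worth flagging.

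First, your remark about ``the summed cost of the sparse bad boxes met by $\mathscr C^{+}$'' is off: the whole purpose of excising $\mathfrak D^{+}_{L/12}$ in the definition $\mathscr C^{+}=\mathfrak C^{+}\setminus\mathfrak D^{+}_{L/12}$ is precisely that every $Q\in\QQ_L'$ with $Q\cap\mathscr C^{+}\neq\varnothing$ has $\Xi_L(Q)=1$. The paper's error bound exploits this directly: \cref{L:TechBadRandomeness} localizes $\gld_{\mathscr C^{+}}$ to $\gld_Q$ up to negligible error, and then the niceness inequalities \cref{E:r2}, \cref{e:r3}, \cref{E:r3} on each such $Q$ give $\mathrm{Error}^{\lambda,D}_Q\ls\epsilon^{9/4}|Q|$ uniformly, with no bad-box summation needed. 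Your invocation of the regularity estimates \cref{E:rr1}--\cref{E:rr5} is unnecessary here (those are used elsewhere, e.g.\ in \cref{L:INT2} where bad boxes genuinely contribute).

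Second, your claim that $\textrm{M}^{+}$ lies at distance $\gtrsim L$ from $\pO\mathfrak g^{+}$ is false as stated: $\pO\mathfrak g^{+}$ has pieces coming from $\pO\Lambda_N$ and from $\mathfrak D^{+}_{L/12}$ which can be arbitrarily close to $\textrm{M}^{+}$. The paper splits the decay into two cases: \cref{L:Relax} handles points far from all of $\pO\mathfrak g^{+}$, while \cref{L:BBehavior} handles points near the zero-boundary-data portion $\pO\mathfrak g^{+}\cap[\pO\Lambda_N\cup\mathfrak D^{+}_{L/12}]$, using that the Feynman--Kac walk either exits through a zero and contributes nothing, or must travel distance $\gtrsim L$ to reach nonzero boundary data. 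You have the ingredients (you note $\phi=0$ on those boundary pieces), but the geometric statement needs this two-case argument; \cref{C:Contract} in the paper packages the combination.
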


\subsection{Modification 4}
The configuration $\sigma^{3}$ is almost what we want. We make a final modification into $\sigma^{\mathfrak C}$ so that the resulting configuration is close to $\pm e_1$ on the appropriate components of $\pO \textrm{M}$.

By construction
\[
\mathfrak N:= \left\{x \in \mathfrak C(\Gamma): \dist(x, \pO \mathfrak C\cap \Lambda_N)\geq \frac {4L}{5}\right\} \subset \mathfrak D_{\frac L{12}}^{+} \cup \mathfrak g^+ \cup  \mathfrak D_{\frac L{12}}^{-} \cup \mathfrak g^-
\]
and, obviously, $\textrm{M}$ is contained in $\mathfrak N$.
Let $(\mathfrak c_i)_{i \in [I]}$ denote  the connected components of $\mathfrak N$ and let $\textrm{M}_i = \mathfrak c_i \cap \textrm{M}(\Gamma)$.

To achieve the stated goal we interpolate in a way similar to that employed in \cref{S:Groundstates}.  For $x \in \mathscr C^{\pm}$ let 
\[
\tau_x=
\begin{cases}
\frac{\dist(x,\: \pO \textrm{M}_i)}{\sqrt{\ell}} \wedge 1 \quad \text{ if $x \in \mathfrak c_i$},\\
1 \quad \text{ otherwise.}
\end{cases}
\]

Let the angle $\theta_x$ be defined by  $\sigma^3_x= (\cos(\theta_x), \sin(\theta_x))$.
 If $\mathfrak c_i \subset \mathfrak C^{+}(\Gamma)$ we may take $\theta_x \in [-\frac \pi2, \frac \pi2]$ and then set
\[
\sigma_{i,x} =(\cos(\tau_x \theta_x), \sin(\tau_x \theta_x)).
\]
If $\mathfrak c_i \subset \mathfrak C^{-}(\Gamma)$ we may take for $\theta_{x} \in [\frac \pi2, \frac{3\pi}{2}]$ and then
define
\[
\sigma_{i, x}=(\cos(\tau_x [\theta_x-\pi]+ \pi), \sin(\tau_x[\theta_x- \pi]+\pi ).
\]
Finally we set
\[
\sigma^{\mathfrak C}_x=
\begin{cases}
\sigma_{i, x} \text{ if $x \in \mathfrak c_i \cap \mathfrak G$},\\
\sigma^{3}_x \text{ otherwise},
\end{cases}
\]

\begin{lemma}
\label{L:INT4}
There exists $\epsilon_0$ so that if $\epsilon< \epsilon_0$ and if $\Gamma$ is a clean contour.
\begin{equation}
\label{E:D3Summary}
\left|\HH_{\Lambda_N}(\sigma^{3}|{e_1} )-\HH_{\Lambda_N}(\sigma^{\mathfrak C}|{e_1})\right| \\
\ls \epsilon^{2} |\log \epsilon|^{-40} |\Gamma|.
\end{equation}
Moreover, if $x \in \pO \textrm{M}^{\pm}(\Gamma)$ then $\|\sigma^{\mathfrak C}_x \mp e_1\|_2\leq \epsilon |\log \epsilon|$.
\end{lemma}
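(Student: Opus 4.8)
The two configurations $\sigma^{3}$ and $\sigma^{\mathfrak C}$ differ only through the local rescaling of an angle by the factor $\tau$, so the plan is to localize the energy difference to the thin shell on which they disagree and to estimate the Dirichlet and random-field contributions separately; the one delicate point will be the random-field sum. By construction $\sigma^{\mathfrak C}_x=\sigma^{3}_x$ unless $x$ belongs to
\[
\mathfrak S:=\bigcup_{i}\bigl\{x\in\mathfrak c_i:\dist(x,\pO\textrm{M}_i)<\sqrt{\ell}\,\bigr\},
\]
and on $\mathfrak D^{\pm}_{\ffrac L{16}}$ both configurations equal $\pm e_1$, so $\sigma^{\mathfrak C}$ actually differs from $\sigma^{3}$ only on $\mathfrak S\cap(\mathfrak g^{+}\cup\mathfrak g^{-})$; using that $\textrm{M}(\Gamma)$ lies well inside $\mathfrak N$, together with \cref{C:LocRef} and \cref{C:E1}, this shell sits in the $\ffrac L8$-deep part of $\mathfrak C^{\pm}$ and at distance $<\ffrac L5$ from $\textrm{M}^{\pm}$. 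Counting, and using that $\Sp(\Gamma)$ is connected and $L$-measurable so $N^{L}_{\Gamma}\ls|\Gamma|/L^{d}$, one gets $|\mathfrak S|\ls N^{L}_{\Gamma}L^{d-1}\sqrt{\ell}\ls|\Gamma|\sqrt{\ell}/L$. Writing $\sigma^{3}_x$ in angular variables $\theta_x$ (from $0$ on $\mathfrak C^{+}$, from $\pi$ on $\mathfrak C^{-}$), the last two assertions of \cref{L:INT3} together with \cref{L:TechBadRandomeness} and the clean-contour bounds \cref{E:r2}, \cref{e:r3} give on a $1$-neighborhood of $\mathfrak S$ that $|\theta_x|\ls\|\gld_{\mathscr C^{\pm}}\|_{\infty}+\exp(-c\log^{4}\epsilon)$ and $|\nabla_e\theta|\ls\|\nabla\gld_{\mathscr C^{\pm}}\|_{\infty}+\exp(-c\log^{4}\epsilon)$, which at the scales $\ell\sim\epsilon^{-1}|\log\epsilon|^{-4}$, $L\sim\epsilon^{-1}|\log\epsilon|^{4}$ are $\ls\epsilon^{1/2}|\log\epsilon|^{C}$ and $\ls\epsilon|\log\epsilon|^{C}$ respectively.

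Since the two spin configurations agree off $\mathfrak S$,
\[
-\HH_{\Lambda_N}(\sigma^{3}|e_1)+\HH_{\Lambda_N}(\sigma^{\mathfrak C}|e_1)=-\tfrac12\sum_{e\cap\mathfrak S\neq\varnothing}\bigl[(\nabla_e\sigma^{3})^{2}-(\nabla_e\sigma^{\mathfrak C})^{2}\bigr]+\epsilon\sum_{x\in\mathfrak S}\alpha_x\,e_2\cdot(\sigma^{3}_x-\sigma^{\mathfrak C}_x).
\]
For the first sum, since the chord is at most the arc, $(\nabla_e\sigma^{3})^{2}\leq(\nabla_e\theta)^{2}$, while $(\nabla_e\sigma^{\mathfrak C})^{2}\leq(\nabla_e(\tau\theta))^{2}\ls(\nabla_e\theta)^{2}+\theta_x^{2}\|\nabla\tau\|_{\infty}^{2}$ with $\|\nabla\tau\|_{\infty}\ls\ell^{-1/2}$; summing over the $\ls|\mathfrak S|$ relevant edges and inserting the bounds above, the first sum is $\ls|\mathfrak S|\,\epsilon^{2}|\log\epsilon|^{C}\ls\epsilon^{5/2}|\log\epsilon|^{C}|\Gamma|$, which is $\ll\epsilon^{2}|\log\epsilon|^{-40}|\Gamma|$ once $\epsilon$ is small.

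The second sum is the crux: the naive bound $\epsilon\|\alpha\|_{\infty,\mathfrak S}\sum_{x\in\mathfrak S}|\theta_x|$ produces only a \emph{positive} power of $|\log\epsilon|$ and is useless, so one must integrate by parts against the Green function. Put $h_x:=e_2\cdot(\sigma^{3}_x-\sigma^{\mathfrak C}_x)=\sin\theta_x-\sin(\tau_x\theta_x)$, supported in $\mathfrak S\subset\mathfrak g^{\pm}$, hence strictly interior to $\mathscr C^{\pm}$. Since $\epsilon\alpha_x=[(-\Delta^{D}_{\mathscr C^{\pm}}+\lambda)\gld_{\mathscr C^{\pm}}]_x$ on $\mathscr C^{\pm}$ and both $h$ and $\gld_{\mathscr C^{\pm}}$ vanish on $\pO\mathscr C^{\pm}$,
\[
\epsilon\sum_{x}\alpha_x h_x=\sum_{e\cap\mathfrak S\neq\varnothing}\nabla_e\gld_{\mathscr C^{\pm}}\,\nabla_e h+\lambda\sum_{x\in\mathfrak S}\gld_{\mathscr C^{\pm},x}\,h_x.
\]
Now $|\nabla_e h|\ls|\nabla_e\theta|+|\theta|\,\|\nabla\tau\|_{\infty}$, so $\|\nabla h\|^{2}_{2,\mathfrak S}\ls|\mathfrak S|\bigl(\|\nabla\gld_{\mathscr C^{\pm}}\|_{\infty}^{2}+\|\gld_{\mathscr C^{\pm}}\|_{\infty}^{2}\ell^{-1}\bigr)$; combining with \cref{E:r3} (or the crude gradient bound) for $\|\nabla\gld_{\mathscr C^{\pm}}\|^{2}_{2,\mathfrak S}$, Cauchy--Schwarz, and $\lambda=L^{-2}\log^{8}L\sim\epsilon^{2}$, gives $\bigl|\epsilon\sum_x\alpha_x h_x\bigr|\ls|\mathfrak S|\,\epsilon^{2}|\log\epsilon|^{C}\ls\epsilon^{5/2}|\log\epsilon|^{C}|\Gamma|\ll\epsilon^{2}|\log\epsilon|^{-40}|\Gamma|$. (Equivalently, one re-applies the change of variables $\theta\mapsto\phi=\theta-\cos\theta\,\gld_{\mathscr C^{\pm}}$ to both configurations and invokes \cref{L:BLayer}: $\sigma^{3}$ then corresponds to the maximizer $\vartheta$ of $\KK_{\mathfrak g^{\pm}}$ and $\sigma^{\mathfrak C}$ to a competitor agreeing with $\vartheta$ off $\mathfrak S$, so the $\KK$-difference has a definite sign and the remaining errors are again controlled on $\mathfrak S$.) Together with the Dirichlet part this proves \eqref{E:D3Summary}.

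For the last assertion, note that $\tau$ vanishes on $\pO\textrm{M}_i$ for every $i$ (as $\dist(x,\pO\textrm{M}_i)=0$ there), and distinct components $\mathfrak c_i$ are at positive distance, so every $x\in\pO\textrm{M}^{\pm}(\Gamma)$ lies in the component it borders; hence $\sigma^{\mathfrak C}_x=(\cos0,\sin0)=e_1$ on $\pO\textrm{M}^{+}$ and $(\cos\pi,\sin\pi)=-e_1$ on $\pO\textrm{M}^{-}$, giving $\|\sigma^{\mathfrak C}_x\mp e_1\|_2=0\leq\epsilon|\log\epsilon|$ with room to spare. I expect the random-field estimate to be the only genuinely delicate step: the shell sum must be integrated by parts against the discrete Green function (equivalently, handled inside the $\KK$-variables of \cref{L:BLayer}), since the a priori termwise bound is off by positive powers of $|\log\epsilon|$; the rest is routine localization plus the scale bookkeeping $\ell\sim\epsilon^{-1}|\log\epsilon|^{-4}$, $L\sim\epsilon^{-1}|\log\epsilon|^{4}$.
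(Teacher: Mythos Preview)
Your proof is correct and follows essentially the same route as the paper's: localize to the thin shell where $\sigma^{3}$ and $\sigma^{\mathfrak C}$ differ, split the Hamiltonian difference into its Dirichlet-energy and random-field parts, and handle the random-field sum by writing $\epsilon\alpha=(-\Delta^{D}_{\mathscr C^{\pm}}+\lambda)\gld_{\mathscr C^{\pm}}$ and summing by parts (the paper likewise notes that the support being strictly interior kills the boundary terms). Your observation that the naive termwise bound on the random-field sum falls short by positive powers of $|\log\epsilon|$ and that the summation-by-parts is therefore essential is exactly the point; the paper does the same without pausing to remark on it. Your shell-volume bound $|\mathfrak S|\ls N^{L}_{\Gamma}L^{d-1}\sqrt{\ell}\ls |\Gamma|\sqrt{\ell}/L$ is in fact slightly sharper than the paper's cruder $|W|\ls\ell^{-1/2}|\Gamma|$, but either suffices, and both proofs land on $\epsilon^{9/4}|\Gamma|$ (or your $\epsilon^{5/2}|\log\epsilon|^{C}|\Gamma|$), well below the target $\epsilon^{2}|\log\epsilon|^{-40}|\Gamma|$.
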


Let us finish this section with the observation:
\begin{proof}[Proof of \cref{L:Collar}]
The statement of \cref{L:Collar} follows from the construction, collecting the estimates from \cref{E:Trans1} and \cref{E:D3Summary}. 
\end{proof}

\section{Gluing Configurations and the Proof of the Peierls Estimate}
\label{S:Peierls}
Given $\sigma \in \X(\Gamma)$,
the next step in our construction is to attach $\sigma^{\bar{\delta}(\Gamma)}$ as constructed in \cref{S:Groundstates} to a modification of the configuration $\sigma^{\mathfrak C(\Gamma)}$ as constructed from $\sigma$ in \cref{L:Collar}.

We shall say that $\Gamma$ is a
$\pm$ contour if $\Psi_z(\Gamma)=\pm 1$ on $\delta_{ext}(\Gamma)$ (recall that ${\delta_{ext}(\Gamma)}$ is the exterior boundary component of $\Gamma$ as defined below \cref{D:Contour}).  To obtain the desired configuration on $c(\Gamma)= \delta(\Gamma)\cup \Int(\Gamma)$, we distinguish whether $\Gamma$ is a $\pm$ contour.  

Restricting the spin configuration  $\sigma^{\mathfrak C(\Gamma)}$ to $\Lambda_N \backslash \bar{\delta}(\Gamma)$,   we produce a new spin configuration $\sigma^{*}$ on $\Lambda_N \backslash \bar{\delta}(\Gamma)$ as follows.  If $\Gamma$ is a $+$ contour, then on each interior component with $\Psi^{(0)}|_{\delta^i_{in}(\Gamma)}=-1$ we reflect all spins across the $e_2$-axis.  For each of the remaining components of $\Lambda_N\backslash \bar{\delta}(\Gamma)$ we keep the spin configuration fixed.  

We define
\[
\textrm{S}^{+}_{\Gamma, y}:=\textrm{S}^{+}_{\Gamma, y}(\sigma)=
\begin{cases}
\sigma^{*}_y \text{ if $y \in  \bar{\delta}(\Gamma)^c$}\\
\sigma^{\bar{\delta}(\Gamma)}_y \text{if $y \in \bar{\delta}(\Gamma)$}.
\end{cases}
\]
If instead $\Gamma$ is a $-$ contour, we reflect spins on interior components with $\Psi^{(0)}=1$ and use the reflection of $\sigma^{\bar{\delta}(\Gamma)}_y$ across the $e_2$ axis in place of  $\sigma^{\bar{\delta}(\Gamma)}_y$, calling the resulting configuration $\textrm{S}^{-}_{\Gamma}$.
The proof of the next lemma appears in \cref{S:PP}.
\begin{lemma}
\label{L:EComp}
There exists $\epsilon_0\in (0,1)$ so that for all $\epsilon \in (0, \epsilon_0)$ the following holds.  Suppose that $\Gamma$ is a clean contour and that $\sigma \in \X(\Gamma)$.
Then
\begin{equation}
-\HH_{\Lambda_N}(\textrm{S}^{\pm}_{\Gamma}(\sigma)|{e_1})+ \HH_{\Lambda_N}(\sigma|{e_1})\gtrsim \xi^2 \epsilon^{2} |\log \epsilon|^{-24}|\Gamma|
\end{equation}
\end{lemma}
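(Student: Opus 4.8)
The plan is to localise the energy gain of $\sigma\mapsto\textrm{S}^{\pm}_{\Gamma}$ in the solid core $\bar{\delta}(\Gamma)$, where forming $\textrm{S}^{\pm}_{\Gamma}$ amounts to overwriting $\sigma$ by the near--ground state $\sigma^{\bar{\delta}(\Gamma)}$ of \cref{L:GSd3}, and to show that this gain dominates the collar surgery cost produced by \cref{L:Collar}. First I would split
\[
-\HH_{\Lambda_N}(\textrm{S}^{\pm}_{\Gamma}|e_1)+\HH_{\Lambda_N}(\sigma|e_1)=\bigl[-\HH_{\Lambda_N}(\textrm{S}^{\pm}_{\Gamma}|e_1)+\HH_{\Lambda_N}(\sigma^{\mathfrak{C}}|e_1)\bigr]+\bigl[-\HH_{\Lambda_N}(\sigma^{\mathfrak{C}}|e_1)+\HH_{\Lambda_N}(\sigma|e_1)\bigr],
\]
where $\sigma^{\mathfrak{C}}=\sigma^{\mathfrak{C}(\Gamma)}$; by \cref{L:Collar}(3) the second bracket is $\geq-C\epsilon^2|\log\epsilon|^{-25}|\Gamma|$. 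Now $\textrm{S}^{\pm}_{\Gamma}$ equals $\sigma^{\bar{\delta}(\Gamma)}$ on $\bar{\delta}(\Gamma)$ and equals $\sigma^{*}$ --- obtained from $\sigma^{\mathfrak{C}}$ by reflecting whole connected components of $\Lambda_N\setminus\bar{\delta}(\Gamma)$ across the $e_2$--axis --- elsewhere; since $[\sigma_x-\sigma_y]^2$ and $\alpha_x\,e_2\cdot\sigma_x$ are invariant under a simultaneous $e_2$--reflection, every bond and field term supported in a single component of $\Lambda_N\setminus\bar{\delta}(\Gamma)$ cancels between the two configurations, so the first bracket equals
\[
-\HH_{\bar{\delta}(\Gamma)}\bigl(\sigma^{\bar{\delta}(\Gamma)}\,\big|\,\textrm{ext}\bigr)-\Bigl(-\HH_{\bar{\delta}(\Gamma)}\bigl(\sigma^{\mathfrak{C}}|_{\bar{\delta}(\Gamma)}\,\big|\,\textrm{ext}\bigr)\Bigr)+O\!\bigl(\epsilon^3|\log\epsilon|^{66}|\Gamma|\bigr).
\]
The error collects the $\ls|\Gamma|/L$ bonds straddling $\pO\bar{\delta}(\Gamma)$: since $\pO\bar{\delta}(\Gamma)\cap\Lambda_N\subset\textrm{M}(\Gamma)$ up to the discretisation in \eqref{e:scales}, on $\textrm{S}^{\pm}_{\Gamma}$ both endpoints of such a bond lie within $\epsilon|\log\epsilon|^{35}$ of $e_1$ by \cref{L:GSd3} and \cref{L:Collar}(2) (the reflections being chosen precisely to align $\sigma^{*}$ with $e_1$ there), and on $\sigma^{\mathfrak{C}}$ each such bond joins two sites of equal sign; this also justifies replacing the boundary data of $\bar{\delta}(\Gamma)$ by $\textrm{ext}$ in both Hamiltonians.

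By \cref{L:GSd3}, $-\HH_{\bar{\delta}(\Gamma)}(\sigma^{\bar{\delta}(\Gamma)}|\textrm{ext})\geq\sum_{Q\in\RR_{\ffrac{\ell}{2}}(\Gamma)}\textrm{E}_0(Q)-C\epsilon^2|\log\epsilon|^{-40}|\Gamma|$, so everything reduces to the suboptimality bound
\[
\sum_{Q\in\RR_{\ffrac{\ell}{2}}(\Gamma)}\textrm{E}_0(Q)+\HH_{\bar{\delta}(\Gamma)}\bigl(\sigma^{\mathfrak{C}}|_{\bar{\delta}(\Gamma)}\,\big|\,\textrm{ext}\bigr)\ \gtrsim\ \xi^2\epsilon^2|\log\epsilon|^{-24}|\Gamma|,
\]
which I would prove using that $\sigma^{\mathfrak{C}}\equiv\sigma$ on $\Sp(\Gamma)$. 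Discarding the nonnegative interblock bonds and using that $\textrm{E}_0$ of a cube is at most the sum of $\textrm{E}_0$ over its subcubes, the left side dominates $\sum_{Q_\ell}\bigl(\textrm{E}_0(Q_\ell)+\HH_{Q_\ell}(\sigma|_{Q_\ell}|\textrm{ext})\bigr)$ over $\ell$--cubes $Q_\ell\subset\bar{\delta}(\Gamma)$, a sum of nonnegative terms; it then suffices to harvest from the $\ell$--cubes inside $\Sp(\Gamma)$ on which $\sigma$ is bad. The contour definition forces that in each $L$--block of $\Sp(\Gamma)$ the product $\psi^{(1)}\psi^{(0)}$ is not identically $\pm1$ already on that block, and since $\psi^{(1)}$ cannot pass from $+1$ to $-1$ without vanishing there is an $\ell$--cube within $5\ell$ of the block (hence inside $\bar{\delta}(\Gamma)$) on which $\EE_{Q_\ell}(\sigma)>\epsilon^2|\log\epsilon||Q_\ell|$ or $|\sigma(Q_\ell)\cdot e_1|<1-\xi$. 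Each such cube is shared by $O(1)$ distinct $L$--blocks, so there are $\gtrsim|\Gamma|/L^d$ of them, and by the regularity of $\delta(\Gamma)$ at scale $\ell$ (part of $\Gamma$ being clean) all but an $o(1)$ fraction are disorder--good; passing between the scales $\ell$ and $\ffrac{\ell}{2}$ only affects constants.

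For a bad, disorder--good cube $Q_\ell$ the crucial input is the local deficit $\textrm{E}_0(Q_\ell)+\HH_{Q_\ell}(\sigma|_{Q_\ell})\gtrsim\xi^2\epsilon^2\ell^d$, and this is where \cref{L:BLayer} enters decisively. After the change of variables $\theta\mapsto\phi=\theta-\cos(\theta)\,g^{N}_{Q_\ell}$ one has $-\HH_{Q_\ell}(\sigma)=\KK_{Q_\ell}(\phi)+\mathrm{Err}$ with $|\mathrm{Err}|$ controlled by $\EE_{Q_\ell}(\sigma)$ and the good--box bounds on $\|g^{N}\|_\infty,\|\nabla g^{N}\|_\infty$, while $\max\KK_{Q_\ell}=\tfrac14\sum_x m_x$ (attained at $\phi\equiv0$) and $\textrm{E}_0(Q_\ell)=\tfrac14\sum_x m_x+o(\epsilon^2\ell^d)$ with $\sum_x m_x\geq A\epsilon^2\ell^d$. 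If $\EE_{Q_\ell}(\sigma)$ is large one uses $\KK_{Q_\ell}(\phi)\leq\tfrac14\sum_x m_x-\sum_e(1-\cos\nabla_e\phi)$ together with $\sum_e(1-\cos\nabla_e\phi)\gtrsim\EE_{Q_\ell}(\phi)\gtrsim\EE_{Q_\ell}(\sigma)-C\epsilon^2\ell^d\gtrsim\epsilon^2|\log\epsilon|\ell^d$ (the angular increments remaining $\leq\pi+o(1)$ in the representation of \cref{L:BLayer}, with a short argument for the few exceptional edges); if $\EE_{Q_\ell}(\sigma)$ is small the spins are essentially aligned, so $|\sigma(Q_\ell)|^2\geq1-C\ell^{2-d}\EE_{Q_\ell}(\sigma)=1-o(1)$, and $|\sigma(Q_\ell)\cdot e_1|<1-\xi$ then forces $\cos^2(\phi_x)\leq1-\tfrac{\xi}{2}$ on all but an $o(1)$ fraction of $x$, whence $\tfrac14\sum_x m_x\cos^2(\phi_x)\leq\tfrac14\sum_x m_x-c\xi\epsilon^2\ell^d$; in either case $\textrm{E}_0(Q_\ell)+\HH_{Q_\ell}(\sigma|_{Q_\ell})\gtrsim\xi^2\epsilon^2\ell^d$. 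Summing over the $\gtrsim|\Gamma|/L^d$ bad, disorder--good cubes and using $(\ell/L)^3\sim|\log\epsilon|^{-24}$ gives the suboptimality bound; combining it with \cref{L:GSd3}, the straddling error, and \cref{L:Collar}(3), and noting that for $\epsilon<\epsilon_0(\xi)$ all of the $|\log\epsilon|^{-25}$, $|\log\epsilon|^{-40}$ and $\epsilon|\log\epsilon|^{66}$ terms are much smaller than $\xi^2|\log\epsilon|^{-24}$, completes the proof. The main obstacle is precisely this local deficit: turning the bare "badness" of an $\ell$--cube into a quantitative energy loss of order $\xi^2\epsilon^2\ell^d$ requires the full error analysis of \cref{L:BLayer} and the good--box control of $g^{N}$, whereas the reflection invariance, the negligibility of the $\pO\bar{\delta}(\Gamma)$ bonds and the combinatorics along $\Sp(\Gamma)$ are routine once the nested regions $\Sp(\Gamma)\subset\bar{\delta}(\Gamma)\subset\delta(\Gamma)$ and $\textrm{M}(\Gamma)$ are kept straight.
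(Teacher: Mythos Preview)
Your proposal is correct and follows essentially the same route as the paper: split off the collar cost via \cref{L:Collar}, use reflection invariance to reduce to $\bar\delta(\Gamma)$, invoke \cref{L:GSd3} to replace $-\HH_{\bar\delta(\Gamma)}(\sigma^{\bar\delta(\Gamma)}|\textrm{ext})$ by $\sum_Q\textrm{E}_0(Q)$, pass to $\ell$--cubes by subadditivity of $\textrm{E}_0$, count $\gtrsim N_L^{\Sp(\Gamma)}$ bad cubes, and extract a local deficit $\gtrsim\xi^2\epsilon^2\ell^3$ on each disorder--good one.

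Two small differences are worth flagging. First, the paper handles the alignment issue you gloss over (the bad $Q_\ell(r')$ witnessing $\psi_z=0$ need not lie in $\QQ_\ell$) by introducing the shifted family $\QQ_\ell^s$ and splitting into three sets $A_1,A_2,A_3$; your remark about ``passing between scales $\ell$ and $\ffrac\ell2$'' addresses the $\RR_{\ffrac\ell2}$ bookkeeping but not this alignment point, which needs one extra sentence. Second, for the local deficit the paper uses the cruder identity \cref{H:FML} of \cref{L:FBE} in the high--energy case (reading off $-\HH_Q(\sigma)\leq\tfrac12\EE_Q(g^N)-\tfrac12\EE_Q(\sigma)+O(\epsilon|\alpha(Q)||Q|)$ directly) and only invokes \cref{L:BLayer} for the bad--average case $A_3$; your route through $\KK_{Q_\ell}$ for both cases works equally well and is arguably cleaner, at the price of tracking the $\mathrm{Error}^{\lambda,N}$ term a bit more carefully.
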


\subsection{The Peierls Estimate}
We are now ready to derive a Peierls estimate for this system.
By definition, if $\sigma \in \mathbb X(\Gamma)$, the restriction of
$\Psi_z(\sigma)$ to each of the components $\delta_{ext}(\Gamma),
\delta^{i}_{in}(\Gamma)$ is constant. 
Let $ R^{\pm} (\Gamma)= R^{\pm}\backslash \delta(\Gamma)$.

In order to deal with contours adjacent to $\Lambda_N^c$ let $\delta_N(\Gamma)= \delta(\Gamma)\cap \Lambda_N$.  Given that $\Gamma$ is a clean $+$ contour, we say that a 
spin configuration $\sigma_{\delta_N(\Gamma)^c}\in \mathcal S_{\delta_N(\Gamma)^c}$ is compatible with
$\Gamma$ if $\sigma_{\delta_N(\Gamma)^c}|_{\Lambda_N^c} \equiv e_1$ and if 
\[
\mu_{\delta_N(\Gamma)}^{\sigma_{\delta_N(\Gamma)^c}}\left(\psi_z(\sigma'_{\delta_N(\Gamma)})=
\psi_{\Gamma}(z) \text{ for } z \in \Sp(\Gamma) \text{ and }
\Psi_z(\sigma'_{\delta_N(\Gamma)})= \pm 1\text{ for } z \in
R^{\pm}(\Gamma)\right)\neq 0
\] 

Note here that $
\Psi_z$ implicitly takes as an argument
the extended configuration $(\sigma'_{\delta_N(\Gamma)},
\sigma_{\delta_N(\Gamma)^c})$ although we will suppress
this detail below.

For any such $\sigma_{\delta_N(\Gamma)^c}$, let
\[ W(\Gamma;
\sigma_{\delta_N(\Gamma)^c}):=
 \frac{\mu_{\delta_N(\Gamma)}^{\sigma_{\delta_N(\Gamma)^c}}\left(\psi_z(\sigma'_{\delta_N(\Gamma)})=
 \psi_{\Gamma}(z) \text{ for } z \in \Sp(\Gamma) \text{ and }
 \Psi_z(\sigma'_{\delta_N(\Gamma)})= \pm 1\text{ for } z \in
 R^{\pm}(\Gamma)\right)}{\mu_{\delta_N(\Gamma)}^{ \mathscr R^+ \cdot \sigma^1_{\delta_N(\Gamma)^c}}\left( \Psi_z(\sigma'_{\delta_N(\Gamma)})=
 1\text{ for } z \in \delta_N(\Gamma)\right)}.
\]
Here
$\sigma^1_{\delta_N(\Gamma)^c}$ denotes the
boundary condition obtained from \cref{L:>0} and $\mathscr R^+ \cdot \sigma^1_{\delta_N(\Gamma)}$ is obtained by making a global reflection of $\sigma^1$ on each component of $R^- \cap [c(\Gamma)\backslash \delta(\Gamma)]$.
Let \[ \|W(\Gamma; \cdot)\| =
\sup_{\{\sigma_{\delta_N(\Gamma)^c} \text{ compatible}\}} W(\Gamma;
\sigma_{\delta_N(\Gamma)^c}). \] 
Notions for $-$ contours are
defined similarly with the provisos for $+$ and $-$ reversed.  

\begin{lemma}
\label{L:Key} There exist $\delta, \epsilon_0, \beta_0, C>0$ so that if
$\epsilon< \epsilon_0$, $\beta> \beta_{\epsilon}> \beta_0$, $0< \xi<
\delta$ and $\Gamma$ is a clean contour
\[ \|W(\Gamma;
\cdot)\| \leq e^{-q |\Gamma|} \] where
\[
q =
C \beta \xi^2 \epsilon^2 |\log \epsilon|^{-24}.
\]
 \end{lemma}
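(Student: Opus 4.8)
\textbf{Proof plan for \cref{L:Key}.}
The plan is to deduce the weight bound from \cref{L:EComp} by the standard Peierls/Pirogov--Sinai argument adapted to continuous spins: push every configuration that carries the contour $\Gamma$ forward by the surgery map $T:=\textrm{S}^{\pm}_{\Gamma}$ of \cref{S:Peierls} onto a configuration with no contour in $\delta_N(\Gamma)$, and play the energetic gain off against the entropic cost (backward multiplicity and change--of--variables Jacobians). Writing $\mu^{bc}_{\delta_N(\Gamma)}(A)=Z_{\delta_N(\Gamma)}(bc)^{-1}\int_A\prod_x\textd\nu(\sigma_x)\,e^{-\beta\HH_{\delta_N(\Gamma)}(\cdot|bc)}$, the quantity $W(\Gamma;\sigma_{bc})$ is a ratio of a constrained to an unconstrained Gibbs probability in $\delta_N(\Gamma)$ carrying, by design, the two boundary data $\sigma_{bc}$ and $\mathscr R^{+}\sigma^1_{bc}$; these boundary data differ only by the reflections of \cref{L:>0} on $\mathfrak A^{\pm}$ and on interior components, and those reflections are energy--nonincreasing except on $O(1)$--many edges per $L$--box of $\Sp(\Gamma)$ sitting in $e_1$--like regions, so the ratio of the two full partition functions $Z_{\delta_N(\Gamma)}(\mathscr R^{+}\sigma^1_{bc})/Z_{\delta_N(\Gamma)}(\sigma_{bc})$ is $\exp\bigl(O(\beta\epsilon^3|\log\epsilon|^{-4}|\Gamma|)\bigr)$ --- negligible against the target $e^{-q|\Gamma|}$ (alternatively, this factor simply cancels in the telescoping that derives \cref{L:Contours} from \cref{L:Key}). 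It therefore suffices to establish the unnormalized inequality $Z_{\delta_N(\Gamma)}(\sigma_{bc})(\mathbb X_\Gamma)\leq e^{-q'|\Gamma|}\,Z_{\delta_N(\Gamma)}(\mathscr R^{+}\sigma^1_{bc})(\{\Psi\equiv 1\})$ with $q'\asymp\beta\xi^2\epsilon^2|\log\epsilon|^{-24}$, where $\mathbb X_\Gamma$ is the numerator event.

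For this I would feed in three ingredients. \emph{Energy}: by \cref{L:EComp}, $-\HH_{\Lambda_N}(T\sigma|e_1)+\HH_{\Lambda_N}(\sigma|e_1)\gtrsim\xi^2\epsilon^2|\log\epsilon|^{-24}|\Gamma|$ for every $\sigma\in\X(\Gamma)$; since $T$ alters $\sigma$ only within an $O(L)$--neighbourhood of $\Sp(\Gamma)$ --- the changes outside $\delta_N(\Gamma)$ being exactly the passage $\sigma_{bc}\mapsto\mathscr R^{+}\sigma^1_{bc}$, and whole--component reflections being energy--neutral up to seams lying inside $\bar\delta(\Gamma)\subset\delta_N(\Gamma)$ --- the same gain holds for $\HH_{\delta_N(\Gamma)}(\cdot|\cdot)$, so $e^{-\beta\HH_{\delta_N(\Gamma)}(\sigma|\sigma_{bc})}\leq e^{-\beta\HH_{\delta_N(\Gamma)}(T\sigma|\mathscr R^{+}\sigma^1_{bc})}\,e^{-c\beta\xi^2\epsilon^2|\log\epsilon|^{-24}|\Gamma|}$. \emph{Jacobian}: $T$ is a composition of global reflections (which preserve $\prod\textd\nu$), the diffeomorphisms $\theta_x\mapsto\theta_x-\cos(\theta_x)g_x$ of \cref{S:Groundstates,,S:Collar} whose Jacobian $\prod_x(1+\sin(\theta_x)g_x)$ lies in $e^{\pm O(|\Gamma|)}$ on a clean contour (using $\|g\|_\infty\ll 1$ from \eqref{E:r2} and the regularity of $\gld_{\mathscr C^{\pm}}$), and the interpolations $\theta_x\mapsto\tau_x\theta_x$ on collars of width $\sqrt\ell$, whose Jacobian $\prod\tau_x$ is $e^{-O(|\Gamma|)}$; the sites with $\tau_x=0$, where $T$ is not injective, only enlarge the denominator's configuration space and are harmless. \emph{Multiplicity}: every step except Modification~1 and the choice of which interior components to reflect is a deterministic function of its input, so by \cref{L:>0}(5) at most $C^{|\Gamma|}$ configurations $\sigma\in\X(\Gamma)$ map to a given image. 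Combining, $Z_{\delta_N(\Gamma)}(\sigma_{bc})(\mathbb X_\Gamma)\leq C^{|\Gamma|}\,e^{O(|\Gamma|)}\,e^{-c\beta\xi^2\epsilon^2|\log\epsilon|^{-24}|\Gamma|}\,Z_{\delta_N(\Gamma)}(\mathscr R^{+}\sigma^1_{bc})(\{\Psi\equiv 1\})$ --- \emph{provided} $T\sigma$ really lies in $\{\Psi\equiv 1\}$ with the asserted exterior data --- and choosing $\beta\geq\beta_\epsilon:=C_0\,\xi^{-2}\epsilon^{-2}|\log\epsilon|^{24}$ absorbs $C^{|\Gamma|}e^{O(|\Gamma|)}$ into half the exponent, giving $q=\tfrac{c}{2}\beta\xi^2\epsilon^2|\log\epsilon|^{-24}$.

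The main obstacle is the parenthetical hypothesis just used: that $T\sigma=\textrm{S}^{\pm}_{\Gamma}(\sigma)$ genuinely lands in the denominator event, i.e.\ $\Psi_z(T\sigma)=+1$ for \emph{every} $z\in\delta_N(\Gamma)$ and $T\sigma|_{\delta_N(\Gamma)^c}=\mathscr R^{+}\sigma^1_{\delta_N(\Gamma)^c}$. Proving $\Psi\equiv 1$ amounts to checking, block by block, that no new contour is born after the four modifications and the gluing; this is exactly where one invokes the quantitative closeness $1-\sigma^{\bar\delta(\Gamma)}_y\cdot e_1\ls\sqrt\epsilon\log^{30}\epsilon$ and the Dirichlet--energy control $\EE_{Q_\ell}(\sigma^{\bar\delta(\Gamma)})\ls\epsilon^2|\log\epsilon||Q_\ell|$ from \cref{L:GSd3}, together with $\|\sigma^{\mathfrak C}_x\mp e_1\|_2\leq\epsilon|\log\epsilon|$ on $\pO\textrm{M}^{\pm}$ from \cref{L:Collar} and the energy bounds of \cref{S:Energetics}, to conclude $\psi^{(0)}_y=1$ and $\psi^{(1),\xi}_y=+1$ throughout once $\epsilon$ is small relative to $\xi$; the compatibility of the phase labels across the seam $\pO\textrm{M}$ between the bulk piece $\sigma^{\bar\delta(\Gamma)}$ and the collar piece $\sigma^{\mathfrak C}$ is precisely what properties~(2) of those two lemmas were arranged to provide. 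Identifying the exterior data is then the routine Pirogov--Sinai bookkeeping of which interior components of $\Lambda_N\setminus\bar\delta(\Gamma)$ get reflected, made mildly asymmetric by the distinguished $e_1$ boundary condition; a secondary nuisance --- and the reason the denominator carries the reflected, Modification--1 datum $\mathscr R^{+}\sigma^1_{bc}$ rather than $\sigma_{bc}$ itself --- is honestly accounting for the few edges near $\pO\delta_N(\Gamma)$ at which the surgery reaches outside $\delta_N(\Gamma)$, which is controlled because on a clean contour these edges live in $e_1$--like regions and each costs only $O(\epsilon^2)$.
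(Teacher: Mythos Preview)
Your overall architecture is right --- feed the energy gain of \cref{L:EComp} into a numerator/denominator comparison and pay an entropy cost that is beaten by taking $\beta$ large --- but the Jacobian/pushforward implementation has a genuine gap. The surgery map $T=\textrm{S}^{\pm}_{\Gamma}$ is \emph{not} a finite-to-one diffeomorphism on $\X(\Gamma)$: on $\bar\delta(\Gamma)$ it outputs the fixed near-ground-state $\sigma^{\bar\delta(\Gamma)}$ of \cref{S:Groundstates}, which depends only on the randomness and not on $\sigma|_{\bar\delta(\Gamma)}$ at all; and Modification~3 of \cref{S:Collar} replaces $\sigma^2$ on $\mathfrak g^{\pm}$ by the \emph{maximizer} $\vartheta$ of $\KK_{\mathfrak g^{\pm}}(\cdot|\phi)$, which is determined by the boundary datum $\phi|_{\pO\mathfrak g^{\pm}}$ alone. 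In both places $T$ collapses a set of positive Liouville measure to a single point, so there is no Jacobian to compute and the pushforward of the Gibbs measure by $T$ is singular. Your decomposition of $T$ into ``reflections, the angular changes of variables $\theta\mapsto\theta-\cos(\theta)g$, and interpolations $\theta\mapsto\tau\theta$'' simply omits these two replacement steps, and they are precisely where the map fails to be locally injective.

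The paper avoids this issue entirely by \emph{not} treating $T$ as a change of variables. Instead, given any $\sigma'$ in the numerator event it constructs the single configuration $\textrm{S}(\sigma')$ and lower-bounds the denominator by the Gibbs probability of the $\epsilon^3$-ball
\[
F_{\sigma'}=\bigl\{\sigma:\ \|\sigma_x-\textrm{S}(\sigma')_x\|_2\le\epsilon^3\ \forall x\in\delta_N(\Gamma)\bigr\}.
\]
On this ball the Boltzmann weight differs from that of $\textrm{S}(\sigma')$ by at most $e^{C\beta\epsilon^3|\Gamma|}$ (using \eqref{E:rr4} to control $\epsilon\sum|\alpha_x|$), while the ball's volume is $(c\epsilon^3)^{|\delta_N(\Gamma)|}=e^{-C|\log\epsilon|\,|\Gamma|}$; this is the entropy term $g(\epsilon)=C_2|\log\epsilon|$ in the paper's proof. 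The energy gain from \cref{L:EComp} then gives $f(\epsilon)=C_1\xi^2\epsilon^2|\log\epsilon|^{-24}$, and $\beta>\beta_\epsilon$ makes $\beta f(\epsilon)$ dominate $g(\epsilon)$. So the fix is to abandon the Jacobian bookkeeping and use the small-ball lower bound on the denominator; your discussion of why $\textrm{S}(\sigma')$ lands in $\{\Psi\equiv1\}$ remains relevant and correct.
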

\begin{proof}
Assume for concreteness that $\Gamma$ is a $+$ contour.
Given a configuration $\sigma_{\delta_N(\Gamma)^c}$ compatible with $\Gamma$ let $\sigma'_{\delta_N(\Gamma)} \in \mathcal S_{\delta_N(\Gamma)}$ so that $\sigma':=(\sigma'_{\delta_N(\Gamma)}, \sigma_{\delta_N(\Gamma)^c}) \in \X(\Gamma)$.  Let $\textrm{S}^{+}_{\Gamma}(\sigma')$ denote the output of our construction from just above \cref{L:EComp}. Dropping the super/subscripts,
\[
\textrm{S}(\sigma')|_{\delta_N(\Gamma)^c} =  \mathscr R^+ \cdot \sigma^1_{\delta_N(\Gamma)^c}
\]
where $\sigma^1$ is the first modification of $\sigma'$ as defined in \cref{S:Collar}.  Note that 
$\textrm{S}(\sigma')|_{\delta_N(\Gamma)}$ is an element of the event in the denominator of  $W(\Gamma, \sigma_{\delta_N(\Gamma)^c})$.

Let
\[
\textrm{F}_{\sigma'}=\{\sigma \in \mathcal S_{\delta_{N}(\Gamma)}: \|\sigma_x-\textrm{S}(\sigma')_x\|_2 \leq \epsilon^3 \:\: \forall x \in \delta_N(\Gamma)\}.
\]
Then
\[
\mu_{\delta_N(\Gamma)}^{ \mathscr R^+ \cdot \sigma^1_{\delta_N(\Gamma)^c}}\left( \Psi_z(\sigma'_{\delta_N(\Gamma)})=1 \text{ for } z \in \delta_N(\Gamma) \right) \geq \mu_{\delta_N(\Gamma)}^{ \mathscr R^+ \cdot \sigma^1_{\delta_N(\Gamma)^c}}\left(\textrm{F}\right).
\]
Using \eqref{E:rr4}, we have
\[
\epsilon \sum_{x\in \delta(\Gamma)} |\alpha_x| \ls |\Gamma|,
\]
Using this bound, it follows from the definition of $F_{\sigma'}$ that
\[
|-\HH(\textrm{S}|_{\delta_N(\Gamma)}|  \mathscr R^+ \cdot \sigma^1_{\delta_N(\Gamma)^c})+ \HH(\sigma| \mathscr R^+ \cdot \sigma^1_{\delta_N(\Gamma)^c})| \ls \epsilon^{ 3}|\Gamma|.
\] 

By \cref{L:EComp}, we therefore have
\[
W(\Gamma, \sigma_{\delta(\Gamma)^c}) \leq e^{[-\beta f(\epsilon)+ g(\epsilon)]|\Gamma|}
\]
where
\[
f(\epsilon)= C_1\xi^2\epsilon^2|\log \epsilon|^{-24}
\]
accounts for internal energy difference between the numerator and denominator and
\[
g(\epsilon)=C_2|\log \epsilon|
\]
accounts for the entropy difference.
The lemma follows immediately.
\end{proof}

\begin{proof}[Proof of \cref{L:Contours}] Let $N$ be fixed and consider the event $\mathbb
X(\Gamma_1^*, \dotsc, \Gamma_m^*, \Gamma_{m+1},\dotsc, \Gamma_{m+n})$ where the contours are assumed compatible with the boundary condition ${e_1}$.  Then we claim \[
\mu_{\Lambda_N}^{\ra}(\mathbb X(\Gamma_1^*, \dotsc, \Gamma_m^*,
\Gamma_{m+1},\dotsc, \Gamma_{m+n})) \leq \prod_{i=1}^m c_d^{|\Gamma_{i}|} \|W(\Gamma_i;
\cdot)\| \] 
where $c_d$ is the constant from \cref{L:>0}.
Once this is justified, the Lemma is proved by
application of \cref{L:Key}.

The proof of this claim proceeds by induction on $m$.  Interpreting
an empty product as $1$, the case $m= 0$ there is nothing to prove,
so we proceed to the induction step. Suppose the claim is true for
any $*$-compatible system with $m=k$ clean contours and $n$ dirty
contours.  Given a set $\{\Gamma_1^*, \dotsc, \Gamma_{k+1}^*,
\Gamma_{k+2},\dotsc, \Gamma_{k+n+1})\}$ of $*$-compatible contours
and reordering as necessary, we may assume \[ c(\Gamma_{k+1}) \cap
\cup_{i=1}^k \Sp\{\Gamma_i\} = \varnothing. \]

Assume for concreteness that $\Gamma_{k+1}$ is a $+$ contour.  The
argument in the case of a $-$ contour proceeds in a similar manner.
Let $\{\tilde \Gamma_1, \dotsc, \tilde \Gamma_r\}$ denote the set of
contours among $\{\Gamma_{m+1},\dotsc, \Gamma_{m+n}\}$ with
$\delta(\tilde{\Gamma}_i) \subset \Int(\Gamma_{k+1})$ and $\{
\Gamma'_1, \dotsc, \Gamma'_{n-r}\}$ denote the rest.  Then we have
\[
\mathbb X(\Gamma_1^*, \dotsc, \Gamma_{k+1}^*, \Gamma_{k+2},\dotsc,
\Gamma_{k+n+1})= \mathbb X(\Gamma_1^*, \dotsc, \Gamma_k^*,
\Gamma'_{1},\dotsc, \Gamma'_{n-r}) \cap \mathbb X(\Gamma_{k+1}^*,
\tilde \Gamma_1, \dotsc, \tilde \Gamma_r) \]

Using the DLR equations
we have
\begin{multline} \mu_{\Lambda_N}^{\ra}(\mathbb X(\Gamma_1^*,
\dotsc, \Gamma_{k+1}^*, \Gamma_{k+2},\dotsc, \Gamma_{k+n+1}))\\= \left\langle \mathbf 1_{\mathbb X(\Gamma_1^*, \dotsc, \Gamma_k^*,
\Gamma'_{1},\dotsc, \Gamma'_{n-r})} 
\langle \mathbf 1_{\mathbb
X(\Gamma_{k+1}^*,\tilde{\Gamma}_{1},\dotsc,
\tilde{\Gamma}_{r})}\rangle_{c_N(\Gamma_{k+1})}^{\sigma_{c^c_N(\Gamma_{k+1})}} \right
\rangle_{{N}}^\ra.
\end{multline}
Note here that 
\[
\langle \mathbf 1_{\mathbb
X(\Gamma_{k+1}^*,\tilde{\Gamma}_{1},\dotsc,
\tilde{\Gamma}_{r})}\rangle_{c_N(\Gamma_{k+1})}^{\sigma_{c^c_N(\Gamma_{k+1})}} 
\]
may be $0$ depending on $\sigma_{c^c_N(\Gamma_{k+1})}$.

If $\Gamma= (\Sp(\Gamma), \psi_{\Gamma})$, let $-\Gamma=
(\Sp(\Gamma), -\psi_{\Gamma})$ and let \[
T_{\Gamma_{k+1}}(\tilde{\Gamma}_{\ell})= \begin{cases}
\tilde{\Gamma}_{\ell} \text{ if $\delta_{ext}(\tilde{\Gamma}_{\ell})
\subset \delta^+_{in}(\Gamma_{k+1}))$}\\ -\tilde{\Gamma}_{\ell}
\text{ otherwise}. \end{cases} \] 

Recall the map $\sigma\mapsto \sigma^1$ from  \cref{L:>0}.  It is straightforward to show that
\begin{multline}
\label{E:BoundGood}
\langle \mathbf 1_{\mathbb
X(\Gamma_{k+1}^*,\tilde{\Gamma}_{1},\dotsc,
\tilde{\Gamma}_{r})}\rangle_{c_N(\Gamma_{k+1})}^{\sigma_{c^c_N(\Gamma_{k+1})}}\\ \leq
\|W(\Gamma_{k+1}; \cdot)\|
\langle \mathbf 1_{\mathbb
X(T_{\Gamma_{k+1}}(\tilde{\Gamma}_{1}),\dotsc,
T_{\Gamma_{k+1}}(\tilde{\Gamma}_{r}))}\rangle_{c_N(\Gamma_{k+1})}^{\sigma^1_{c^c_N(\Gamma_{k+1})}}.
\end{multline}

Because the map is at most $c_d^{|\Gamma|}$-to-$1$  for some \textit{universal} constant $c_d>0$,  (see \cref{L:>0}), 
\begin{multline}
\left\langle \mathbf 1_{\mathbb
X(\Gamma_1^*, \dotsc, \Gamma_k^*, \Gamma'_{1},\dotsc, \Gamma'_{n-r})}
\langle \mathbf 1_{\mathbb
X(T_{\Gamma_{k+1}}(\tilde{\Gamma}_{1}),\dotsc,
T_{\Gamma_{k+1}}(\tilde{\Gamma}_{r}))}\rangle_{c_N(\Gamma_{k+1})}^{\sigma_{c^c_N(\Gamma_{k+1})}}
\right\rangle_{N}^\ra\\
 \leq c_d^{|\Gamma_{k+1}|}  \|W(\Gamma_{k+1}; \cdot)\| \mu_{\Lambda_N}^{\ra}(\mathbb
X(\Gamma_1^*, \dotsc,  \Gamma_{k}^*, \Gamma'_{1},\dotsc,
\Gamma'_{n-r},T_{\Gamma_{k+1}}(\tilde{\Gamma}_{1}),\dotsc,
T_{\Gamma_{k+1}}(\tilde{\Gamma}_{r})))
\end{multline}
and the induction
step is proved. \end{proof}

\section{Energetic Estimates}
This section is the engine room for the entire paper, see in particular \cref{L:BLayer}.
\label{S:Energetics}
\begin{proposition}
\label{L:FBE}
Let us suppose that $L' \in \N$ is fixed.  Then for any $\lambda\geq 0$ and any cube $Q$ of sidelength $L'$
\begin{multline}
\label{H:FML}
 -\HH_{Q}(\sigma)= \frac 12 \left[\EE_{Q}(\gln_{Q})-\EE_{Q}(\sigma \cdot e_1)- \EE_{Q}(\sigma\cdot e_2 -\gln_{Q})\right]\\
+\lambda \sum_{x \in Q} \gln_{Q, x} e_2 \cdot \sigma_x+O(\epsilon |\alpha(Q)||Q|).
 \end{multline}
 Similarly, for any finite region $R\subset \Z^d$ and any $\lambda>0$
 \begin{multline}
 -\HH_{R}(\sigma|\sigma_0) \label{H:DML}= -\frac 12\sum_{e \cap R\neq \varnothing} \left[\nabla_e\sigma \cdot e_1\right]^2+ \left[\nabla_e\sigma \cdot e_2 -\nabla_e\gld_{R})\right]^2 \\
 + \frac 12 \sum_{e \cap R\neq \varnothing}\left[\nabla_e\gld_{R})\right]^2+ \lambda \sum_{x \in R} \gld_{R, x} e_2 \cdot \sigma_x +\sum_{\stackrel{y \in \pO R}{x \sim y, x \in \pI R}} \gld_{x, R}e_2 \cdot \sigma_y.
\end{multline}
\end{proposition}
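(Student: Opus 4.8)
The plan is to prove both identities by the same mechanism: split the energy into its horizontal and vertical components, substitute into the linear (random-field) term the elliptic equation satisfied by the auxiliary field, and complete the square. The only genuine differences are a harmless bulk error term in the Neumann identity and the bookkeeping of the $\pI R$--$\pO R$ edges in the Dirichlet identity. Concretely, first I would write $\sigma_x=(\sigma_x\cdot e_1)e_1+(\sigma_x\cdot e_2)e_2$ and use $[\nabla_e\sigma]^2=[\nabla_e\sigma\cdot e_1]^2+[\nabla_e\sigma\cdot e_2]^2$, i.e. additivity of $\EE_Q$ over the two components. The horizontal part then contributes exactly the $-\tfrac12\EE_Q(\sigma\cdot e_1)$ term of \eqref{H:FML} (resp. the $-\tfrac12\sum_e[\nabla_e\sigma\cdot e_1]^2$ term of \eqref{H:DML}), so everything reduces to the scalar functional $-\tfrac12\EE(b)+\epsilon\sum_x\alpha_x b_x$, where $b_x:=\sigma_x\cdot e_2\in[-1,1]$.

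For \eqref{H:FML} I would first replace $\alpha_x$ by $\hat\alpha_x$ in the linear term; since $|b_x|\le1$, the discrepancy $\epsilon\,\alpha(Q)\sum_{x\in Q}b_x$ is exactly the $O(\epsilon|\alpha(Q)||Q|)$ error. Then, with $g:=\gln_Q$ satisfying $(-\Delta^N_Q+\lambda)g=\epsilon\hat\alpha$, I would use self-adjointness of $-\Delta^N_Q$ and the identity $\EE_Q(f)=\sum_x f_x(-\Delta^N_Q f)_x$ to write $\epsilon\sum_x\hat\alpha_x b_x=\sum_x g_x(-\Delta^N_Q b)_x+\lambda\sum_x g_x b_x$, and complete the square:
\[
-\tfrac12\sum_x b_x(-\Delta^N_Q b)_x+\sum_x g_x(-\Delta^N_Q b)_x=\tfrac12\EE_Q(g)-\tfrac12\EE_Q(\sigma\cdot e_2-g).
\]
Collecting the pieces, together with $\lambda\sum_x g_x b_x=\lambda\sum_x\gln_{Q,x}\,e_2\cdot\sigma_x$, gives \eqref{H:FML}.

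For \eqref{H:DML} I would run the identical computation with $g:=\gld_R$, extended by $0$ off $R$ and solving $(-\Delta^D_R+\lambda)g=\epsilon\alpha$ on $R$; here $\alpha$ itself (not $\hat\alpha$) enters, so no error term appears. The one delicate point is the discrete Green identity for the cross term $\sum_{e\cap R\ne\varnothing}\nabla_e b\,\nabla_e g$: summing by parts over all edges meeting $R$ produces $\sum_{x\in R}b_x(-\Delta^D_R g)_x=\epsilon\sum_{x\in R}\alpha_x b_x-\lambda\sum_{x\in R}g_x b_x$ plus a contribution supported on the edges joining $\pI R$ to $\pO R$; once the edge orientation is tracked and one uses $b_y=\sigma_y\cdot e_2$ for $y\in\pO R$, this boundary contribution is precisely $\sum_{y\in\pO R,\,x\sim y,\,x\in\pI R}\gld_{x,R}\,e_2\cdot\sigma_y$, the last term of \eqref{H:DML}, and the remaining terms assemble exactly as in the Neumann case. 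I expect this boundary bookkeeping — fixing the edge orientation and the sign of the $\pI R$--$\pO R$ contribution — to be the only real obstacle; the rest is routine summation by parts and completion of the square.
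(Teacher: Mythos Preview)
Your proposal is correct and follows essentially the same approach as the paper's proof: the paper also replaces $\alpha$ by $\hat\alpha$ (absorbing the discrepancy into the $O(\epsilon|\alpha(Q)||Q|)$ term), uses the defining equation for $g$ to rewrite the linear random-field term via summation by parts as $\sum_e \nabla_e g\,\nabla_e(\sigma\cdot e_2)$, and then completes the square; the Dirichlet case is dismissed as ``similar.'' Your write-up is in fact more detailed than the paper's, particularly in tracking the $\pI R$--$\pO R$ boundary term and the $\lambda$-contribution, and the summation-by-parts identity you describe (putting $-\Delta^D_R$ on $g$ and picking up $-\sum_{y\in\pO R,\,x\sim y,\,x\in\pI R} g_x b_y$ as the boundary piece) is exactly what is needed.
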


\begin{proof}
Consider the free boundary condition case $\lambda=0$.  Solving the equation $-\Delta^N g= \hat{\alpha}$ in $Q$, we can write
\[
-\HH_{Q}(\sigma) = -\frac 12 \EE_{Q}(\sigma)+ \sum_{e \subset Q} \nabla_e g^N_{Q} e_2  \cdot \nabla_e\sigma +O(\epsilon |\alpha(Q)||Q|).
\]
Completing the square gives the expression first expression.    The remaining cases are similar.
\end{proof}

For notational convenience, let  us write $\sigma_x= (\cos(\theta_x), \sin(\theta_x))$ with $\theta_x$ only defined modulo $2 \pi$.  For any pair of nearest neighbors $x, y$, let $[\nabla_e\theta)]^2_{2\pi}$ denote the minimum of $[\nabla_e\theta)]^2$ over all pairs of angles in the equivalence classes which give the vectors $(\cos(\theta_x), \sin(\theta_x)), (\cos(\theta_y), \sin(\theta_y))$.  We have
\[
(\sigma_x- \sigma_y)^2 \asymp [\nabla_e\theta)]^2_{2\pi}.
\]

Given a region $R \subset \Z^d$ which is finite and any $x \in R$, suppose that we represent a spin $\sigma_x= (\cos(\theta_x), \sin(\theta_x))$.  We will find it revealing to make the change of variables
\begin{align*}
\theta'_x=& \theta_x-g'_x\\
& \text{ where } g'_x = \cos(\theta_x) \gld_{R, x}\\
&\text{ or } g'_x = \cos(\theta_x) \gln_{R, x}
\end{align*}
depending on the context.  Note that despite the fact that $\theta_x$ is only defined modulo $2\pi$ this transformation is unambiguous.  Moreover, it is nonsingular as long as $|g_x|< 1$.

\begin{lemma}
\label{L:BLayer}
Let us suppose that $R \subset \Z^d$ is fixed and bounded. Then for any $\lambda \geq 0$,
\begin{multline}
\label{E:BL}
 -\HH_{R}(\sigma| \eta)
 = \sum_{e \cap R \neq 0} [\cos(\nabla_e\theta')-1] + \frac{1}{4}\sum_{x \in R} m_x \cos^2(\theta'_x) \\
 + \sum_{\stackrel{y \in \pO R}{x \sim y, x \in \pI R}} \gld_x \sin(\theta_y) + \frac 14 \sum_{\stackrel{y \in \pO R}{x \sim y, x \in \pI R}} \cos^2(\theta_y)[\gld_y-\gld_x]^2
  + Error^{\lambda, D}_{R}(\sigma)
\end{multline}
where
\[
m_x= \sum_{y \sim x} [\gld_y-\gld_x]^2
\]
and
\begin{equation}
Error^{\lambda, D}_{R} \lesssim \lambda \|\gld_R\|_{2} \| e_2\cdot \sigma\|_{2, R} + (\|\nabla \gld_R\|_{\infty}+ \| \gld_R\|_{\infty}) (\|\nabla \sigma\|_{2, R \cup \pO R}^2 +\|\nabla \gld_R\|_2^2).
\end{equation}

\vspace{10pt}
\noindent
Similarly, for free boundary conditions we have
\begin{equation}
\label{E:FL}
 -\HH_{R}(\sigma)
 = \sum_{e \subset R} [\cos(\nabla_e\theta'))-1] + \frac{1}{4}\sum_{x \in R} m_x \cos^2(\theta'_x)
  + Error^{\lambda, N}_R
\end{equation}
where
\[
m_x= \sum_{{\stackrel{y \sim x}{\langle x, y\rangle \subset R}}} [\gln_y-\gln_x]^2
\]
and
\begin{multline}
Error^{\lambda, N}_{R}\lesssim\\ \epsilon |\alpha_{R}|\|\sigma \cdot e_2\|_{1, R}+ \lambda \|\gld_R\|_{2} \| e_2\cdot \sigma\|_{2, R} + (\|\nabla \gln_R\|_{\infty}+ \| \gln_R\|_{\infty}) (\|\nabla \sigma\|_{2, R}^2 +\|\nabla \gln_R\|_2^2).
\end{multline}
\end{lemma}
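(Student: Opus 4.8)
The plan is to start from the two formulas of \cref{L:FBE} and push the change of variables $\theta_x\mapsto\theta'_x=\theta_x-\cos(\theta_x)\gld_x$ through edge by edge, collecting everything that is not one of the displayed main terms of \eqref{E:BL}--\eqref{E:FL} into the error. First I would fix an arbitrary lift $\theta\colon R\cup\pO R\to\R$ of the angular variables (legitimate since $\cos\theta_x$, hence $\theta'_x$, $\cos^2\theta'_x$, and $\nabla_e\theta'$, do not depend on the lift), record the lift-free bounds $1-\cos\nabla_e\theta=\tfrac12(\nabla_e\sigma)^2$, $|\sin\nabla_e\theta|\le|\nabla_e\sigma|\le2$, $|\nabla_e(\cos\theta)|\le|\nabla_e\sigma|$, and note that we may assume $\|\gld_R\|_\infty\le1$ (otherwise the asserted error bound trivially dominates; in every application $\|\gld_R\|_\infty\ll1$ anyway). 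Using $(\sigma_x-\sigma_y)^2=2(1-\cos\nabla_e\theta)$ and expanding the completed square, the right side of \eqref{H:DML} becomes $\sum_{e\cap R\neq0}[\cos\nabla_e\theta-1]+\sum_{e\cap R\neq0}(\nabla_e\sigma\cdot e_2)(\nabla_e\gld)+\lambda\sum_{x\in R}\gld_x\,e_2\cdot\sigma_x+\sum_{\stackrel{y\in\pO R}{x\sim y,\,x\in\pI R}}\gld_x\sin\theta_y$, and likewise for \eqref{H:FML} with an additional $O(\epsilon|\alpha_R|\,\|\sigma\cdot e_2\|_{1,R})$ (the term the proof of \cref{L:FBE} actually produces). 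The $\lambda$-sum is moved to the error via $|\lambda\sum_x\gld_x\,e_2\cdot\sigma_x|\le\lambda\|\gld_R\|_2\|e_2\cdot\sigma\|_{2,R}$, as is the $O$-term; the remaining boundary sum is exactly the third term of \eqref{E:BL}. So all the work is in the two edge sums.

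The heart of the matter is an \emph{exact} per-edge identity. Writing $w_e:=\nabla_e(\cos\theta\cdot\gld)=\cos\theta_y\gld_y-\cos\theta_x\gld_x$, so that $\nabla_e\theta'=\nabla_e\theta-w_e$, the addition formula for $\cos(\nabla_e\theta-w_e)$ combined with the trigonometric identity
\[
(\nabla_e\sin\theta)(\nabla_e\gld)-w_e\sin\nabla_e\theta=(1-\cos\nabla_e\theta)\bigl(\sin\theta_x\,\gld_x+\sin\theta_y\,\gld_y\bigr)
\]
— which one verifies by expanding both sides in $\cos\theta_x,\sin\theta_x,\cos\theta_y,\sin\theta_y,\gld_x,\gld_y$ and using $\cos\nabla_e\theta=\cos\theta_x\cos\theta_y+\sin\theta_x\sin\theta_y$ — gives
\begin{multline*}
[\cos\nabla_e\theta-1]+(\nabla_e\sigma\cdot e_2)(\nabla_e\gld)=[\cos\nabla_e\theta'-1]+\cos\nabla_e\theta\,(1-\cos w_e)\\
{}+(1-\cos\nabla_e\theta)\bigl(\sin\theta_x\gld_x+\sin\theta_y\gld_y\bigr)+(w_e-\sin w_e)\sin\nabla_e\theta.
\end{multline*}
Keeping $\cos\nabla_e\theta'$ un-expanded here is the decisive move: it means I never compare $\cos\nabla_e\theta$ and $\cos\nabla_e\theta'$ through their common quadratic approximation, and so never generate a quartic-in-gradient remainder without a small prefactor.

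Next I would extract the mass term from $\cos\nabla_e\theta\,(1-\cos w_e)$. Since $|w_e|\le|\gld_x|+|\gld_y|\le2\|\gld_R\|_\infty$, $|1-\cos w_e-\tfrac12 w_e^2|\le\tfrac1{24}w_e^4$, $|w_e-\sin w_e|\le\tfrac16|w_e|^3$, and since $w_e=\cos\theta_x\,\nabla_e\gld+\gld_y\,\nabla_e(\cos\theta)=\cos\theta_y\,\nabla_e\gld+\gld_x\,\nabla_e(\cos\theta)$ (two factorizations, which I average), one obtains $\cos\nabla_e\theta\,(1-\cos w_e)=\tfrac14(\cos^2\theta_x+\cos^2\theta_y)(\nabla_e\gld)^2+\rho_e$. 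Using the lift-free bounds above and $\|\gld_R\|_\infty\le1$, one checks that everything on the right of the per-edge identity except $[\cos\nabla_e\theta'-1]$ and $\tfrac14(\cos^2\theta_x+\cos^2\theta_y)(\nabla_e\gld)^2$ — namely $\rho_e$, the third and fourth terms there, and the $w_e^3,w_e^4$ corrections — is at most $C(\|\gld_R\|_\infty+\|\nabla\gld_R\|_\infty)\bigl((\nabla_e\sigma)^2+(\nabla_e\gld)^2\bigr)$.

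Finally I would sum over edges. For \eqref{H:DML}, an interior edge $e=\langle x,y\rangle\subset R$ contributes $\tfrac14(\cos^2\theta_x+\cos^2\theta_y)(\nabla_e\gld)^2$, while a boundary edge $x\in\pI R$, $y\in\pO R$ (so $\gld_y=0$, $\theta'_y=\theta_y$) contributes $\tfrac14\cos^2\theta_x\gld_x^2+\tfrac14\cos^2\theta_y\gld_x^2$ plus error; the pieces $\tfrac14\cos^2\theta_x(\nabla_e\gld)^2$ reassemble into $\tfrac14\sum_{x\in R}\cos^2\theta_x\,m_x$ with $m_x=\sum_{y\sim x}(\gld_y-\gld_x)^2$ (the boundary neighbours being precisely what turns $m_x$ into the full sum for $x\in\pI R$), and the leftover $\tfrac14\cos^2\theta_y(\gld_y-\gld_x)^2$ over boundary edges is the fourth term of \eqref{E:BL}. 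Replacing $\cos^2\theta_x$ by $\cos^2\theta'_x$ costs at most $\|\gld_R\|_\infty\sum_{x}m_x\lesssim\|\gld_R\|_\infty\|\nabla\gld_R\|_2^2$ (as $|\cos^2\theta_x-\cos^2\theta'_x|\le|\theta_x-\theta'_x|\le\|\gld_R\|_\infty$). Summing the $\rho_e$ together with the per-edge errors and the $\lambda$-term, and using $\sum_e(\nabla_e\sigma)^2\le\|\nabla\sigma\|_{2,R\cup\pO R}^2$ and $\sum_e(\nabla_e\gld)^2\le\|\nabla\gld_R\|_2^2$, yields \eqref{E:BL} with the claimed bound on $Error^{\lambda,D}_R$; the free-boundary statement \eqref{E:FL} is identical but with $\gln$ in place of $\gld$, no boundary edges or boundary terms, $m_x=\sum_{y\sim x,\,\langle xy\rangle\subset R}(\gln_y-\gln_x)^2$, and the extra error $\epsilon|\alpha_R|\,\|\sigma\cdot e_2\|_{1,R}$ inherited from \eqref{H:FML}. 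I expect the only genuinely delicate step to be this per-edge error accounting: each cross term must be split so that one factor stays as $|\nabla_e\sigma|$ (or $1-\cos\nabla_e\theta$) and the other as a genuinely small $\|\gld_R\|_\infty$ or $\|\nabla\gld_R\|_\infty$; a bare Cauchy--Schwarz bound on $|\nabla_e\theta|\,|\nabla_e\gld|$ would destroy the small prefactor that the statement demands.
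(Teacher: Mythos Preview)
Your argument is correct and reaches the same conclusion as the paper, but the route is genuinely different at the key step. The paper does \emph{not} start from the exact addition formula for $\cos(\nabla_e\theta-w_e)$; instead it Taylor-expands $\cos(\nabla_e\theta)$ about $\nabla_e\theta'$ to second order in $\nabla_e g'$, then separately approximates $\nabla_e g'\approx\cos(\bar\theta_e)\nabla_e\gld$ and $\nabla_e\sin\theta\approx\cos(\bar\theta_e)\nabla_e\theta$ using the edge midpoint $\bar\theta_e=\tfrac12(\theta_x+\theta_y)$, with the cancellation occurring between these two midpoint approximations. To make this work the paper must choose, edge by edge, the branches of $\theta_x,\theta_y$ that realise $[\nabla_e\theta]^2_{2\pi}$, and must track that the various replacements ($\sin\nabla_e\theta'\leftrightarrow\nabla_e\theta$, $\cos^2(\bar\theta_e)\leftrightarrow\tfrac12(\cos^2\theta_x+\cos^2\theta_y)$, etc.) each contribute only to the stated error. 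Your approach replaces all of this by the single exact trigonometric identity $(\nabla_e\sin\theta)(\nabla_e\gld)-w_e\sin\nabla_e\theta=(1-\cos\nabla_e\theta)(\sin\theta_x\gld_x+\sin\theta_y\gld_y)$, which packages the cancellation cleanly and is manifestly lift-independent; the only approximations you make are $1-\cos w_e\approx\tfrac12 w_e^2$ and the extraction of $\tfrac14(\cos^2\theta_x+\cos^2\theta_y)(\nabla_e\gld)^2$ from $\tfrac12 w_e^2$ by averaging the two factorisations of $w_e$. What you gain is a shorter and more transparent error accounting (no midpoint, no branch choice, no Taylor remainder in $\nabla_e g'$); what the paper's approach buys is perhaps a more direct heuristic link to the spin-wave picture of \cref{S:Intuit}, where one thinks of $g'$ as a small perturbation of $\theta$.
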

\begin{proof}[Proof of \cref{L:BLayer}]
In the proof we use the notation $\bar{f}_e=\frac{f_x+ f_y}{2}$ where $e= \langle x, y \rangle$ for any (vector valued) function $f$ on vertices.  

We prove only the first estimate.
Let $g'_x = \cos(\theta_x) \gld_x$ and note that $g'_x$ is independent of the branch chosen in the definition of $\theta_x$. 

First, for any edge $e$ with $e\cap R \neq \varnothing$ choose $\theta_x$ and $\theta_y$ which achieve $[\nabla_e\theta)]^2_{2\pi}$. This choice is, in general, edge dependent.  Letting $\theta'_x= \theta_x-g'_x$,
\begin{equation}
\label{E:51}
\cos(\nabla_e\theta)=
\cos(\nabla_e\theta') - \sin(\nabla_e\theta')\nabla_e g'
-\frac{1}{2}\cos(\nabla_e \theta')(\nabla_e g')^2 + O([\nabla_e g']^3).
\end{equation}

For a positively oriented edge $e= \langle y, x \rangle$, we will use the formulas
\begin{align*}
&\theta_x= \bar{\theta}_e+ \frac{1}{2} \nabla_e\theta\\
&\theta_y= \bar{\theta}_e- \frac{1}{2} \nabla_e\theta.
\end{align*}
Using them, we may write
\begin{equation}
\label{E:52}
\nabla_eg'=  \cos(\bar{\theta}_e)\nabla_e\gld + O(|[\nabla_e\theta]_{2\pi}|\overline{|\gld|}_e + [\nabla_e\gld]^2)
\end{equation}

Combining \cref{E:51} and \cref{E:52} and summing over edges
\begin{multline}
\label{E:Int1}
\sum_{e \cap R \neq \varnothing} \cos(\nabla_e\theta)= \sum_{e \cap R \neq \varnothing } \cos(\nabla_e\theta') - \cos(\bar{\theta}_e)\nabla_e\theta \nabla_e\gld +\frac{\cos^2(\bar{\theta}_e)}{2}(\nabla_e\gld)^2
\\
+O((\|\nabla \gld_R\|_{\infty}+ \| \gld_R\|_{\infty}) (\|\nabla \sigma\|_{2, R \cup \pO R}^2 +\|\nabla \gld_R\|_2^2)).
\end{multline}

Now consider the contribution to $-\HH_R$ coming from the random field.
We have
\[
\epsilon \sum_{x \in R} \alpha_x \sin(\theta_x)= \sum_{x \in R} (-\Delta^D_{R} +\lambda)\cdot\gld_x \sin(\theta_x).
\]
By Cauchy-Schwarz, we have
\begin{equation}
\label{E:Int3}
|\sum_{x \in R} \lambda \gld_x \sin(\theta_x)| \leq \lambda \|\gld\|_{2, R} \|\sin(\theta)\|_{2, R}.
\end{equation}
Making a summation-by-parts,
\begin{equation}
\label{E:Int2}
\sum_{x \in R} -\Delta_D\cdot\gld_x \sin(\theta_x)= \sum_{e \cap R \neq \varnothing} \nabla_e\gld\nabla_e\sin(\theta) + \sum_{\stackrel{y \in \pO R}{x \sim y, x \in \pI R}} \gld_x \sin(\theta_y).
\end{equation}
Focusing on the first term in the RHS of \cref{E:Int2}, for each edge $e= \langle x, y \rangle$ we choose the same branches of $\theta_x, \theta_y$ as in \cref{E:Int1} and expand around $\bar{\theta}_e$:
\begin{equation}
\label{E:Int4}
\sum_{e \cap R \neq \varnothing} \nabla_e\gld\nabla_e\sin(\theta)
=\sum_{e \cap R \neq \varnothing} \cos(\bar{\theta}_e) \nabla_e\gldÊ\nabla_e\theta + O(|\nabla_e\gld |[\nabla_e\theta]^2_{2\pi})
\end{equation}
If -- edge by edge -- we combine the RHS of \cref{E:Int1,E:Int2,E:Int4} we obtain \cref{E:BL} with second term replaced by
\[
\sum_{e \cap R \neq \varnothing} \frac{\cos^2(\bar{\theta}_e)}{2}(\nabla_eg)^2.
\]
 \Cref{E:BL} is obtained by first replacing
$\frac{\cos^2(\bar{\theta}_e)}{2}$ with $\frac 12 \left[\frac{\cos^2(\theta_x)}{2}+ \frac{\cos^2(\theta_y)}{2}\right]$ at the cost of a term bounded by the error already amassed and noting
\[
\sum_{x \in  R}\frac{\cos^2(\theta_x)}{4} m_x =\frac 14 \sum_{x \in  R}\cos^2(\theta'_x) m_x + O\left(\sum_{e \in R} \overline{|g|}_e[\nabla_eg]^2\right).
\]
\end{proof}

\section{Preparatory Lemmas for \cref{SS:7}}
\label{S:Aux}

\subsection{Absence of Defects:  Proof of \cref{C:LocRef,,C:E1}}
\label{S:AuxBound}
In this section we consider the extent to which the Dirichlet energy $\EE_{Q_{L_0}}(\sigma)$ can be used to control `smoothness' of spin configurations in cubes $Q_{L_0}$ (these need not be $L_0$-measurable.  We consider this question in $\Z^d$ for $d \geq 1$.

\begin{definition}
\label{d:defect}
Let  $\mu \in (0, 1)$ and let $0<\delta< \mu$ be fixed.  We shall say that $\sigma\in \mathcal S_{Q_{L_0}}$ has a \textrm{defect} (w.r.t. $\delta, \mu$) if $\sigma(Q_{L_0}) \cdot e_1 \geq 1-\delta$ but there exists a graph connected subset $R \subset  Q_{L_0}$ so that $\diam(R) \geq \frac{L_0}{4}$ and $\dist(R, Q_{L_0}(r)^c) \geq \frac{L_0}{2}$ and so that for all $x \in R$, $\sigma_x \cdot e_1  \leq 1-\mu$ and \end{definition}

We wish to argue that defects cannot occur for low (Dirichlet) energy configurations for $d\in\{2, 3\}$. The following lemma depends on low dimensionality.  In \cref{S:Collar} it
 allows us, at the boundary of clean contours, to restrict the Hamitonian to regions of the phase space on which it is (morally) convex.

 Recall $\ell \sim \epsilon^{-1}|\log \epsilon|^{-4}$.  A similar statement can be pushed through in the two dimensional case.
\begin{lemma}[Energetic Cost of Defects]
\label{L:Defect}
Let $d=3$ and fix $0< \delta< \mu < 1$. There exists $\epsilon \in (0, 1)$ so that if $\EE_Q(\sigma) \leq \epsilon^2|\log \epsilon| \ell^3$, $\sigma$ does not have defects in $Q_{\ell}$.
\end{lemma}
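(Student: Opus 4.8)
The plan is to argue by contradiction: suppose $\sigma \in \mathcal S_{Q_\ell}$ has $\EE_{Q_\ell}(\sigma) \leq \epsilon^2 |\log \epsilon| \ell^3$ and has a defect, i.e. there is a graph-connected set $R \subset Q_\ell$ with $\diam(R) \geq \ell/4$, $\dist(R, Q_\ell^c) \geq \ell/2$, and $\sigma_x \cdot e_1 \leq 1-\mu$ for all $x \in R$. Since $\sigma(Q_\ell)\cdot e_1 \geq 1-\delta$ with $\delta < \mu$, there is also a substantial set $G = \{x : \sigma_x\cdot e_1 \geq 1-\delta'\}$ (for a suitable $\delta' \in (\delta,\mu)$) which must occupy a positive fraction of $Q_\ell$ — otherwise the block average could not be as large as $1-\delta$, since $\|\sigma_x\|_2 = 1$ forces $\sigma_x\cdot e_1 \leq 1$ always and $\leq 1-\delta'$ off $G$. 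The idea is that $G$ and $R$ are two macroscopically large, well-separated (in $\sigma\cdot e_1$-value) regions, so any path between them must cross an ``interface'' where $\sigma\cdot e_1$ takes an intermediate value, and summing the Dirichlet energy over enough disjoint such interfaces produces a lower bound on $\EE_{Q_\ell}(\sigma)$ that beats $\epsilon^2|\log\epsilon|\ell^3$.

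The key step is a slicing/co-area argument. Foliate $Q_\ell$ by a family of $\sim \ell$ parallel hyperplanes transverse to a direction in which $R$ has extent $\geq \ell/4$; on a positive fraction of these slices (those that actually separate a piece of $R$ from a piece of $G$ inside $Q_\ell$ — here is where $\dist(R,Q_\ell^c)\geq \ell/2$ guarantees the slices stay in the interior and cut ``across'' the configuration) one finds, on the restricted $(d-1)$-dimensional slice, a connected path from a vertex with $\sigma\cdot e_1 \leq 1-\mu$ to a vertex with $\sigma\cdot e_1 \geq 1-\delta'$. Along such a path the quantity $\sigma_x\cdot e_1$ changes by at least $\mu - \delta' =: \kappa > 0$, so by Cauchy--Schwarz the sum of $[\nabla_e \sigma]^2$ over the path is $\gtrsim \kappa^2 / (\text{path length}) \gtrsim \kappa^2/\ell$ in $d=3$ (the path has at most $\sim \ell$ edges in the slice). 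Since there are $\gtrsim \ell$ such slices, each contributing a disjoint edge set, we get $\EE_{Q_\ell}(\sigma) \gtrsim \ell \cdot \kappa^2/\ell = \kappa^2$ — which is $O(1)$ and not yet enough. To fix this one does a two-dimensional version of the same argument \emph{within} each separating slice: rather than one path per slice, extract $\gtrsim \ell$ disjoint paths (one through each row of the $\sim \ell \times \ell$ slice), each contributing $\gtrsim \kappa^2/\ell$, giving $\gtrsim \ell \cdot \kappa^2/\ell \cdot \ell = \kappa^2 \ell$ per slice and hence $\EE_{Q_\ell}(\sigma) \gtrsim \kappa^2 \ell^2$. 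Comparing $\kappa^2 \ell^2$ with $\epsilon^2 |\log\epsilon|\ell^3 \sim |\log\epsilon|^{-7} \ell^2$ (using $\ell \sim \epsilon^{-1}|\log\epsilon|^{-4}$, so $\epsilon^2\ell^3 \sim \ell^2 |\log\epsilon|^{-4}$ times $\epsilon\ell \sim |\log\epsilon|^{-4}$ — more carefully $\epsilon^2|\log\epsilon|\ell^3 = (\epsilon\ell)\cdot|\log\epsilon|\cdot \epsilon\ell\cdot \ell \sim |\log\epsilon|^{-4}\cdot|\log\epsilon|\cdot|\log\epsilon|^{-4}\cdot\ell\cdot\epsilon^{-1}|\log\epsilon|^{-4}\cdot\epsilon$, i.e. it is $\ell^2$ times a power of $|\log\epsilon|$ that tends to $0$), we see $\kappa^2 \ell^2$ wins once $\epsilon$ is small, a contradiction.

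I would organize the write-up as: (i) fix $\delta' \in (\delta,\mu)$ and set $\kappa = \mu - \delta'$; (ii) show the ``high'' set $G=\{x: \sigma_x\cdot e_1 \geq 1-\delta'\}$ satisfies $|G \cap Q_\ell| \geq c_1 \ell^3$ from the block-average hypothesis together with $\sigma_x\cdot e_1 \leq 1$; (iii) use $\diam(R)\geq \ell/4$ and $\dist(R, Q_\ell^c)\geq \ell/2$ to produce a linear-in-$\ell$ family of interior parallel slices each of which, as a $2$-dimensional lattice slab, must contain both a vertex of $R$ and a vertex of $G$ (or at least can be chosen so that a macroscopic ``wall'' separating $R$-type from $G$-type values lies in it); (iv) within each slice run the $2$D disjoint-path / co-area estimate to get $\gtrsim \kappa^2 \ell$ Dirichlet energy; (v) sum over slices and compare scales. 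I expect the main obstacle to be step (iii): making precise, using only $\diam(R) \geq \ell/4$ and the separation from $\partial Q_\ell$, that one genuinely has $\gtrsim \ell$ slices each of which is forced to carry a full $2$D interface (as opposed to, say, $R$ being a thin filament that only a few slices meet richly). The clean way to handle this is to pick the slicing direction to be a coordinate direction along which $R$ has extent $\geq \ell/(4\sqrt d)$ (possible since $\diam_\infty R \gtrsim \diam_2 R / \sqrt d$), and to argue that for each such slice the $R$-vertices it contains, being at distance $\geq \ell/2$ from $\partial Q_\ell$, must be connected within $\sigma$ to the bulk region where $\sigma\cdot e_1$ is large, forcing an interface inside the slice; the graph-connectedness of $R$ and a standard topological separation argument on the lattice slab then do the rest. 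This is exactly the point the statement flags as ``depends on low dimensionality'' — the exponent bookkeeping ($\kappa^2\ell^2$ vs. $\epsilon^2|\log\epsilon|\ell^3$) only closes because $d = 3$ keeps the energetic lower bound of order $\ell^2$, i.e. of the \emph{same} order as $\epsilon^2\ell^3$ up to logs, whereas in $d\geq 4$ it would be of lower order.
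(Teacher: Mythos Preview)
Your slicing strategy is on the right track and matches the paper's outline, but step~(iv) contains a genuine gap. You claim that each separating 2D slab contributes Dirichlet energy $\gtrsim \kappa^2\ell$ via $\sim\ell$ disjoint paths from $R$-vertices to $G$-vertices. This fails precisely in the ``thin filament'' case you flag: if $R$ is a line segment of length $\ell/4$, then slicing transverse to it produces $\sim\ell$ slabs each meeting $R$ in a \emph{single} vertex. From one vertex in a 2D slab you cannot extract $\sim\ell$ edge-disjoint paths to $G$ --- all such paths must pass through the $O(1)$ edges incident to that vertex, so the path/co-area count collapses to $O(1)$ paths, giving only $\kappa^2/\ell$ per slab and a total of $O(\kappa^2)$. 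That is far below $\epsilon^2|\log\epsilon|\ell^3 \sim |\log\epsilon|^{-7}\ell$ and yields no contradiction. Your proposed fix (choosing the slicing direction along the extent of $R$) guarantees many slabs \emph{meet} $R$, but does nothing to make $R$ thick inside each slab.

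The paper repairs this with a different per-slab estimate: a separate ``point defect'' lemma showing that in a 2D ball $B_l$ with $\sigma(B_l)\cdot e_1 \geq 1-\delta'$ but $\sigma_0\cdot e_1 \leq 1-\mu$ at the center, one has $\EE_{B_l}(\sigma)\gtrsim 1/\log l$. This is the 2D capacitance scaling, proved by passing to a multilinear continuum interpolation, spherically symmetrizing, and invoking the logarithmic capacity between concentric balls. The 3D argument then runs: use Poincar\'e plus Markov on the total energy bound to show that all but an $o(1)$ fraction of slabs have \emph{both} slab-average within $(\mu-\delta)/2$ of $\sigma(Q_\ell)$ \emph{and} slab-energy $\leq |\log\epsilon|^{-6}$; since $R$ has extent $\geq \ell/4$ in some coordinate direction, at least $\ell/5$ such slabs also contain an $R$-vertex deep inside $Q_\ell$. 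On any one of these slabs the point-defect lemma gives $\EE_{\text{slab}}\gtrsim 1/\log\ell \sim |\log\epsilon|^{-1}$, contradicting the $|\log\epsilon|^{-6}$ upper bound. The contradiction is obtained on a \emph{single} slab, so no summation over slabs is needed. The missing ingredient in your argument is exactly this 2D capacitance bound; once you have it, the rest of your outline (slicing, controlling slab averages via Poincar\'e) goes through essentially as in the paper.
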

This lemma is an easy consequence of the following.  Let $B_l$ denote a fixed $\ell^2$ ball of radius $l$ in $\Z^d$.
\begin{lemma}[Point Defects in $d$-Dimensions]
\label{L:PD}
Let $d \geq 1$.  If $\mu \in (0, 1)$ and $\delta \in (0, \mu)$ then for all $l$ sufficiently large, whenever  $\sigma(B_{l}) \cdot e_1 \geq 1- \delta$ and $\sigma_0\cdot e_1 \leq 1- \mu$
\[
\EE_{B_{l}}(\sigma) \gtrsim
\begin{cases}
 \frac{1}{ l} \text{ if $d=1$},\\
 \frac{1}{\log l} \text{ if $d=2$},\\
 1 \text{ if $d\geq3$},\\
 \end{cases}
\]
where the implicit constant depends on $\delta, \mu$ and dimension.
\end{lemma}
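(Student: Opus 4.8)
\emph{The plan is to reduce the statement to a discrete capacity (effective‑resistance) lower bound} for the scalar field $f_x:=1-\sigma_x\cdot e_1$. Since $\sigma_x\in\mathbb S^1$ one has $f_x\ge 0$ on $B_l$, and the two hypotheses say precisely that $f_0\ge\mu$ while $|B_l|^{-1}\sum_{x\in B_l}f_x\le\delta$; moreover $(\sigma_x-\sigma_y)^2\ge(\nabla_e f)^2$ for every edge $e=\langle x,y\rangle$, so it is enough to bound $\sum_{e\subset B_l}(\nabla_e f)^2$ from below. First I would truncate: Markov's inequality together with $\sum_{x\in B_l}f_x\le\delta|B_l|$ shows that the sublevel set $A^c:=\{x\in B_l:\ f_x<\tfrac{\mu+\delta}{2}\}$ satisfies $|A^c|\ge\bigl(1-\tfrac{2\delta}{\mu+\delta}\bigr)|B_l|=\tfrac{\mu-\delta}{\mu+\delta}|B_l|=:\kappa|B_l|$, with $\kappa=\kappa(\delta,\mu)\in(0,1)$ because $\delta<\mu$. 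Setting
\[
g_x:=\min\!\Bigl(1,\ \max\!\bigl(0,\ \tfrac{2}{\mu-\delta}\bigl(f_x-\tfrac{\mu+\delta}{2}\bigr)\bigr)\Bigr),
\]
the map $f\mapsto g$ is $\tfrac{2}{\mu-\delta}$‑Lipschitz, so $\sum_{e\subset B_l}(\nabla_e g)^2\le\tfrac{4}{(\mu-\delta)^2}\,\EE_{B_l}(\sigma)$, while $g_0=1$ (as $f_0\ge\mu$) and $g\equiv0$ on $A^c$.

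\emph{Next I would produce a far vertex and a cheap unit flow to it.} Since $|A^c|\ge\kappa|B_l|\gtrsim l^d$ for $l$ large, while a ball of radius $\rho$ contains at most $C_d\rho^d$ vertices, $A^c$ must contain a vertex $x^*$ with $\dist(0,x^*)\ge c_1 l$ for some $c_1=c_1(\delta,\mu,d)>0$. Write $R_d(l):=l,\ \log l,\ 1$ for $d=1,2,\ge3$. The key input is the existence of a unit flow (discrete current) $\theta=(\theta_e)_{e\subset B_l}$ from $x^*$ to $0$, supported on edges of $B_l$, with energy $\mathcal E(\theta):=\sum_{e\subset B_l}\theta_e^{2}\le C(\delta,\mu,d)\,R_d(l)$. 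One builds $\theta$ from three pieces: near $0$ and near $x^*$ one uses the free‑space unit current flow of $\Z^d$, truncated at radius $\sim c_1 l/3$ (this fits inside $B_l$; if $x^*$ lies on $\partial B_l$ one uses the corresponding half‑space current, which is equally good), each piece contributing energy $\asymp\sum_{r=1}^{O(l)}r^{-(d-1)}\asymp R_d(l)$; the two truncated funnels are then joined through the bulk of $B_l$ by a flow of flux density $O(l^{-(d-1)})$, contributing $O\bigl(l^{-2(d-1)}\cdot l^{d}\bigr)=O(l^{2-d})=O(R_d(l))$. (This is just the standard fact that the effective resistance between two vertices at mutual distance $\sim l$ inside a box of side $\sim l$ is $\ls R_d(l)$, so one may instead cite the literature on discrete potential theory.)

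\emph{Finally I would close the estimate by Cauchy–Schwarz.} Summation by parts along the flow gives $\sum_{e\subset B_l}\theta_e\,\nabla_e g=g_0-g_{x^*}=1$, hence
\[
1=\Bigl(\sum_{e\subset B_l}\theta_e\,\nabla_e g\Bigr)^{2}\le\mathcal E(\theta)\sum_{e\subset B_l}(\nabla_e g)^2,
\]
so $\sum_{e\subset B_l}(\nabla_e g)^2\ge\mathcal E(\theta)^{-1}\ge c(\delta,\mu,d)\,R_d(l)^{-1}$; combined with the first paragraph, $\EE_{B_l}(\sigma)\ge\tfrac{(\mu-\delta)^2}{4}\sum_e(\nabla_e g)^2\gtrsim R_d(l)^{-1}$ with an implicit constant depending only on $\delta,\mu$ and $d$, which is the claim (the hypothesis that $l$ be large is used both to extract $x^*$ and for the asymptotics of $\sum_r r^{-(d-1)}$). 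The one genuinely delicate point, and the step I expect to require the most care, is the flow construction of the middle paragraph — in particular pinning down the $d=2$ logarithm and checking that the current can be kept inside $B_l$ even when $x^*$ sits on or near $\partial B_l$; everything else is routine bookkeeping.
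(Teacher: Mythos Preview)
Your argument is correct and is a genuinely different route from the paper's. The paper treats $d=1$ by a direct path argument and for $d\ge 2$ passes to a continuum multilinear interpolation $\tilde\sigma$ of $\sigma$, then \emph{spherically symmetrizes} $\tilde\sigma$ (which preserves the average and lowers $\int|\nabla\tilde\sigma|^2$), reducing to a radial one-dimensional problem: one finds a sphere $S_r$ on which $\tilde\sigma\cdot e_1$ is close to $1$, and the lower bound then comes from the continuum capacity between a small ball around $0$ and $S_r$ in $\R^d$. Your approach instead stays on the lattice throughout, projects to the scalar $f=1-\sigma\cdot e_1$, truncates, and bounds the Dirichlet energy from below by the reciprocal of the discrete effective resistance between $0$ and a far point $x^\ast$ in the sublevel set, exhibited via an explicit unit flow. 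Both proofs are at heart capacity estimates; the paper's symmetrization makes the capacity computation a clean radial ODE at the cost of the interpolation step, while your flow construction avoids any passage to the continuum and handles all $d\ge 1$ uniformly, at the cost of the (standard but slightly fiddly) flow-inside-a-ball construction you flag, particularly keeping the funnel near $x^\ast$ inside $B_l$ when $x^\ast$ is close to $\partial B_l$.
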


\begin{proof}[Proof of \cref{L:Defect} given \cref{L:PD}]
Let us assume for convenience that $Q_{\ell}$ is centered at $0$, that is $Q_{\ell}= \{-\frac{\ell}{2}, \dotsc, \frac{\ell}{2}-1\}^d$.

In $\Z^3$ let $\{\hat e_1, \hat{e}_2, \hat e_3\}$ denote the usual basis of orthonormal, positively oriented unit lattice vectors.  Let $H_i(k)$ denote the hyperplane through $k\hat e_i$ perpendicular to $\hat e_i$ and let $\BB_i(k)= H_i(k)\cap Q_{\ell}$ for $k \in\{- \ffrac{\ell}{2}, \dotsc, \ffrac{\ell}{2}-1\}$. We may decompose $\sigma(Q_{\ell})=\ell^{-1} \sum_{k= - \ffrac{\ell}{2}}^{\ffrac{\ell}{2}-1} \sigma(\BB_i(k))$, i.e. as a sum over $d-1$ dimensional hypercubes perpendicular to each $\hat{e}_i$.  Setting $f_i(k)=  \sigma(\BB_i(k))$ we have
\begin{align*}
&\ell^{-1} \sum_{k} \|f_i(k) - \sigma(Q_\ell)\|^2 \leq \ell^{-3} \sum_{x \in Q_{\ell}}\|\sigma_x- \sigma(Q_\ell)\|_2^2 \text { and }\\
&\ell^{-3} \sum_{x \in Q_{\ell}}\|\sigma_x- \sigma(Q_\ell)\|_2^2 \underbrace{\ls}_{\textrm{I}}  \ell^{-1} \EE_{Q_{\ell}}(\sigma) \underbrace{\ls}_{\textrm{II}} \epsilon^2|\log \epsilon| \ell^2.
\end{align*}
Inequality $\textrm{I}$ follows from the P\'{o}incare inequality and $\textrm{II}$ from our hypothesis on the Dirichlet energy.  By Markov's inequality and the defintion of $\ell$, for each $i$
\begin{align}
&\ell^{-1}|\{k: \|f_i(k) - \sigma(Q_\ell)\|_2 \geq \ffrac{\mu-\delta}2 \}| \leq \ffrac{2}{\mu-\delta}|\log \epsilon|^{-7},\\
&\ell^{-1}|\{k: \EE_{\BB_i(k)}(\sigma) \geq |\log \epsilon|^{-6}\}| \leq |\log \epsilon|^{-1}.
\end{align}
Thus if $Q_\ell$ has a defect, we can find at least $\ffrac{\ell}{5}$ disjoint $2$ dimensional subcubes  $\BB_i(k)$ so that 
\begin{align}
&f_i(k)\cdot e_1 \geq 1- [\delta+ \ffrac{\mu-\delta}2],\\
&\EE_{\BB_i(k)}(\sigma) \leq |\log \epsilon|^{-6},\\
&\sigma_{x} \cdot e_1 \leq 1- \mu \text{ for some $x \in  \BB_i(k)$ so that $\dist(x, \pO Q_\ell) \geq \ffrac \ell4$}.
\end{align}
For each such $x$ let $b_x$ be the $\ell^2$ ball of radius $\ffrac \ell4$ around $x$ in $\BB_i(k)$.  Then
\begin{align*}
&\|\sigma(b_x)- f_i(k)\|^2_2 \ls \EE_{\BB_i(k))}(\sigma) \ls |\log \epsilon|^{-6},\\
&\EE_{b_x}(\sigma)\leq \EE_{\BB_i(k))}(\sigma).
\end{align*}
For $\epsilon$ sufficiently small, this is in contradiction with \cref{L:PD}.
\end{proof}

\begin{proof}[Proof of \cref{L:PD}]
Let us assume for convenience that $B_{l}$ is centered at $0$, i.e. $B_l= \{x\in \Z^d: \|x\|_2 \leq l\}$.
The proof of the Lemma is easiest when $d=1$:  The condition $\sigma(B_{l}) \cdot e_1 \geq 1- \delta$ implies that we can find a pair of vertices $x< 0< y$ such that $\sigma_x\cdot e_1, \sigma_y\cdot e_1 \geq 1- [\delta+ \ffrac{\mu-\delta}{2}]$ and $x, y \in [-l, l]$.  Optimizing the Dirichlet energy  $\EE$ in $[x, y]$ subject to these boundary conditions and the condition $\sigma_0 \cdot e_1< 1-\mu$ then gives
\[
\EE_{B_{l}}(\sigma) \gtrsim \ffrac{1}{l}
\]

For dimension $d\geq 2$ it is most convenient for us to replace $\sigma_x$ by a continuous interpolation.
For $y \in \R$ let $[y]$, $\{y\}$ denote the integer and fractional parts of $y$ respectively and extend these notations componentwise to vectors in $\R^d$.  Let $\tilde{B}_{k}=\{ x \in \R^d: [x] \in B_{k}\}$. We define $\tilde \sigma$ on 
$\tilde{B}_{l-2}$ by
\[
\tilde{\sigma}(x)=\sum_{v \in \{0, 1\}^d} \left[ \prod_{i=1}^d(v_i\{x_i\} +(1-v_i)(1-\{x_i\}))\right] \sigma_{[x] +v}
\]
i.e. $\tilde{\sigma}(x)$ is a multi-linear interpolation of $\sigma^{\textrm{ext}}_x$.

By definition, we can find a small ball $B'$ (depending only on $d, \delta$ and $\mu$) around $0$ so that $\tilde{\sigma}(x)\cdot e_1< 1-  [\delta + \ffrac{3(\mu-\delta)}{4}]$ if $x \in B'$.
In addition,
\begin{equation}
\label{E:Mean}
|\tilde B_{l-5}|^{-1}\int_{\tilde{B}_{l-5}} \tilde{\sigma}(x)\cdot e_1 \textrm{d}^d x = |B_{l}|^{-1} \sum_{x \in B_{l}} \sigma \cdot e_1+ O( \ffrac{1}{l})
\end{equation}
with the second term on the RHS coming from boundary integrations.  Taking $l$ large enough, we may assume $|\tilde B_{l}|^{-1}\int_{\tilde{B}_{l}}\textrm{d}^d x \tilde{\sigma}(x)\cdot e_1 \geq 1-  [\delta + \ffrac{(\mu-\delta)}{4}]$.
Further,
there exist universal constants $c_1, c_2>0$ so that
\[
c_1\sum_{e \subset B_{l-5}} [\nabla_e \sigma]^2 \leq \int_{\tilde{B}_{l-5}}\textrm{d}^d x |\nabla \tilde{\sigma}_x|^2 \leq c_2 \sum_{e \subset B_{l-3}} [\nabla_e \sigma]^2
\]
where $\nabla \tilde{\sigma}_x$ is the gradient of $\tilde{\sigma}_x$ in $\R^d$.
It is therefore enough to obtain bounds in the continuum setting with $|\sigma_x|=1$ relaxed to $|\tilde{\sigma}_x|\leq 1$.

We assume $\tilde{\sigma}_x \cdot e_1 \geq 0$ and that $\tilde{\sigma}_x$ is spherically symmetric (although perhaps not of unit length) since the operation of taking spherical averages preserves \cref{E:Mean} and can only lower the kinetic energy.  

Now we proceed in a manner similar to the one dimensional case.
We can find a sphere
\[
S_r=\{x\in \R^d: \|x\|_2=r\}
 \]
 with $ r \leq l$ so that $\tilde{\sigma}(\|x\|_2)\cdot e_1> 1-  [\delta + \ffrac{\mu-\delta}{2}].$

Combined with the condition $\tilde{\sigma}(x)\cdot e_1 \leq  1-  [\delta + \ffrac{3(\mu-\delta)}{4}]$ if $x \in B'$ and using capacitance estimates between balls in $\R^d$, this implies
\[
\int_{\tilde{B}_{l}}\textrm{d}^d x |\nabla \tilde{\sigma}_x|^2 \gtrsim
\begin{cases}
\frac{1}{1+\log (l)} \text{ for $d = 2$}\\
1 \text{ for $d \geq 3$}
\end{cases}
\]
The implicit constants depend on the capacitance between a unit ball and one of radius two and also on $\delta, \mu$.
\end{proof}

\begin{proof}[Proof of \cref{C:LocRef,,C:E1}]
These are easy consequence of \cref{L:Defect}
\end{proof}

\subsection{Localization Estimates for $\gld_R$}
Given a region $R$, in this section we show how to control the field  $\gld_{R}$ in terms of the family of fields $(\gld_{Q})_{Q}$ where the boxes $Q$ are of the form $Q= Q(r + \eta)$ for some $Q(r) \in \QQ_L$ and some $\eta \in N: =\frac{L}{16}\{ -32\dotsc, 32\}^3$.  Here, as in \cref{S:Collar}, $\lambda=
L^{-2} \log^{8} L$.

Recall that in \cref{S:Contours}, we defined the function $R$ on $\QQ_L$ by $R(Q)= \max_{\eta} \|\alpha\|_{\infty, Q_{\eta}}$.  Recall that $\QQ_L'= \{ Q_{\eta}: Q\in \QQ_{L} \text{ and } \eta \in \NN\}$.  Here is a convenient, though not particularly optimal, bound which we use later:
\begin{lemma}
\label{L:TechBadRandomeness}
Let $\tilde{Q}\in \QQ_L'$ so that $\tilde{Q}\cap R \neq \varnothing$.
Suppose there exists $\delta>0$ so that $x \in \tilde{Q} \cap R$ and $\dist(x, \pO[\tilde{Q} \triangle \mathscr D]) \geq \lambda^{-\frac 12}  |\log \epsilon|^{\delta}$.  Then there exists $c>0$  (independent of $\delta$) so that
\[
|\gld_{\tilde{Q}, x}-\gld_{R, x}| \ls \epsilon \lambda^{-1} \sum_{\stackrel{Q\cap R \neq \varnothing}{Q\in \QQ_L}} R(Q)\left[e^{-c |\log \epsilon|^{\delta}}\wedge  e^{-c \lambda^{\frac 12} \dist(x, Q}\right].
\]

In particular, if $\sup_{Q\cap R \neq \varnothing} R(Q) \leq K$ and $\delta>1$, 
\begin{equation}
\label{E:TechBadRandomeness1}
|\gld_{\tilde{Q}_L, x}-\gld_{R, x}| \ls Ke^{-c |\log \epsilon|^{\delta}}.
\end{equation}
\end{lemma}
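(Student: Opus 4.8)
The plan is to reduce everything to standard estimates on the massive lattice Green's functions $G^{\lambda}_{S}(x,y):=(-\Delta^D_S+\lambda)^{-1}(x,y)$, $S\subset\Z^d$, since by definition $\gld_{S,x}=\epsilon\sum_{y}G^{\lambda}_S(x,y)\,\alpha_y$. Two facts, uniform in $S$ and valid for $\lambda\in[0,1)$, are all that is used: (i) $\sum_{y}G^{\lambda}_S(x,y)\le\lambda^{-1}$, because $(-\Delta^D_S+\lambda)$ applied to the constant function $\lambda^{-1}$ is $\ge 1$ pointwise and Dirichlet truncation only decreases the Green's function; and (ii) the classical mass decay at rate $\sqrt\lambda$: from the killed-walk representation $G^{\lambda}_S(x,y)=C\,\E_x\bigl[\sum_{n\ge0}(1+\tfrac{\lambda}{2d})^{-n}\mathbf 1_{X_n=y,\,n<\tau_S}\bigr]$ and the local-CLT count of length-$n$ lattice paths (optimized at $n\sim\|x-y\|_1/\sqrt\lambda$) one gets $G^{\lambda}_S(x,y)\le C\lambda^{-1}e^{-c\sqrt\lambda\|x-y\|_1}$, and likewise the probability that a $\lambda$-killed walk started at $x$ ever reaches a site at $\ell^1$-distance $\ge r$ from $x$ is $\le Ce^{-c\sqrt\lambda r}$. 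I would stress that the $\sqrt\lambda$ rate, not the trivial rate $\lambda$ coming from $(1+\tfrac{\lambda}{2d})^{-\tau}\le e^{-c\lambda\tau}$, is what makes the Lemma true: with $\lambda=L^{-2}\log^8L$ one has $\sqrt\lambda\cdot\lambda^{-1/2}|\log\epsilon|^\delta=|\log\epsilon|^\delta$, which is exactly the exponent in the conclusion.

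Next I would set up the domain comparison by coupling. Put $S=\tilde Q\cap R$ and run the $\lambda$-killed walk from $x\in S$. Before the first exit time $\tau_{S^c}$ of $S$ the walk's law does not depend on whether we use the domain $\tilde Q$ or $R$, so
\[
\gld_{\tilde Q,x}-\gld_{R,x}=\epsilon\sum_{y}\,\E_x\!\Bigl[\mathbf 1_{\tau_{S^c}<\infty}\,(1+\tfrac{\lambda}{2d})^{-\tau_{S^c}}\bigl(\mathbf 1_{X_{\tau_{S^c}}\in\tilde Q}\,G^{\lambda}_{\tilde Q}(X_{\tau_{S^c}},y)-\mathbf 1_{X_{\tau_{S^c}}\in R}\,G^{\lambda}_{R}(X_{\tau_{S^c}},y)\bigr)\Bigr]\alpha_y .
\]
The bracket vanishes unless the walk survives and exits $S$ at a site lying in exactly one of $\tilde Q$, $R$, i.e.\ at a site of the symmetric difference $\tilde Q\triangle R$; by hypothesis every such site is at distance $\ge\lambda^{-1/2}|\log\epsilon|^{\delta}$ from $x$ (this is the meaning of the assumption on $\dist(x,\pO[\tilde Q\triangle\mathscr D])$), so only walks reaching that far contribute.

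The estimate then assembles routinely. Fix an $L$-box $Q\in\QQ_L$, sum the displayed identity over $y\in Q$, and apply (i) to the terminal factor $G^{\lambda}_{\tilde Q}(z,\cdot)+G^{\lambda}_{R}(z,\cdot)$ to extract $\lesssim\lambda^{-1}$; the walk must first reach the differing boundary, costing $\le e^{-c\sqrt\lambda\,\lambda^{-1/2}|\log\epsilon|^{\delta}}=e^{-c|\log\epsilon|^{\delta}}$ by (ii); and the composite excursion ``$x$ to the exit site $z$, then $z$ into $Q$'' spans distance at least $\dist(x,Q)$, so it also carries a factor $e^{-c\sqrt\lambda\,\dist(x,Q)}$ — splitting the decay suitably lets one keep whichever of the two one wants, hence the minimum. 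The sums over exit sites and over $y\in Q$ cost only polynomial-in-$\lambda^{-1}$ and polynomial-in-$L$ prefactors; bounding $|\alpha_y|\le R(Q)$ on $Q$ and summing over $Q$ with $Q\cap R\neq\varnothing$ gives the first displayed inequality. For \eqref{E:TechBadRandomeness1} one inserts $R(Q)\le K$, splits the sum over $Q$ according to whether $\dist(x,Q)$ is below or above $\lambda^{-1/2}|\log\epsilon|^{\delta}$, and sums the geometric series $\sum_Q e^{-c\sqrt\lambda\,\dist(x,Q)}$; since $\lambda=L^{-2}\log^8L$ and $L\sim\epsilon^{-1}|\log\epsilon|^4$, all the polynomial prefactors are of size $e^{O(|\log\epsilon|)}$, which is dominated by $e^{-c|\log\epsilon|^{\delta}}$ once $\delta>1$, after a harmless decrease of $c$.

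The main obstacle is not any individual estimate but the bookkeeping that delivers the \emph{minimum} of the two exponentials: one must partition the decay of the composite excursion into a piece tied to $\dist(x,\tilde Q\triangle R)$ (yielding $e^{-c|\log\epsilon|^{\delta}}$) and a piece tied to $\dist(x,Q)$ (yielding $e^{-c\sqrt\lambda\,\dist(x,Q)}$) while keeping enough residual decay that the sums over boundary sites and over $y\in Q$ converge. Secondarily, one must take care to invoke the genuine killed-walk range estimate in (ii) rather than the naive step-count bound, since only the former gives the correct $\sqrt\lambda$ exponent; everything else is routine manipulation of lattice Green's functions.
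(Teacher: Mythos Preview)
Your proposal is correct and follows essentially the same approach as the paper: both arguments use the killed random walk representation of $\gld$, couple the two domains until the walk reaches the symmetric difference $\tilde Q\triangle R$, and then exploit that (a) this forces the walk to travel distance at least $\lambda^{-1/2}|\log\epsilon|^\delta$ before being killed, and (b) the remaining time/Green's function contributes a factor $\lesssim\lambda^{-1}$ together with the $e^{-c\sqrt\lambda\,\dist(x,Q)}$ decay toward each box $Q$. The only cosmetic difference is that the paper works in continuous time with an independent exponential killing clock and phrases the estimate via $\E[\tau\,\mathbf 1\{\tau_{r(\lambda)}<\tau_\lambda\}]$, whereas you use the discrete-time Green's function and the strong Markov property at the exit from $\tilde Q\cap R$; the content is identical.
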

Let us emphasize here that $\pO[R_1 \triangle R_2]$ denotes the outer boundary of $R_1 \cup R_2\backslash R_1 \cap R_2$.  In particular, it is possible for $x\in R_1 \cap R_2$ to satisfy the hypotheses of this Lemma while being close to the boundary, as long as the part of the boundary it is close to is shared by both $R_1, R_2$.
\begin{remark}
One annoying problem if $d=2$ is that this bound cannot be used as the restrictions on $\ell,L$ are too tight to take $\delta>1$. 
\end{remark}

\begin{proof}
The proof exploits a random walk representation and coupling argument.  Set $R_1=R, R_2 =\tilde{Q}$.
Let $X^{(i)}_t$ for $i \in \{1, 2\}$ denote continuous time simple random walks started from $x$.  In order for things to work out correctly, we take our walks to have exponential holding times of rate $2d$.  For each $i\in \{1, 2\}$, let $\tau_{i}$ denote the exit time of $X^{(i)}$ from $R_i$ and let $\tau_{\lambda}$ be an independent exponential random variable which has rate $\lambda$.  We couple $X^{(1)}_t, X^{(2)}_t$ until the first time the coupled walk hits $R_1 \triangle R_2$.  Call this hitting time $\tau_3$ (it can be infinite).
Then
\begin{align*}
\gld_{R_i, x}&= \epsilon\E_x\left[\int_0^{\tau_{R_i} \wedge \tau_{\lambda}}\alpha_{X^{(i)}_t}\right],\\
\left| \gld_{R_1, x}- \gld_{R_2, x}\right|& =\epsilon \left|\underbrace{\E_x\left[\int_0^{\tau_{1}}\alpha_{X^{1}_t} \mathbf 1\{\tau_{1}< \tau_{\lambda}, \tau_3\}\right]}_{\textrm{I}}-\underbrace{\E_x\left[\int_0^{\tau_{2}}\alpha_{X^{2}_t} \mathbf 1\{\tau_{2}< \tau_{\lambda}, \tau_3\}\right]}_{\textrm{II}} \right|.
\end{align*}

Clearly
\[
\textrm{II} \leq R(Q') \E[\tau_{2}\mathbf 1\{\tau_{2}< \tau_{\lambda}, \tau_3\}]
\]
for any $Q' \in \QQ_L$ which intersects $\tilde{Q}$. 
We claim
\begin{equation}
\label{E:bounda}
\E[\tau_{2}\mathbf 1\{\tau_{2}< \tau_{\lambda}, \tau_3\}] \ls \lambda^{-1} e^{-c|\log \epsilon|^{\delta}}.
\end{equation}
Note that on the event $\{\tau_{2}< \tau_3\}$, $|X^{(i)}_{\tau_{2}}-x| \geq \lambda^{-\frac 12} |\log \epsilon|^{\delta}$.   Thus, if $r(\lambda)= \ffrac 12 \lambda^{-\frac 12} |\log \epsilon|^{\delta}$ and $\tau_{r(\lambda)}$ denotes the first exit time from the $\ell^{\infty}$ ball of radius $r(\lambda)$ around $x$, we can bound the expected value by
\[
\E[\tau_{2}\mathbf 1\{\tau_{2}< \tau_{\lambda}, \tau_3\}]\leq \E[\tau_{\lambda}\mathbf 1\{\tau_{r(\lambda)}< \tau_{\lambda}\}].
\]
The inequality \cref{E:bounda} now follows.  

Term \textrm{I} is estimated similarly. Let $\tau_Q$ be the hitting time of $Q \in \QQ_L$ for $X^{(1)}$.  Then
\[\textrm{I} \leq \sum_{Q \cap R \neq \varnothing}  R(Q) \E_x\left[ \tau_{1} \mathbf 1\{\tau_Q<\tau_{1}< \tau_{\lambda}, \tau_3\}\right].
\] 
Reasoning as above,
\[
 \E_x\left[ \tau_{1} \mathbf 1\{\tau_Q<\tau_{1}< \tau_{\lambda}, \tau_3\}\right] \ls \lambda^{-1}e^{-c |\log \epsilon|^{\delta}}\wedge  e^{-c \lambda^{\frac 12} \dist (x, Q)}
\]
from which the claim follows.  
Note that we are being quite generous here as there should be cancellations which reduce \textrm{I} and \textrm{II}.  
\end{proof}

\subsection{Estimates for Maximizers of $\KK_R$}
\label{S:AuxRelax}
In this subsection we consider the behavior of optimizers of the functional 
\begin{equation}
\label{E:AuxHam1}
\KK_{\mathfrak g^+}(\varphi|\phi)=\sum_{e \cap \mathfrak g^+ \neq 0} \cos(\nabla_e\varphi)-1 + \frac 14\sum_{x\in  \mathfrak g^+} m_x \cos^2(\varphi_x)
\end{equation}
where $m_x= \sum_{y \sim x} [\gld_{\mathscr C^+, y}-\gld_{\mathscr C^, x}]^2$.
To summarize, the properties we will use below are:
\begin{align*}
&\|\phi\|_{\infty, \pO \mathfrak g^+} \leq \ffrac{\pi}6\\
&\mathfrak g^+ \subset \cup_{\Xi_L(Q)=1} Q \\
&\phi|_{\pO \Lambda_N \cup [\mathfrak D^+_{\ffrac L{12}} \cap \pO \mathfrak g^+]}=0.
\end{align*}
\begin{lemma}
\label{L:MaxPrince}
Suppose that $R \subset \mathfrak g^+$ is bounded set.
Consider the functional
\[
\KK_R(\varphi|\tau)
\]
with a boundary condition $\tau$ satisfying $\|\tau\|_{\infty, \pO R} \leq \ffrac \pi 6$.  Then $\KK_R$ has a unique maximizer $(\vartheta_x)_{x \in R}$
and
\[
\|\vartheta\|_{\infty, R} \leq \|\tau\|_{\infty, \pO R}.
\]
\end{lemma}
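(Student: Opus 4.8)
The plan is to prove existence, the sup-norm bound, and uniqueness in that order, exploiting the fact that on the relevant range of angles the functional $\KK_R(\varphi|\tau)$ is strictly concave. First I would observe that the configuration space $\{\varphi_x \in [-\pi/2,\pi/2] : x \in R\}$ with $\varphi = \tau$ on $\pO R$ is compact, and $\KK_R$ is continuous, so a maximizer exists; the only subtlety is to restrict attention to this compact box, which is justified by the norm bound we are about to establish. For the sup-norm bound, suppose $\vartheta$ is a maximizer and let $x_0 \in R$ achieve $\|\vartheta\|_{\infty,R}$; assume WLOG $\vartheta_{x_0} = \max_x \vartheta_x =: a > 0$ (the case of the minimum is symmetric). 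If $a > \|\tau\|_{\infty,\pO R}$, I would perturb by lowering the configuration slightly wherever it exceeds $\|\tau\|_{\infty,\pO R}$: concretely, consider $\varphi^t_x = \vartheta_x - t(\vartheta_x - \|\tau\|_{\infty,\pO R})_+$ for small $t > 0$, which agrees with $\tau$ on $\pO R$. The key point is that each term of $\KK_R$ strictly increases under this move: the hopping term $\cos(\nabla_e \varphi) - 1$ increases because $|\nabla_e \varphi^t| \le |\nabla_e \vartheta|$ (the positive-part truncation at a common level is $1$-Lipschitz and a contraction toward the constant $\|\tau\|_{\infty,\pO R}$), while the mass term $\tfrac14 m_x \cos^2(\varphi_x)$ increases because $|\varphi^t_x| \le |\vartheta_x| \le \pi/2$ and $\cos^2$ is decreasing on $[0,\pi/2]$ — and strictly so at sites where $\vartheta_x$ was strictly larger than $\|\tau\|_{\infty,\pO R}$ and $m_x > 0$. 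This contradicts maximality (after checking that not all such sites have $m_x = 0$, which would force $\vartheta$ constant $= a$ on a component, contradicting the boundary values), giving $\|\vartheta\|_{\infty,R} \le \|\tau\|_{\infty,\pO R} \le \pi/6$.

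For uniqueness, I would argue that $\KK_R$ is \emph{strictly concave} on the box $B := \{\varphi : \|\varphi\|_{\infty,R} \le \pi/6,\ \varphi|_{\pO R} = \tau\}$, viewed as a convex subset of $\R^R$. Indeed, $\varphi \mapsto \cos(\nabla_e \varphi)$ has Hessian $-\cos(\nabla_e\varphi)\, (\nabla_e)^{\otimes 2}$, which is negative semidefinite as long as $|\nabla_e \varphi| < \pi/2$ — and on $B$ we have $|\nabla_e\varphi| \le \pi/3 < \pi/2$. The mass term $\tfrac14 m_x \cos^2(\varphi_x) = \tfrac18 m_x(1 + \cos 2\varphi_x)$ has second derivative $-\tfrac12 m_x \cos(2\varphi_x)$, which is strictly negative whenever $m_x > 0$ since $|2\varphi_x| \le \pi/3 < \pi/2$. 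To get \emph{strict} concavity of the sum on the affine space $\{\varphi|_{\pO R} = \tau\}$, I would note that if $u \ne 0$ is a tangent direction (so $u$ vanishes on $\pO R$) for which the quadratic form vanishes, then $\nabla_e u = 0$ on every edge (using that the hopping Hessians, summed, already control all edge-differences up to the degenerate directions) — hence $u$ is constant on each connected component of $R$, and since $u = 0$ on $\pO R$ and every component of $R$ touches $\pO R$, $u \equiv 0$. Strict concavity of a continuous function on the compact convex set $B$ forces the maximizer to be unique; combined with the sup-norm bound (which shows any maximizer lies in $B$), this completes the proof.

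The main obstacle I anticipate is making the sup-norm argument fully rigorous at the level of \emph{boundary} edges and sites where $m_x = 0$: one must ensure the truncation perturbation genuinely produces a strict increase rather than merely a non-strict one. The clean way around this is to handle uniqueness first via strict concavity on $B$ — which is robust — and then deduce the norm bound by a variational inequality: at the (unique) maximizer, the directional derivative of $\KK_R$ in the direction $\varphi^t$ above must be $\le 0$ at $t = 0^+$, and a direct computation of $\tfrac{d}{dt}\big|_{0^+}\KK_R(\varphi^t|\tau)$ shows it is \emph{strictly positive} unless the truncation does nothing, i.e. unless $\|\vartheta\|_{\infty,R} \le \|\tau\|_{\infty,\pO R}$ already. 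I would also double-check the elementary inequality that the level-truncation map $s \mapsto s - t(s - c)_+$ is a contraction for $0 \le t \le 1$, so that $|\nabla_e\varphi^t| \le |\nabla_e\vartheta|$ holds on \emph{every} edge including those with one endpoint on $\pO R$ — here $c = \|\tau\|_{\infty,\pO R} \ge |\tau_y|$ for $y \in \pO R$, which is exactly what is needed.
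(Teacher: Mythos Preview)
Your approach is correct and essentially self-contained, but it differs from the paper's in an instructive way. The paper also restricts to $[-\ffrac\pi2,\ffrac\pi2]^R$ and also uses a ``push toward the center'' move---a gradient flow toward $[-\ffrac\pi6,\ffrac\pi6]$ rather than your level-truncation---to show that maximizers lie in the interior. The substantive difference is how the sharp bound $\|\vartheta\|_\infty\le\|\tau\|_\infty$ is obtained: you get it directly by truncating at the level $c=\|\tau\|_\infty$, whereas the paper writes the Euler--Lagrange equation for a stationary point, freezes the coefficients to obtain a \emph{linear} elliptic equation $(-L^{\tt C}+M^V)\vartheta=0$ with positive conductances ${\tt C}_e=\sin(\nabla_e\vartheta)/\nabla_e\vartheta$ and positive killing $|V_x|\asymp m_x$, and then invokes positivity preservation of $(-L^{\tt C}+M^V)^{-1}$. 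Your route is shorter for this lemma; the paper's detour through the linearization, however, buys something you do not: the random-walk/Feynman--Kac representation attached to $L^{\tt C}$ and $V$ is exactly what drives the decay estimates in the next two lemmas (\cref{L:Relax,L:BBehavior}), so the paper is setting up machinery it reuses immediately.

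One small correction: your parenthetical that ``all such sites have $m_x=0$ would force $\vartheta$ constant $=a$ on a component'' is not right---$m_x$ depends on $g^{\lambda,D}$, not on $\vartheta$, and can vanish without constraining $\vartheta$. You do not need the mass term for strict increase: since $S=\{x:\vartheta_x>c\}$ is nonempty and has empty intersection with $\pO R$, there is an edge $e$ from $S$ to its complement, and on that edge the contraction $s\mapsto s-t(s-c)_+$ strictly shrinks $|\nabla_e\vartheta|\in(0,\pi]$, hence strictly increases $\cos(\nabla_e\vartheta)$. This one edge already yields the contradiction, so your directional-derivative workaround is unnecessary (though also correct).
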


\begin{proof}[Proof of \cref{L:MaxPrince}]
Clearly we may assume $\vartheta_x \in [-\ffrac \pi2, \ffrac \pi2]$, otherwise the configuration cannot maximise $\KK_R$.
Suppose first that $\vartheta_x$ is a maximizer and that
 $\vartheta_x \in (-\ffrac{\pi}{2}, \ffrac{\pi}{2})$ for all $x \in R$.

We may think of the equation satisfied by a stationary point of $\KK_R$ in the form 
\begin{equation}
\label{E:Stat}
-\sum_{y \sim x} {\tt C}_{xy}[\vartheta_x- \vartheta_y] + V_x \vartheta_x = 0.
\end{equation}
where the conductances and potentials are given in terms of $\vartheta$ by
\begin{align*}
&{\tt C}_{e} = \frac{\sin(\nabla(e, \vartheta))}{\nabla(e, \vartheta)} \text{ for all $e \cap R \neq \varnothing$},\\
&V_x= -m_x \cos(\vartheta_x)\frac{\sin(\vartheta_x)}{2\vartheta_x} \text{ for all $x \in Q_{L}$}.
 \end{align*}
Because we are assuming $\vartheta_x \in (-\ffrac{\pi}{2}, \ffrac{\pi}{2})$ for all $x \in R$, ${\tt C}_{e} > 0$  $V_x< 0$.

Thinking of ${\tt C}_e, V_x$ as given and fixed,
 it is natural to define corresponding \textit{linear} operators on functions $f$ on $R \cup \pO R$.
\begin{align*}
&-L^{{\tt C}}\cdot f_x = \frac 1 {{\tt C}_x}\sum_{y \sim x} {\tt C}_{xy}[f_x- f_y] \text{ for $x \in R, y \in R \cup \pO R$}\\
&M^V\cdot f_x= \frac{|V_x|}{{\tt C}_x} f_x \text{ for $x \in R$}
\end{align*}

We may rewrite \cref{E:Stat} as
\begin{equation}
\label{E:Stat1}
(-L^{{\tt C}}+ M^V)\cdot f_x=0
\end{equation}
subject to  the boundary condition $f|_{\pO R} \equiv \tau$.
The operator $(-L^{{\tt C}}+ M^V)^{-1}$ is positivity preserving, so if $f^{\tau'}$ denotes the solution with boundary condition $\tau'$ then
\begin{align*}
-f^{|\tau|}\leq f^{\tau} \leq f^{|\tau|} \text{ and}
\| f^{|\tau|}\|_{\infty, R} \leq \|\tau\|_{\infty, \pO R}.
\end{align*}
Hence, if we can show that maximizers do not take on the values $\{-\ffrac \pi2, \ffrac \pi2\}$ the lemma will be proved
as $\KK$ is uniformly convex in the region
\[
\theta_x \in [-\ffrac \pi 6, \ffrac \pi6] \: \: \forall x \in R.
\]

To show this let us in introduce a vector field on $[-\ffrac \pi2, \ffrac \pi2]\times R$ of the form
\[
X(\theta, x) = \begin{cases}
-[\theta-\ffrac{\pi}{6}] \quad \text{ if $\theta> \ffrac{\pi}{6}$},\\
-[\theta + \ffrac{\pi}{6}] \quad \text{ if $\theta< \ffrac{-\pi}{6}$},\\
0 \text{ otherwise}.
\end{cases}
\]
and consider the flow $\partial_t \theta(t, x):= X(\theta(t, x), x)$.  Then from any initial configuration $\theta_x(0)\in [-\ffrac{\pi}{2}, \ffrac{\pi}2]$ we find that $\KK(\theta(t)|\tau)$ is increasing, strictly so if there is a space time point $(t, x)$ so that $|\theta(t, x)| > \ffrac{\pi}{6}$.
Our claims then follow immediately.
\end{proof}

We next claim that the maximizer $\vartheta_x$ of $\KK_{\mathfrak g^+}(\theta|\tau)$ is very small for any $x\in \mathfrak g^+$ so that either $\dist(x,  \pO \mathfrak g^+)$ is (much) larger than $\epsilon^{-1}$ or $\dist(x,  \textrm{M} \cap[\pi \Lambda_N \cup \mathfrak D^+_{\ffrac{L}{12}}])$ small enough.
\begin{lemma}[Bulk Behavior]
\label{L:Relax}
There exists $\epsilon_0 \in (0, 1)$ so that if $0<\epsilon< \epsilon_0$ the following holds.  Let $\vartheta$ optimize
\[
 \KK_{\mathfrak g^+}(\vartheta|\tau) \text{ with }
\|\tau_x\|_{\infty, \pO \mathfrak g^+}  \leq \ffrac{\pi}{6} .
\]
Then for any $x \in \mathfrak g^+$ so that $\dist(x, \pO \mathfrak g^+)\geq \ffrac L8$
\[
0 \leq |\vartheta_x| \ls \exp\left[-c\log^4 \epsilon \right].
\]
\end{lemma}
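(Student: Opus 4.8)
The plan is to turn the optimization into a \emph{linear} elliptic problem, represent its solution by a Feynman--Kac formula, and read off exponential decay from the fact that a random walk started a distance $\gtrsim L$ inside $\mathfrak g^+$ must accumulate a lot of the potential $m_x$ before it can leave---the key input being that $m_x$ has averages of order $\epsilon^2$ on scale $\log^{90}L$ throughout the interior of $\mathfrak g^+$. Concretely, I would first invoke \cref{L:MaxPrince}: $\vartheta$ exists, is unique, and $|\vartheta_x|\leq\|\tau\|_{\infty,\pO\mathfrak g^+}\leq\ffrac\pi6$ everywhere. Hence in \eqref{E:Stat} the conductances ${\tt C}_e=\sin(\nabla_e\vartheta)/\nabla_e\vartheta$ lie in $[c_0,1]$ (since $|\nabla_e\vartheta|\leq\ffrac\pi3$) and $|V_x|=m_x\cos(\vartheta_x)\sin(\vartheta_x)/(2\vartheta_x)\geq c_1 m_x$ with $c_0,c_1>0$ universal, so, freezing these coefficients, $\vartheta$ is the unique solution of $(-L^{\tt C}+M^V)f=0$ on $\mathfrak g^+$ with $f=\tau$ on $\pO\mathfrak g^+$. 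With $X$ the continuous-time walk with generator $L^{\tt C}$ (bounded, uniformly elliptic conductances) started at $x$ and $T$ its first hitting time of $\pO\mathfrak g^+$, Feynman--Kac together with $|V_y|/{\tt C}_y\geq c_2 m_y$ gives
\[
|\vartheta_x|\leq\|\tau\|_{\infty,\pO\mathfrak g^+}\,\E_x\Bigl[\exp\Bigl(-c_2\int_0^{T}m_{X_s}\,ds\Bigr)\Bigr]\leq\ffrac\pi6\,\E_x\Bigl[\exp\Bigl(-c_2\int_0^{T}m_{X_s}\,ds\Bigr)\Bigr].
\]

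Next I would establish the interior estimate: with $r=\log^{90}L$, for every $z\in\mathfrak g^+$ with $\dist(z,\pO\mathfrak g^+)\geq\ffrac L{16}$ one has $r^{-d}\sum_{\|y-z\|_\infty\leq r}m_y\gtrsim\epsilon^2$. Since $\mathfrak g^+\subset\bigcup_{\Xi_L(Q)=1}Q$, the point $z$ lies in a good $L$-box; choosing $\eta\in\NN$ so that $z$ is roughly centred in the nice box $Q_\eta$ gives $\dist_\infty(z,\pO Q_\eta)\geq\ffrac L4$, so \eqref{E:r1} yields $r^{-d}\sum_{\|y-z\|_\infty\leq r}m_{Q_\eta,y}\geq A\epsilon^2$. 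As $\mathfrak g^+\subset\mathscr C^+$, one also has $\dist(z,\pO\mathscr C^+)\geq\dist(z,\pO\mathfrak g^+)\geq\ffrac L{16}\gg\lambda^{-1/2}|\log\epsilon|^2$, so \cref{L:TechBadRandomeness} (with $\delta=2$) together with the gradient bound \eqref{e:r3} lets one replace $\gld_{Q_\eta}$ by $\gld_{\mathscr C^+}$ in the potential at a cost $\ls\epsilon|\log\epsilon|^{60}e^{-c|\log\epsilon|^2}=o(\epsilon^2)$, which proves the claim; the same transfer also gives $m_y\ls\epsilon^2|\log\epsilon|^{60}$ pointwise at such $y$.

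Finally I would carry out the random-walk estimate. Set $T^*=\ffrac{L}{16\epsilon}$ and let $T_0$ be the exit time of $X$ from $B_\infty(x,\ffrac L{16})$, which is contained in $\mathfrak g^+$ because $\dist(x,\pO\mathfrak g^+)\geq\ffrac L8$; note $T_0\leq T$ and $T^*<(\ffrac L{16})^2$. Bounding $\E_x[e^{-c_2\int_0^{T}m}]\leq\E_x[e^{-c_2\int_0^{T_0}m}]$, I split on $\{T_0\leq T^*\}$: on this event the integrand is $\leq1$ and a sub-Gaussian exit estimate for the uniformly elliptic walk gives $\pr_x(T_0\leq T^*)\ls e^{-cL^2/T^*}$ with $L^2/T^*=16\epsilon L\asymp|\log\epsilon|^4$; on $\{T_0>T^*\}$ the walk stays in $B_\infty(x,\ffrac L{16})$---hence in the region of the interior estimate---throughout $[0,T^*]$, and decomposing $[0,T^*]$ into excursions between successive $2r$-displacements, each completed excursion contributes, conditionally on its starting point, expected mass $\gtrsim\epsilon^2 r^2$ (interior estimate plus the lower bound $r^{2-d}$ for the Green's function of $X$ killed on leaving $B_\infty(w,2r)$, evaluated at $(w,y)$ with $\|y-w\|_\infty\leq r$). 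At least $cT^*/r^2$ excursions are completed by time $T^*$ off an event of probability $\leq e^{-cT^*/r^2}$, and since $T^*/r^2\asymp\epsilon^{-2}|\log\epsilon|^{-176}\gg|\log\epsilon|^4$, a Freedman/Azuma-type concentration estimate (after discarding the exponentially rare overlong excursions, using $m_y\ls\epsilon^2|\log\epsilon|^{60}$) forces $\int_0^{T^*}m_{X_s}\,ds\gtrsim\epsilon^2 T^*=\epsilon L/16$ off an event of probability $\ls e^{-c|\log\epsilon|^4}$. Combining the two cases, together with the trivial $|\vartheta_x|\geq0$,
\[
|\vartheta_x|\ls e^{-c'|\log\epsilon|^4}=e^{-c'\log^4\epsilon}.
\]

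I expect the main obstacle to be this last step: propagating the merely \emph{averaged} lower bound on $m_x$ through the killed walk while preserving the scale matching $\epsilon L\asymp|\log\epsilon|^4$ (equivalently $L^2/T^*\asymp\epsilon^2 T^*$) that produces the stated exponent---taking the intermediate time $T^*$ too large ruins the $\{T_0\leq T^*\}$ contribution, too small ruins the accumulated-mass contribution, and the choice $T^*\asymp L/\epsilon$ is forced by this balance. By comparison, the linearization, the Feynman--Kac representation, and the heat-kernel and Green's-function estimates for uniformly elliptic conductances are all routine.
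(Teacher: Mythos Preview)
Your proposal is correct and follows essentially the same route as the paper: linearize via \cref{L:MaxPrince}, pass to the Feynman--Kac representation with the uniformly elliptic conductances ${\tt C}_e$, transfer the potential $m_x$ from $\gld_{\mathscr C^+}$ to $\gld_{Q}$ via \cref{L:TechBadRandomeness} so as to invoke the averaged lower bound from $\AA_{\log^{90}L}$, and then balance a sub-Gaussian exit estimate against a martingale concentration bound on the accumulated mass, with the optimal intermediate time scale $T^*\asymp L/\epsilon$ producing the exponent $\epsilon L\asymp|\log\epsilon|^4$.

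The only organizational difference is that the paper discretizes $[0,T^*]$ into \emph{fixed} time windows of length $\log^{180}L\sim r^2$ and uses the Gaussian heat-kernel bounds of \cref{T:Gaussian} to lower-bound the conditional expectation of $\int m_{X_t}\,dt$ over each window, whereas you discretize by successive $2r$-displacement excursions and use the Green's function lower bound instead. The paper's choice has the minor advantage that the martingale increments $\mathfrak I_i$ are then deterministically bounded by $\ls\epsilon^2|\log\epsilon|^{240}$ (pointwise upper bound on $m$ times the fixed window length), so Azuma applies without the extra step of truncating overlong excursions that you flag; but the two decompositions are interchangeable and yield the same estimate.
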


\begin{lemma}[Boundary Behavior]
\label{L:BBehavior}
There exists $\epsilon_0 \in (0, 1)$ so that if $0<\epsilon< \epsilon_0$ the following holds.   Let $\vartheta$ maximize $ \KK_{\mathfrak g^+}(\vartheta|\tau)$
subject to a boundary condition $\tau$ with
\[
\|\tau_x\|_{\infty, {\pO \mathfrak g^+}} \leq \ffrac{\pi}{6} \text{ and }
\tau_x=0
\text{ for all $x\in \pO \mathfrak g^+ \cap [\pO \Lambda_N \cup \mathfrak D^+_{\ffrac{L}{12}}]$}.
\]
Then
\[
0 \leq |\vartheta_x| \ls \exp\big[-c\log^4 \epsilon \big]\\
\]
whenever $\dist(x, \textrm{M}^{+} \cap  [\pI \Lambda_N \cup \mathfrak D^+_{\ffrac{L}{12}}]) < \ffrac{L}{4}$.
\end{lemma}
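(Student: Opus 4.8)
The plan is to reduce the statement to the exponential decay estimate that already underlies \cref{L:Relax}. Once the optimization problem is linearized via \cref{L:MaxPrince}, the only genuinely new ingredient is a geometric observation: for the points $x$ in question, the portion of $\pO\mathfrak g^+$ on which the boundary data $\tau$ may be nonzero lies at distance $\gtrsim L$ from $x$. Granting that, the decay of $\vartheta$ away from the support of its boundary data finishes the argument, just as in \cref{L:Relax}.

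\textbf{Linearization and a random-walk bound.} By \cref{L:MaxPrince} applied with $R=\mathfrak g^+$, the maximizer $\vartheta$ is unique, $\|\vartheta\|_{\infty,\mathfrak g^+}\le\ffrac\pi6$, and it solves \eqref{E:Stat1}, i.e.\ $(-L^{\tt C}+M^V)\cdot\vartheta=0$ in $\mathfrak g^+$ with $\vartheta|_{\pO\mathfrak g^+}=\tau$; since $|\vartheta_x|$, and hence each $|\nabla_e\vartheta|$, is at most $\ffrac\pi3$, the conductances satisfy $c_0\le{\tt C}_e\le 1$ and the nonnegative killing potential satisfies $M^V_x=|V_x|/{\tt C}_x\gtrsim m_x$. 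Because $\mathfrak g^+\subset\cup_{\Xi_L(Q)=1}Q$, every $L$-box meeting $\mathfrak g^+$ is good, so by \eqref{E:r1} together with the localization estimate \cref{L:TechBadRandomeness} the coarse-grained potential obeys $r^{-d}\sum_{\|y-x\|_\infty\le r}m_y\gtrsim\epsilon^2$ with $r=\log^{90}L\sim|\log\epsilon|^{90}$. Let $X_t$ be the continuous-time walk generated by $L^{\tt C}$ (comparable to simple random walk up to a bounded time change), killed at rate $M^V_\cdot$ and upon leaving $\mathfrak g^+$, and set $S:=\pO\mathfrak g^+\cap[\pO\Lambda_N\cup\mathfrak D^+_{\ffrac L{12}}]$, on which $\tau\equiv0$ by hypothesis. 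The Feynman--Kac representation then gives, with $T$ the exit time of $\mathfrak g^+$,
\[
|\vartheta_x|=\Bigl|\E_x\bigl[\tau(X_{T})\,e^{-\int_0^{T}M^V(X_s)\,ds}\bigr]\Bigr|\le\tfrac\pi6\,\pr_x\bigl(X\text{ leaves }\mathfrak g^+\text{ through }\pO\mathfrak g^+\setminus S\text{ before being killed}\bigr).
\]

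\textbf{The geometric claim.} I claim that if $x\in\mathfrak g^+$ and $\dist\bigl(x,\textrm M^+\cap[\pI\Lambda_N\cup\mathfrak D^+_{\ffrac L{12}}]\bigr)<\ffrac L4$, then $\dist(x,\pO\mathfrak g^+\setminus S)\ge\ffrac L9$ for $\epsilon$ small. To see this one chases definitions. Since $\textrm M$ is $\mathfrak M$ fattened by at most $\ffrac\ell2$, any $y\in\textrm M$ satisfies $\dist(y,\pO\mathfrak C\setminus\Lambda_N^c)\ge\ffrac L2-100-\ffrac\ell2$. On the other hand $\mathfrak g^+=\mathfrak f\cap\mathscr C^+$ with $\mathscr F\subset\mathfrak f\subset\{z\in\mathfrak C^+:\dist(z,\pO\mathfrak C\cap\Lambda_N)\ge\ffrac L9\}$, the last inclusion being \cref{C:E1}; hence each $z\in\pO\mathfrak g^+\setminus S$ belongs to $\pO\mathfrak f$, lies in $\Lambda_N$ (else it would be in $\pO\Lambda_N\subset S$), and — not belonging to $\mathscr F$ — satisfies $z\notin\mathfrak C^+$ or $\dist(z,\pO\mathfrak C^+\cap\Lambda_N)<\ffrac L8$; using $\mathfrak f\subset\mathfrak C^+$ in the first alternative, either way $\dist(z,\pO\mathfrak C\setminus\Lambda_N^c)<\ffrac L8$. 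Thus $\pO\mathfrak g^+\setminus S\subset\{z:\dist(z,\pO\mathfrak C\setminus\Lambda_N^c)<\ffrac L8\}$, and the triangle inequality $\|x-z\|\ge\dist(y,\pO\mathfrak C\setminus\Lambda_N^c)-\|x-y\|-\dist(z,\pO\mathfrak C\setminus\Lambda_N^c)>\ffrac L2-\ffrac L4-\ffrac L8-(100+\ffrac\ell2)$ proves the claim once $\epsilon$ is small enough that $100+\ffrac\ell2<\ffrac L{72}$ (using $L\to\infty$ and $\ell\ll L$).

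\textbf{Conclusion.} On the event in the displayed probability the walk travels a distance $\ge\ffrac L9$ from $x$ before being killed by $M^V$, and this is exactly the survival estimate that is carried out in the proof of \cref{L:Relax}: partitioning such a trajectory into $\gtrsim L/r'$ successive crossings of annuli of an intermediate width $r'$ (any fixed $r'$ with $r\ll r'\ll L$, say $r'=\lfloor\epsilon^{-1}\rfloor$) and extracting from each crossing a killing factor bounded away from $1$ via the scale-$r$ lower bound $m_x\gtrsim\epsilon^2$ — the separation $L\sim\epsilon^{-1}|\log\epsilon|^4\gg|\log\epsilon|^{90}\sim r$ being precisely what makes this effective — one obtains that the probability is $\le\exp(-c|\log\epsilon|^4)$. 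Since $|\log\epsilon|^4=\log^4\epsilon$, this gives $|\vartheta_x|\ls\exp(-c\log^4\epsilon)$, and the corresponding bound on $\mathfrak g^-$ follows by the reflection symmetry across the $e_2$ axis. The analytic heart of the matter — the exponential survival bound — is thus inherited from \cref{L:Relax}; the step demanding real attention here is the geometric one, namely keeping straight how $S$, $\mathfrak f$, $\mathscr F$, $\mathscr C^+$ and \cref{C:E1} combine so as to confine the nonvanishing boundary data of $\vartheta$ to an $\ffrac L8$-collar of the free boundary $\pO\mathfrak C\setminus\Lambda_N^c$, which is exactly where $\textrm M^+$ is far away from.
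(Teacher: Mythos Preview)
Your proposal is correct and follows essentially the same approach as the paper: linearize via \cref{L:MaxPrince}, write the Feynman--Kac representation, observe that the contribution from the zero part $S$ of the boundary vanishes, and then show that the remaining boundary $\pO\mathfrak g^+\setminus S$ is at distance $\gtrsim L$ from $x$ so that the survival estimate of \cref{L:Relax} applies. The paper's proof states the geometric fact as $B_{L/5}(x)\cap\pO\mathfrak g^+\subset[\pO\Lambda_N\cup\pI\mathfrak D^+_{L/12}]$ without justification; your unpacking of it through $\mathscr F\subset\mathfrak f$, \cref{C:E1}, and the inclusion $\pO\mathfrak g^+\setminus S\subset\{z:\dist(z,\pO\mathfrak C\cap\Lambda_N)<L/8\}$ is a genuine addition and in fact shows slightly more, namely that the conclusion holds for any $x$ with $\dist(x,\textrm M^+)<L/4$, not just those near $\pI\Lambda_N\cup\mathfrak D^+_{L/12}$. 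One small remark: your parenthetical description of the survival bound via annulus crossings is heuristic; the actual mechanism in \cref{L:Relax} is the martingale decomposition of $\int_0^{\upsilon}m_{X_t}\,dt$ combined with Azuma's inequality and the event $F=\{\upsilon_x(1/4)>R^2/M\}$, so it is cleanest simply to invoke that proof verbatim (with $R$ replaced by $L/9$) rather than re-sketch it.
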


Together these Lemmas imply
\begin{corollary}
\label{C:Contract}
There exists $\epsilon_0 \in (0, 1)$ so that if $0<\epsilon< \epsilon_0$ the following holds.  If $\dist(x, M) \leq \ffrac{L}{5}$ then
\[
0 \leq |\vartheta_x|
\ls
\exp\big[-c\log^4 \epsilon \big]
\]
\end{corollary}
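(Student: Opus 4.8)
The plan is to read off \cref{C:Contract} from \cref{L:Relax} and \cref{L:BBehavior} by a dichotomy on $\dist(x,\pO\mathfrak g^+)$. Since the maximizer $\vartheta$ lives on $\mathfrak g^+\subset\mathfrak C^+(\Gamma)$, it is enough to treat $x\in\mathfrak g^+$; and then $\dist(x,\textrm M(\Gamma))\le\ffrac L5$ forces $\dist(x,\textrm M^+(\Gamma))\le\ffrac L5$ as well, since the $\pm$ label $\Psi$ is locally constant on the connected components of $\delta(\Gamma)\setminus\Sp(\Gamma)$ and distinct such components are separated by the slab $\Sp(\Gamma)$, whose thickness exceeds $\ffrac L5$ (the contributions of $\textrm M^-(\Gamma)$ are handled by the mirror construction for $\Phi^-$). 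If $\dist(x,\pO\mathfrak g^+)\ge\ffrac L8$, then \cref{L:Relax} applies verbatim and gives $|\vartheta_x|\ls\exp[-c\log^4\epsilon]$.

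In the remaining case $\dist(x,\pO\mathfrak g^+)<\ffrac L8$, the point is to verify that the piece of $\pO\mathfrak g^+$ that $x$ is close to is one of the ``vanishing'' pieces $\pO\Lambda_N\cup\mathfrak D^+_{\ffrac L{12}}$ on which the boundary data $\phi$ is identically $0$, after which \cref{L:BBehavior} applies with $\textrm M^+$ in place of $\textrm M$. The boundary $\pO\mathfrak g^+$ splits into: (a) the portion lying on $\pO\mathfrak C^+(\Gamma)\cap\Lambda_N$, which is contained in $\pO\mathfrak C(\Gamma)\setminus\Lambda_N^c$ and hence, by the very definition of $\textrm M(\Gamma)$, at distance $\ge\ffrac L2-100$ from $\textrm M^+(\Gamma)$; (b) the part of $\pO\mathfrak f$ on which $|\phi|=\ffrac\pi6$, which is kept out of a large neighbourhood of $\textrm M^+(\Gamma)$ by \cref{C:E1} together with the fact that $|\phi|\ll\ffrac\pi6$ throughout $\mathfrak C^+(\Gamma)\setminus\mathfrak D^+_{\ffrac L{12}}$ near $\textrm M^+(\Gamma)$ -- a consequence of the absence of defects (\cref{L:Defect}), the smallness $\|\gld_{\mathscr C^+}\|_\infty\ll\ffrac\pi{12}$ (\cref{L:TechBadRandomeness}), and the interpolations used in Modifications~2--3; and (c) the sites of $\pO\Lambda_N$ and of the layer $\pI\mathfrak D^+_{\ffrac L{12}}$ produced by excising $\mathfrak D^+_{\ffrac L{12}}$ from $\mathfrak C^+(\Gamma)$. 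For $\epsilon$ small we have $\ffrac L5+\ffrac L8<\ffrac L2-100$, so the pieces in (a) and (b) are out of range of $x$; hence $x$ is within $\ffrac L8$ of a site in $\pO\Lambda_N\cup\mathfrak D^+_{\ffrac L{12}}$, and combining this with $\dist(x,\textrm M^+(\Gamma))\le\ffrac L5$ and tracking the radii gives $\dist\bigl(x,\textrm M^+(\Gamma)\cap[\pI\Lambda_N\cup\mathfrak D^+_{\ffrac L{12}}]\bigr)<\ffrac L4$. \cref{L:BBehavior} then closes this case, and the two cases together yield the Corollary.

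The main obstacle is the geometric bookkeeping in the second case: one must check that the radii $\ffrac L5,\ffrac L8,\ffrac L4,\ffrac L9,\ffrac L{12},\ffrac L2-100$ fit together, and in particular that near $\textrm M^+(\Gamma)$ (outside the bad region $\mathfrak D^+_{\ffrac L{12}}$) the free boundary $\{|\phi|=\ffrac\pi6\}$ of $\mathfrak f$ cannot intrude -- this is the only place where $d\le 3$ enters, through \cref{L:Defect}. Everything else is a direct appeal to \cref{L:Relax} and \cref{L:BBehavior}.
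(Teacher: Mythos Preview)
Your approach is exactly the one the paper intends: the paper states only that \cref{L:Relax} and \cref{L:BBehavior} together imply the corollary, and your dichotomy on $\dist(x,\pO\mathfrak g^+)$ together with the geometric elimination of the boundary pieces (a) and (b) is the natural way to make this implication explicit. The only step that could use one more line is the last one: rather than arguing that $x$ is close to the \emph{intersection} $\textrm{M}^+\cap[\pI\Lambda_N\cup\mathfrak D^+_{\ffrac L{12}}]$, it is cleaner to observe directly (as in the proof of \cref{L:BBehavior}) that your analysis of (a)--(c) already gives $B_{\ffrac L5}(x)\cap\pO\mathfrak g^+\subset\pO\Lambda_N\cup\pI\mathfrak D^+_{\ffrac L{12}}$, which is exactly the input that proof uses.
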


The proofs require the following technical input.  Let $X_t$ be a continuous time random walk on $\Z^d$ with conductances $(\textrm{C}_e)_{e \in \EE}$ and let $\Q_x^{\textrm{C}}$ be the probability measure for the process.
\begin{theorem}[Gaussian Bounds \cite{Delmotte}]
\label{T:Gaussian}
Suppose there exist constants $A, B>0$ so that the conductances $A<\tt C_e<B$  for all $e$.  Then there exist constants $c_l, C_l, c_g, C_g$ so that for all $t> 1$ and all $x \in \Z^d$
\begin{equation}
C_l t^{-\ffrac d2} e^{-c_l \ffrac{\|x-y\|_2^2}{t} \wedge(1+ \|x-y\|_2)}\leq \Q^{\tt C}_x[ X_t =y] \leq C_g t^{-\ffrac d2} e^{-c_g \ffrac{\|x-y\|_2^2}{t} \wedge(1+ \|x-y\|_2)}
\end{equation}
\end{theorem}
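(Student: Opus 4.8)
The statement is the specialization of Delmotte's theorem (\cite{Delmotte}) to the weighted graph $(\Z^d,{\tt C})$, so the plan is to verify the two structural hypotheses of that theorem for this graph and then quote it. View $X_t$ as the Markov process generated by $L^{{\tt C}}f(x)=\mu_x^{-1}\sum_{y\sim x}{\tt C}_{xy}\bigl(f(y)-f(x)\bigr)$ with vertex measure $\mu_x=\sum_{y\sim x}{\tt C}_{xy}$. Since $A<{\tt C}_e<B$ and every vertex of $\Z^d$ has degree $2d$, we have $\mu_x\in[2dA,2dB]$, so $\mu$ is comparable to counting measure uniformly in $x\in\Z^d$. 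Hence it suffices to establish two-sided Gaussian bounds for the symmetric heat kernel $q_t(x,y)=\Q^{{\tt C}}_x[X_t=y]\,/\,\mu_y$; multiplying back by $\mu_y\asymp1$ only affects the constants $C_l,C_g$.

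First I would check \emph{volume doubling}: since $\mu_x\asymp1$, one has $\mu\bigl(B(x,r)\bigr)\asymp|B(x,r)|\asymp(1+r)^d$ uniformly, whence $\mu(B(x,2r))\le C_d\,\mu(B(x,r))$ with $C_d$ depending only on $d$. Second, the \emph{Poincaré inequality}: for every ball $B=B(x_0,r)$ and every $f\colon B\to\R$ with mean $\bar f=|B|^{-1}\sum_{x\in B}f(x)$,
\[
\sum_{x\in B}|f(x)-\bar f|^2\ \ls\ r^2\!\!\sum_{\substack{x\sim y\\ x,y\in B}}\!\!|f(x)-f(y)|^2 ,
\]
the implicit constant depending only on $d$. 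Because $A<{\tt C}_e<B$ and $\mu_x\asymp1$, the left- and right-hand sides of the corresponding $\mu$-weighted inequality differ from those above by bounded factors, so this reduces to the classical $L^2$ Poincaré inequality on discrete cubes in $\Z^d$, which is standard (telescoping in each coordinate direction).

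With volume doubling and the Poincaré inequality in hand, Delmotte's theorem yields the parabolic Harnack inequality for $(\Z^d,{\tt C})$ and, equivalently, the two-sided Gaussian estimate: in the near-diagonal regime $\|x-y\|_2\le t$,
\[
\frac{c_l}{t^{d/2}}\,e^{-C_l\|x-y\|_2^2/t}\ \le\ q_t(x,y)\ \le\ \frac{c_g}{t^{d/2}}\,e^{-C_g\|x-y\|_2^2/t},
\]
which is exactly the asserted bound, since there the truncation $\|x-y\|_2^2/t\wedge(1+\|x-y\|_2)$ equals $\|x-y\|_2^2/t$. In the complementary regime $\|x-y\|_2>t$ the truncation becomes $1+\|x-y\|_2$: the upper bound $q_t(x,y)\ls t^{-d/2}e^{-c\|x-y\|_2}$ is the Carne--Varopoulos bound, and the matching lower bound follows by chaining through $\asymp\|x-y\|_2$ intermediate vertices spaced at unit distance along a geodesic and applying the near-diagonal lower bound on each unit step (legitimate because ${\tt C}_e\ge A>0$ and, the walk being continuous time, $q_s(z,z')>0$ for every $s>0$). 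Restricting to $t>1$ removes the small-time range and, unlike a lazy discrete-time chain, the continuous-time walk carries no parity obstruction, so no further adjustment is needed.

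All of the analytic substance of the theorem lies inside Delmotte's equivalence — its proof runs a Moser iteration to pass from volume doubling plus Poincaré to the elliptic and then parabolic Harnack inequality, and a Bombieri--Giusti/Grigor'yan-style mean-value-and-chaining argument to convert that into the Gaussian kernel bounds — and re-deriving it is outside the scope here; this is the genuinely hard part, but it is done once and for all in \cite{Delmotte}. On our side the only real point is the verification of the two hypotheses, which is routine precisely because bounded conductances on $\Z^d$ inherit both doubling and Poincaré from the unit-weight lattice; the one mild subtlety is fixing the normalization of the walk (the variable-speed process above versus the constant-speed version), but the two-sided bounds are insensitive to this up to constants.
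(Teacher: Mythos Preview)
Your proposal is correct and matches the paper's approach: the paper does not prove this theorem at all but simply quotes it from \cite{Delmotte} as a black box. Your verification of volume doubling and the Poincar\'e inequality for bounded conductances on $\Z^d$ is more than the paper provides, and is exactly the right way to justify that Delmotte's result applies.
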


\begin{proof}[Proof of \cref{L:Relax}]
The proof uses the Feynman-Kac formula and Azuma's inequality. Let $\vartheta$ be the unique maximizer of $\KK_{\mathfrak g^+}$ and set
\begin{align*}
&{\tt C}_{e} = \frac{\sin(\nabla(e, \vartheta))}{\nabla(e, \vartheta)} \text{ for all $e \cap \mathfrak g^+ \neq \varnothing$},\\
&V_x= -m_x \cos(\vartheta_x)\frac{\sin(\vartheta_x)}{2\vartheta_x} \text{ for all $x \in {\mathfrak g^+}$}.
 \end{align*}
By \cref{L:MaxPrince}, there are constants $A, B>0$ so that $A<\ffrac{|V_x|}{m_x},  \texttt{C}_e< B$.  In order to be able to apply the bounds of \cref{T:Gaussian}, we extend $\tt{C}_e=1$ for all $e$ such that $e \cap \mathfrak g^{+}= \varnothing$.

Let $X_t$ be a continuous time Markov chain with conductances $\tt{C}_e$ started from $x \in \mathfrak g^{+}$, $\Q_x$ denoting expectation of the corresponding measure.  Let
\[
R= \dist(x, \pO \mathfrak g^+) \wedge \ffrac L2
\]
By assumption $R> \ffrac L8$.
We define the stopping time
$\upsilon_{x}(\ffrac 1k)$ as the first exit time of $X_t$ from $\{y: \|x-y\|_{\infty} \leq \ffrac{R}{k}\}$ and set $\upsilon_{\mathfrak g^{+}}$ to be the first exit of $X_t$ from $\mathfrak g^{+}$.
Using the Feynman-Kac representation and the \textit{ a priori} bound on $|V_x|$,
\begin{equation}
|\vartheta_x|= \left|\mathbb Q_x\left[e^{-\int_{0}^{\upsilon_{\mathfrak g^+}} |V_{X_t}| \textrm d t}\tau_{X_{\upsilon_{\mathfrak g^+}}}\right] \right| \leq \mathbb Q_x\left[e^{-A\int_{0}^{\upsilon_{x}(1/2)} m_{X_t} \textrm d t}\right] \|\tau\|_{\infty, \pO \mathfrak g^{+}}.
\end{equation}

Let $\mathcal F_{t}$ denote the natural filtration of $X_{t \wedge \upsilon_x(1/2)}$.  For any sequence of times
\[
s_0=0< s_1< \dots < s_J< \dist(x, \pO \mathfrak g^{+})^2=s_{J+1},
\]
let
\[
\mathfrak I_{i} = \int_{s_{i}\wedge \upsilon_{x}(1/2)}^{s_{i+1}\wedge \upsilon_x(1/2)} m_{X_t}\textrm d t- \Q_x\left[\int_{s_{i}\wedge \upsilon_x(1/2)}^{s_{i+1}\wedge \upsilon_x(1/2)} m_{X_t}\textrm d t|\mathcal F_{s_{i}}\right]
\]
and note that
\[
\int_{0}^{\upsilon_x(1/2)}  m_{X_t}\textrm d t\geq \sum_{i=0}^{J+1} \mathfrak I_i + \sum_{i=0}^J \mathbf 1\{s_{i}< \upsilon_x(1/2) \}\Q_{X_{s_{i}}}\left[\int_{0}^{s_{i+1}-s_{i}} m_{X_t}\textrm{d} t \mathbf 1\{s_{i+1}-s_i< \upsilon_x(1/2) \}\right]
\]
where the strong Markov property has been used to evaluate the second term and we note that the first sum is of martingale increments. We choose $s_{i+1}-s_{i}\equiv \log^{180}(L)$ for all $i$.

Recall that $\QQ_L'$ denotes the set of shifted cubes $\{Q(r+ \eta): Q\in \QQ_L, \eta\in \NN\}$.
For each $x \in \mathfrak g^{+}$, we can find a cube $Q$ of sidelength $L$ so that 
\begin{align*}
&Q\in \QQ_L' \text{ and }\Xi_L(Q)=1,\\
&\left\{y: \|x-y\|_{\infty} \leq R\right\} \subset Q.
\end{align*}

Consider the increment
\[
\textrm{I}=\Q_{z}\left[\int_{0}^{s_{i+1}-s_{i}} m_{X_t}\textrm{d} t  \mathbf 1\{s_{i+1}-s_i< \upsilon_x(1/2)\} \right] 
\]
By  \cref{L:TechBadRandomeness} and because $\mathscr C^+ \subset \cup_{\Xi_L(Q')=1}Q'$
\[
m_{X_t}= \sum_{y \sim X_t} [\nabla_{\langle X_t y \rangle} \gld_Q]^2 + O(e^{-c \log^4 \epsilon}).
\] 
Because $\Xi(Q)=1$ the event $\AA_{\log^{90} L}$ occurs in the box $Q$.  For $\epsilon$ small enough, the Gaussian bounds in \cref{T:Gaussian} imply
\[
\textrm{I} \gtrsim \epsilon^2 [s_{i+1}-s_{i}].
\]
 as long as $z \in\{ y: \|x-y\|_\infty < \ffrac R2\}$.
 
We have
\begin{multline}
\sum_{i=0}^J \mathbf 1_{\{s_{i}< \upsilon_x(\ffrac 12)\}}\Q_{X_{s_{i}}}\left[\int_{0}^{s_{i+1}-s_{i}} m_{X_t}\textrm{d} t  \mathbf 1\{s_{i+1}-s_i< \upsilon_x(\ffrac 12)\}\right] \\
\gtrsim  \sum_{i=0}^J \mathbf 1\left\{s_{i}< \upsilon_x(\ffrac 12), X_{s_i}\in \{y: \|x-y\|_{\infty} \leq \ffrac R2\right\} \epsilon^2[s_{i+1}-s_{i}].
\end{multline}
Let $M \geq 2$ be fixed.  Introducing the event $F=\{\upsilon_{x}(1/4) > \ffrac{R^2}{M}\}$, we thus obtain
\[
\mathbb Q_x\left[e^{-\int_{0}^{\upsilon_x(1/2)} m_{X_t} \textrm d t}\right]\ls e^{-c_1 \epsilon^2 \ffrac{R^2}{M}}\mathbb Q_x\left[e^{-\sum_{i=0}^{J+1} \mathfrak I_i}\mathbf 1_{F}\right] + e^{-c_2M}.
\]
Here $c_1,c_2$ are universal constants and the last term comes from applying the Gaussian estimate of \cref{T:Gaussian} to the event $F^c$.

Using \cref{L:TechBadRandomeness} once again, 
\[
|\nabla(e, \gld_{\mathscr C^+})| \ls \epsilon \log^{30} \epsilon,
\]
so that
\[
|\mathfrak I_i|\ls \epsilon^2 |\log \epsilon|^{240}
\]
where we used $\log L \sim |\log \epsilon|$.

Because the $\frak I_i$ are martingale increments, Azuma's Inequality then provides the bound
\begin{align*}
\mathbb Q_x\left[e^{-\sum_{i=1}^{J+1} \mathfrak I_i}\mathbf 1_{F}\right] & \leq \mathbb Q_x\left[e^{-\sum_{i=0}^{J+1} \mathfrak I_i}\right]\\
& \ls e^{c J [\epsilon^2 \log^{240}(\epsilon)]^2 }\\
& \ls e^{c R^2 \epsilon^{3}}.
\end{align*}

Gathering the estimates together, we have proved that
\begin{equation}
0 \leq \vartheta_x \ls \exp\left[c_1\left(-\ffrac{\epsilon^2}{M}  + \epsilon^{3}\right)L^2\right] +\exp \left[-c_2M\right].
\end{equation}
For $\epsilon$ small enough, optimizing with respect to $M$ proves the Lemma.
\end{proof}

\begin{proof}[Proof of \cref{L:BBehavior}]
We begin with the following observation.  Using notation from the previous proof, suppose that $x \in \mathfrak g^+$ and $\dist(x,  \textrm{M}\cap[\pI \Lambda_N \cup \mathfrak D^+_{\ffrac L{12}}]) < \ffrac L4$.

We may write
\begin{equation}
|\vartheta_x| \leq {\mathbb Q_x\left[e^{-A \int_{0}^{\upsilon_{\mathfrak g^{+}}} m_{X_t} \textrm d t}|\tau_{X_{\upsilon_{\mathfrak g^{+}}}}|\right]}
\end{equation}
where $\upsilon_{\mathfrak g^{+}}$ denotes the first exit time of $\mathfrak g^+$ and $A\leq \ffrac{|V_x|}{m_x}$.  If we let
\[
E=\{ X_{\upsilon_{\mathfrak g^{+}}} \notin [\pO \Lambda_N \cup\mathfrak D^+_{\ffrac{L}{12}}]\},
\]
then because we have assumed that the boundary condition $\tau_x= 0$ for all $x \in \pO \Lambda_N \cup~[\mathfrak D^+_{\ffrac{L}{12}} \backslash \mathfrak g^+]$,
\[
|\vartheta_x|\ls  \mathbb Q_x\left[e^{-A \int_{0}^{\upsilon_{\mathfrak g^{+}}} m_{X_t} \textrm d t}\: \mathbf 1_E \right] 
\]
Let $B_r(x)$ denote the $\ell^2$ ball of radius $r$ around $x$.  Then if  $x \in \mathfrak g^+$ and 
\[
\dist(x,  \textrm{M}\cap [\pO \Lambda_N \cup \mathfrak D^+_{\ffrac L{12}}]) < \ffrac L4
\] 
$B_{\ffrac L5}(x) \cap \pO \mathfrak g^+ \subset [\pO \Lambda_N \cup \pI \mathfrak D^+_{\ffrac L{12}}]$.  

If $\upsilon_{x}$ denote the first exit time from $B_{\ffrac L5}(x)$,
\[
|\vartheta_x| \ls\mathbb Q_x\left[e^{-A \int_{0}^{\upsilon_{x}} m_{X_t} \textrm d t} \mathbf 1_{E} \right] 
\]

However a small variation on the argument in the previous proof gives
\[
|\vartheta_x| \ls\mathbb Q_x\left[e^{-A \int_{0}^{\upsilon_{x}} m_{X_t} \textrm d t} \mathbf 1_{E} \right] 
\ls e^{-c |\log \epsilon|^4} 
\]
and so the bound follows immediately.
\end{proof}

\section{Proofs for \cref{S:Groundstates,,S:Collar,,S:Peierls}}

\subsection{Proofs for \cref{S:Groundstates}}
\label{SS:6}
In this section we will prove \cref{L:GSd3} of \cref{S:Groundstates}.  

\begin{proof}[Proof of \cref{L:GSd3}]
From our construction, the only statement which requires further justification is the first. 
Consider first the contributions from the interactions between neighboring boxes $Q_{\ffrac \ell2}(x_0), Q_{\ffrac \ell2}(x_1)\in \QQ_{\ffrac \ell2}(\Gamma)$.  For ANY edge $e=\langle x, y\rangle$ where  $x \in Q_{\ffrac \ell2}(x_0)$, $y\in Q_{\ffrac \ell2}(x_1)$ for $x_0 \neq x_1$,  $|\nabla_e \sigma^{\bar{\delta}(\Gamma)}|  \ls \epsilon |\log \epsilon|^{35}$.
Thus
\[
0 \leq \sum_{{\stackrel{e=\langle x, y \rangle:}{x \in Q_{\ffrac \ell2}(x_0), y \in Q_{\ffrac \ell2}(x_1)}}} 1- \sigma_x \cdot \sigma_y \ls 1.
\]
As a consequence, our choice of $\ell$ implies
\[
0 \leq  \sum_{Q\in \RR_{\ffrac \ell2 }(\Gamma)}  \textrm{E}_0(Q) + \HH_{\bar{\delta}(\Gamma)}(\sigma^{\bar{\delta}(\Gamma)}) \leq  C \epsilon^{\ffrac 52} \log^{30} |\Gamma|+  \sum_{Q \in \RR_{\ffrac \ell2}(\Gamma)}  \textrm{E}(Q) + \HH_{Q}(\sigma^{\bar{\delta}(\Gamma)}).
 \]
where we used the fact that $\textrm{E}(Q)\geq \textrm{E}_0(Q)$ on the RHS.

Fix $Q$ such that $\Xi_{\ffrac \ell 2}(Q)=1$.  In this case, a combination of \cref{L:FBE,,E:FL} implies that $\frac 12 \EE_{Q}(g^N_{Q})$ is (essentially) the maximum value $-\HH_{Q}$ can be on $\mathcal S_{Q}$.
Consider the free boundary condition maximum
\[
\textrm{E}(Q)= \max_{\sigma\in \mathcal S_Q}-\HH_Q(\sigma)
\]
We have
\begin{multline}
\left|-\HH_{Q}(\sigma^{\bar{\delta}(\Gamma)}) -\textrm{E}(Q)\right| \ls \underbrace{\left|-\HH_{Q}(\sigma^{\bar{\delta}(\Gamma)}) +\HH_{Q}(\sigma_{Q}) \right|}_{\textrm{I}} \\
+\underbrace{\left|-\HH_{Q}(\sigma_Q) -\frac 12 \EE_{Q}(g^N_{Q})\right|}_{\textrm{II}} +  \underbrace{\left|\frac 12 \EE_{Q}(g^N_{Q}) -\textrm{E}(Q)\right|}_{\textrm{III}}
\end{multline}
with $\sigma_{Q,y}$ was defined at \cref{E:GSS}.  Examining the term $Error_{Q}^N$ in \cref{L:BLayer} and recalling the conditions which entail $\Xi_{\ffrac \ell2}=1$
\[
\textrm{II}+ \textrm{III} \ls \epsilon^{\frac 94}.
\]
with plenty of room to spare for $\epsilon$ small.
Regarding Term \textrm{I}, using that $|\tau \theta|\leq 1$ and $\Xi_{\ffrac \ell2}(Q)=1$, if we only display dominant terms involving $\tau$, we have
\[
\textrm{I} \ls \|g^N_{Q}\|^2_{\infty}\|\nabla \tau\|_{2, Q}^2 + \|\nabla g^N_{Q}\|_{\infty} \|\nabla g^N_{Q}\|_{2} \|1-\tau\|_{2, Q}+ \epsilon^{\frac 94}.
\]
Let $W=\{x\in Q: \tau_x \neq 1\}$.  
Clearly 
\[
 \|1-\tau\|^2_{2, Q_{\ffrac \ell2}}\leq |W| \ls  \ell^{-\ffrac 12} \ell^3.
\]
and
\[
\|\nabla \tau\|_\infty \ls \ell^{-\ffrac 12}.
\]
Using the bounds $\|g^N_{Q}\|_{\infty}<2 \epsilon \ell^{\frac 12} |\log \epsilon|^{30}$ and $\|\nabla g^N_{Q_{\ffrac \ell2}}\|_{\infty} \leq \epsilon |\log \epsilon|^{30}$, 
\[
\textrm{I}\ls \epsilon^{\frac 94} \ell^3.
\]
Thus we have
\[
\sum_{Q \in \RR_{\ffrac \ell2}(\Gamma)}  \left[\textrm{E}(Q) + \HH_{Q}(\sigma^{\bar{\delta}(\Gamma)})\right] \mathbf 1\{ \Xi_{\ffrac \ell2}(Q)=1\} \ls  \epsilon^{\frac 94} |\Gamma|.
\]

Also,
\begin{align*}
\sum_{Q \in \RR_{\ffrac \ell2}(\Gamma)}  &\left[\textrm{E}(Q) + \HH_{Q}(\sigma^{\bar{\delta}(\Gamma)}) \right]\mathbf 1\{ \Xi_{\ffrac \ell2}(Q)=0\} \\ &\leq \underbrace{\sum_{Q \in \RR_{\ffrac \ell2}(\Gamma)}  \textrm{E}(Q)  \mathbf 1\{ \Xi_{\ffrac \ell2}(Q)=0,F^{\nabla}_{0}(Q) \leq \epsilon^2 |\log \epsilon|\}}_{\textrm{I}}\\
&+
 \underbrace{\sum_{Q \in \RR_{\ffrac \ell2}(\Gamma)}  \textrm E(Q) \mathbf 1\{ \Xi_{\ffrac \ell2}(Q)=0, F^{\nabla}_0(Q) \geq \epsilon^2 |\log \epsilon|\}}_{\textrm{II}}
\end{align*}
where we used that $\HH_{Q}(\sigma^{\bar{\delta}(\Gamma)})=0$ on these boxes.
Using \cref{H:FML,,E:rr5} and the fact that the density of boxes with $\Xi_{\ffrac \ell 2}=0$ is small to bound term \textrm{I} and using \cref{H:FML,,E:rr5,,E:rr1} to bound term \textrm{II}
\[
\textrm{I} + \textrm{II} \ls \epsilon^2 |\log \epsilon|^{-40} |\Gamma|.
\]
\end{proof}

\subsection{Proofs for \cref{S:Collar}}
\label{SS:7}
In this section we prove \cref{L:INT2,,L:INT3,,L:INT4}.  For notational convenience, we will use the functions $F_\lambda, F^{\nabla}_{\lambda}, R$ as defined after \cref{E:Dense} in our estimates with $\lambda = L^{-2} \log^8 L$.

\begin{proof}[Proof of \cref{L:INT2}]
We restrict the discussion to $\mathscr D^+$.  Using \cref{E:BL} along with the fact that $\sigma^1= \sigma^{2}$ on $\pO \mathscr D^+$ to cancel boundary contributions, we have
\begin{equation}
|-\HH_{\Lambda_N}(\sigma^{2}|e_1)+ \HH_{\Lambda_N}(\sigma^1|e_1)| \ls \EE_{\mathscr D'}(\sigma) + \EE_{\mathscr D'}(\gld_{\mathscr D^+}) + \lambda\|\gld_{\mathscr D^+}\|_{2}\sqrt{|\mathscr D^+|}
\end{equation}
where $\mathscr D' = \mathscr D^+ \cup \pO \mathscr D^+$.
By definition, the original spin configuration $\sigma$ has $\psi^{(0)}=1$ on $\mathfrak C(\Gamma)$ so 
\[
 \EE_{\mathscr D'}(\sigma)  \ls \epsilon^2 |\log \epsilon||\mathscr D^+|.
 \]
However if if $\Gamma$ is clean, 
\[
\frac{|\mathscr D^+|}{|\Gamma|} \ls |\log \epsilon|^{-55} 
\]
by condition \cref{E:Dense}.  Thus $\EE_{\mathscr D'}(\sigma) \ls \epsilon^2 |\log \epsilon|^{-50}$

Next, we have by the Cauchy-Schwarz inequality
\[
\sum_{x\in\DD} \left\{\sum_{\stackrel{Q\cap \mathscr D^+ \neq \varnothing}{Q \in \QQ_L}} R(Q)\left[e^{-c |\log \epsilon|^{4}}\wedge  e^{-c \lambda^{\frac 12} \dist(x, Q)}\right]\right\}^2 \\
\ls e^{-c |\log \epsilon|^{4}} \sum_{\stackrel{Q\cap \mathscr D^+ \neq \varnothing}{Q \in \QQ_L}}  R^2(Q) L^d
\]
and by \cref{E:rr4}
\[
 \sum_{\stackrel{Q \cap \delta (\Gamma)\neq \varnothing}{Q\in \QQ_L}}  R^2(Q) L^d \ls |\log \epsilon|^{100} |\Gamma|.
\]
Therefore, by \cref{L:TechBadRandomeness}, we have, for $\epsilon$ small enough,
\begin{align}
\label{E:SHT1}
 &\EE_{\mathscr D'}(\gld_{\mathscr D^+})  \ls \underbrace{L^d \sum_{Q \subset \mathscr D^+, \: Q\in \QQ_L} F_\lambda^{\nabla}}_{\textrm{I}} +  e^{-c|\log \epsilon|^{4}}|\Gamma|,\\
 &\|\gld_{\mathscr D^+}\|_2^2 \ls L^d \sum_{\stackrel{Q\cap \mathscr D^+ \neq \varnothing}{Q \in \QQ_L}} F_\lambda +  e^{-c|\log \epsilon|^{4}}|\Gamma|.
\end{align}

We bound the first term on the RHS of \cref{E:SHT1} via
\begin{align*}
\textrm{I}&\ls \underbrace{\epsilon^2|\log \epsilon| |\mathscr D^+|}_{\textrm{A}} + L^d \underbrace{\sum_{\stackrel{Q\cap \mathscr D^+ \neq \varnothing}{Q \in \QQ_L}} F_{\lambda}^{\nabla}(Q) 1\{F_{\lambda}^{\nabla}(Q) > \epsilon^2 |\log \epsilon|\}}_{\textrm{B}},\\
&\ls \epsilon^2 |\log \epsilon|^{-40} |\Gamma|.\end{align*}
To obtain the second bound, we used the fact that $|\mathscr D^+|\ls |D| \ls |\log \epsilon|^{-55}|\Gamma|$ to bound \textrm{A} and \cref{E:rr1} to bound \textrm{B}.

A similar estimate on $\|\gld_{\mathscr D^+}\|_2$ using \cref{E:rr2} in place of \cref{E:rr1} gives
\[
\lambda\|\gld_{\mathscr D^+}\|_{2, \mathscr D^+} \sqrt{|\mathscr D^+|} \ls \epsilon^{2} |\log \epsilon|^{-25} |\Gamma| + e^{-c|\log \epsilon|^4}|\Gamma| 
\]
Combining the estimates yields the claim in the Lemma.
\end{proof}

\begin{proof}[Proof of \cref{L:INT3}]
Note first of all that by construction,
\[
\EE_{\mathscr C^+}(\sigma^3) \ls \EE_{\mathscr C^+}(\sigma)+ \EE_{\mathscr C^+}(\gld_{\mathscr C^+})
\]
where $\sigma$ is the original spin configuration from which $\sigma^3$ is produced.
Therefore, \cref{L:BLayer} implies
\begin{equation*}
-\HH_{\Lambda_N}(\sigma^{3}|e_1)\geq -\HH_{\Lambda_N}(\sigma|e_1)
- C_1 Error^{\lambda, D}_{\mathscr{C}^+}(\sigma) + Error^{\lambda, D}_{\mathscr{C}^+}(\gld_{\mathscr{C}^+}) - C_2\epsilon^2 |\log \epsilon|^{-25} |\Gamma|
\end{equation*}

Because $\Xi_L (Q)=1$ for $\{ Q \in \QQ_L': Q\cap \mathscr C^+ \neq \varnothing\}$, inequalities \cref{E:r4}  and \cref{E:TechBadRandomeness1} allow us to exchange  terms involving $\gld_{\mathscr C^+}$ for sums involving $(\gld_Q)_{\stackrel{Q \cap \mathscr C^+ \neq \varnothing}{Q\in \QQ_L'}}$ up to errors of order $e^{-c \log^4 \epsilon}$.  Thus  
\begin{equation*}
Error^{\lambda, D}_{\mathscr{C}^+}(\sigma) + Error^{\lambda, D}_{\mathscr{C}^+}(\gld_{\mathscr{C}^+}) \ls \sum_{\stackrel{Q \cap \mathscr C^+ \neq \varnothing}{Q\in \QQ_L'}} Error^{\lambda, D}_{Q}(\sigma) + Error^{\lambda, D}_{Q}(\gld_{Q}) + \epsilon^2 |\log \epsilon|^{-25} |\Gamma|
\end{equation*}

Recall that $\EE_Q(\sigma) \leq \epsilon^2|\log \epsilon| |Q|$ if $Q \cap \mathscr C^+ \neq \varnothing$ because $Q$ is a cube at the outer boundary of a contour.  Then because  $\Xi_L (Q)=1$, inequalities \cref{E:r1}, \cref{E:r2}, and \cref{e:r3} imply
\begin{equation}
\label{E:ErrD3}
Error^{\lambda, D}_{Q}(\sigma) + Error^{\lambda, D}_{Q}(\gld_{Q}) \ls \epsilon^{\ffrac {9} 4}|Q|
\end{equation}
for $\epsilon$ small enough and the claimed lower bound follows.
\end{proof}

\begin{proof}[Proof of \cref{L:INT4}]
To begin with
\begin{equation}
\label{E:BCollar1}
\left|\HH_{\Lambda_N}(\sigma^{3}|e_1)-\HH_{\Lambda_N}(\sigma^{\mathfrak C}|e_1)\right| 
\ls \underbrace{\left|\EE_{W}(\sigma^{3})-\EE_{W}(\sigma^{\mathfrak C})\right|}_{\textrm{I}} + \underbrace{\left| \sum_{x\in \mathfrak C}\epsilon \alpha_x[\sigma^{3}_x-\sigma^{\mathfrak C}_x] \cdot e_2 \right|}_{\textrm{II}}.
\end{equation}
To estimate these terms we need some bounds on the interpolation.  First, let
\begin{equation}
\label{E:DefW}
W =\{ x: \exists y\sim x, \sigma^{\mathfrak C}_y\neq  \sigma^{3}_y \}
\end{equation}
denote the support of the modification.  Clearly
 \begin{align}
 \label{E:inerp1}
& |W|  \ls  \ell^{-\ffrac 12}  |\Gamma| \text{ and }\\
\label{E:inerp2}&|\nabla_{\langle xy \rangle} \tau| \ls \ell^{-\ffrac 12} 
\end{align}

For Term \textrm{I}, we have, using \cref{C:Contract},
\begin{equation}
\textrm{I}  \ls \EE_{W}(\sigma^3) +  \sum_{\langle x y \rangle \subset W} {\gld_{\mathscr C, x}}^2 [\nabla_{\langle xy \rangle} \tau]^2+ e^{-c|\log \epsilon|^4}|\Gamma|.
\end{equation}
For Term \textrm{II}, write $\epsilon\alpha= [-\Delta+ \lambda]\cdot g^{\lambda, D}_{\mathscr C}$on $\mathscr C$.
Summation-by-parts gives (in particular using that  since $W \subset \mathscr C$ is strict to cancel boundary terms)
\[
\textrm{II}
\leq
\underbrace{\left|\sum_{e \subset  W} \nabla_e g^{\lambda, D}_{\mathscr C} \nabla_e e_2\cdot[\sigma^{3}-\sigma^{\mathfrak C}])\right|}_{\textrm{III}} + 2 \lambda \sum_{x\in W} |g^{\lambda, D}_{\mathscr C, x}|
\]
We then have, using \cref{C:Contract},
\[
\textrm{III} \ls \EE_{W}( \sigma^{3})
+ \EE_{W}(g^{\lambda, D}_{\mathscr C}) + \sum_{\langle x y \rangle \subset W} {\gld_{\mathscr C, x}}^2 [\nabla_{\langle xy \rangle} \tau]^2 + e^{-c|\log \epsilon|^4}|\Gamma|.
\]

We now derive bounds on the respective RHS's.
By \cref{C:Contract} and the definitions of $W, \sigma^3$,
\begin{equation}
\label{E:D3Bound}
\EE_{W}(\sigma^{3}) \ls \EE_{W}(g^{\lambda, D}_{\mathscr C})  + e^{-c\log^4\epsilon}|\Gamma|.
\end{equation}
Since $\mathscr C \subset \cup_{\Xi_L(Q)=1} Q$,  \cref{L:TechBadRandomeness} implies
\begin{align}
\label{E:GApprox}
& \lambda \sum_{x\in W} |\gld_{\mathscr C, x}| \ls \epsilon^{\ffrac 94}|\Gamma|,\\
&\EE_{ W}( g^{\lambda, D}_{\mathscr C}) \ls \epsilon^{\ffrac 94}|\Gamma|.
\end{align}
because $\|\gld_{Q}\|_{\infty} \leq \epsilon\sqrt{L} \log^{30} \epsilon$ and $\|\nabla \gld_{Q}\|_{\infty} \leq \epsilon |\log \epsilon|^{30}$ if $\Xi_L(Q)=1$ and also using \cref{E:inerp1}.  Similarly, using \cref{E:inerp1,,E:inerp2}
\[
 \sum_{\langle x y \rangle \subset W}  {\gld_{\mathscr C, x}}^2 [\nabla_{\langle xy \rangle} \tau]^2 \ls \epsilon^{\ffrac 94} |\Gamma|
\]
and the claim is proved by collecting these estimates.
\end{proof}

\subsection{Proofs for \cref{S:Peierls}}
\label{S:PP}
Recall that we defined $\QQ_\ell^s=\{Q(r): Q(r-(\ffrac \ell 2,  \ffrac \ell 2,  \ffrac \ell 2) \in \QQ_{\ell}\}$.
\begin{proof}[Proof of \cref{L:EComp}]
Let $\sigma \in \X(\Gamma)$ and set \[
\begin{array}{lr}
\RR_{\ell}(\Gamma)=\{Q\in \QQ_\ell: Q \cap \bar{\delta}(\Gamma) \neq \varnothing\}, & 
\RR^s_{\ell}(\Gamma)=\{Q\in \QQ^s_\ell: Q \cap \bar{\delta}(\Gamma) \neq \varnothing\}.
\end{array}
\]
We first argue that enough of the energetic defect  of $\sigma$ is captured by restricting attention to its behavior to boxes either in $\RR_{\ell}(\Gamma)$ or in $\RR_{\ell}^s(\Gamma)$.

Let
\begin{align*}
&\mathcal S_{\EE}(\Gamma)= \{ Q \subset \Gamma \text{ such that $Q\in \QQ_L$ and $\Psi_{x_0}(Q)=0$ because $\psi^{(0)}_z=0$ for some $z$}\},\\
&\mathcal S_{\textrm{Ave}}(\Gamma)= \{ Q \subset \Gamma \text{ such that $Q\in \QQ_L$ and $\Psi_{x_0}(Q)=0$ but $\psi^{(0)}_z=1$ } \text{ for all relevant $z$}\}.
\end{align*}
If $Q \in \mathcal S_{\EE}(\Gamma)$ there exists a cube of side-length $\ell$, $Q_{\ell}$, so that $\dist(Q_{\ell}, Q) \leq 2L + 5\ell$ and
\[
\EE_{Q_{\ell}}(\sigma) \geq  \epsilon^2 |\log \epsilon| \ell^3.
\]
Then there must be ${Q'}$ in either $\RR_{\ell}(\Gamma)$ or $\RR^s_{\ell}(\Gamma)$ so that
\begin{equation}
\label{E:Bad1}
\EE_{Q'}(\sigma) \geq \ffrac {\epsilon^2}{16} |\log \epsilon| \ell^3.
\end{equation}
Note that these considerations include cubes which overlap $\Lambda_N^c$ in the case of $\RR^s_{\ell}(\Gamma)$.
Let
\begin{align}
&A_1=\{Q \in \RR_{\ell}(\Gamma) \text{ such that \cref{E:Bad1} holds}\}\\
&A_2=\{Q \in \RR^s_{\ell}(\Gamma) \text{ such that \cref{E:Bad1} holds}\}
\end{align}

On the other hand if $Q \in \mathcal S_{\text{Ave}}(\Gamma)$ there exits $Q_\ell$ so that $\dist(Q_\ell, Q_{L}(x_0)) \leq 2L + 5\ell$ and
\[
|\sigma(Q_\ell) \cdot e_1| \leq 1- \xi.
\]
Then there must be $Q'$ in $\RR_{\ell}(\Gamma)$ 
\begin{equation}
\label{E:Bad2}
|\sigma({Q}')\cdot e_1| \leq 1-\frac{\xi}{16}
\end{equation}
Let
\[
A_3=\{Q \in \RR_{\ell}(\Gamma) \text{ such that \cref{E:Bad2} holds}\}.
\]
Because $\sigma \in \mathbb X(\Gamma)$, 
\[
\max_{i} |A_i| \gtrsim |\log \epsilon|^{-24}N_L^{\Sp(\Gamma)}.
\]

Now we compare the internal energy of $\sigma$ with that of $\textrm{S}^{\pm}$. 
Recall that $ext$ denotes the boundary condition which is set to $e_1$ on $\pO R \cap \Lambda_N^c$ and is free otherwise.  Then reflection invariance of the Hamiltonian for components with free boundary conditions implies 
\[
-\HH_{\Lambda_N \backslash \bar{\delta}(\Gamma)}(\sigma^{\mathfrak C}|\textrm{ext})= -\HH_{\Lambda_N \backslash \bar{\delta}(\Gamma)^*}(\sigma^*|\textrm{ext}).
\]
Thus, by \cref{L:Collar}
\begin{multline}
-\HH_{\Lambda_N}(S^{\pm}_\Gamma| e_1) + \HH_{\Lambda_N}(\sigma| e_1) \geq  \underbrace{-\HH_{ \bar{\delta}(\Gamma)}(\sigma^{\bar{\delta}(\Gamma)}|\textrm{ext})  + \HH_{\bar{\delta}(\Gamma)}(\sigma^{\mathfrak C}|\textrm{ext})}_{\textrm{I}} - C\epsilon^2|\log \epsilon|^{-25}|\Gamma|.
\end{multline}

Next recall that
\[
\textrm{E}_0(Q)=\max_{\sigma} -\HH_Q(\sigma|\textrm{ext}).
\]
By \cref{L:GSd3},
\[-\HH_{ \bar{\delta}(\Gamma)}(\sigma^{\bar{\delta}(\Gamma)}|\textrm{ext})\geq \sum_{Q\in \RR_{\ffrac \ell 2}(\Gamma)} \textrm{E}_0(Q)- \epsilon^2|\log \epsilon|^{-40}|\Gamma|
\]
Note that if we have a finite region $R= R_1\cup R_2$ with $R_1 \cap R_2 = \varnothing$ then
\[
\textrm{E}_0(R)\leq \textrm{E}_0(R_1)+\textrm{E}_0(R_2).
\]
Note also that if $Q\in A_i$, $Q\cap \Lambda_N$ may be covered by boxes in $\RR_{\ffrac \ell2}(\Gamma)$.
Therefore, for $i \in \{1, 2, 3\}$,
\[
\textrm{I} \geq \underbrace{\sum_{Q\in A_i} \left[\textrm{E}_0(Q\cap \Lambda_N)+ \HH_{Q\cap \Lambda_N}(\sigma|\textrm{ext})\right]}_{\textrm{II}_i}- \epsilon^2|\log \epsilon|^{-25}|\Gamma|
\]
The superscript $\mathfrak C$ was dropped because the original spin configuration $\sigma$ is unmodified on such boxes as they must intersect $\textrm{sp}(\Gamma)$.  
It remains to show that 
\begin{equation}
\label{E:maxim}
\max_{i}\textrm{II}_i \gtrsim \epsilon^2|\log \epsilon|^{-24} |\Gamma|.
\end{equation}

We consider only $A_2$ in detail.  The remaining cases are a bit simpler to handle as we do not have to deal with cubes overlapping $\Lambda_N^c$.

Since we can always choose $\sigma_y \equiv e_1$, $\textrm{E}_0(Q\cap \Lambda_N)\geq 0$.  If $Q\in A_2$ and $\Xi_{\ell}(Q)=1$, \cref{H:FML} of \cref{L:FBE} implies
\[
- \HH_{Q\cap \Lambda_N}(\sigma|\textrm{ext})\ls -\epsilon^2|\log \epsilon| \ell^3
\]
because we may view $- \HH_{Q\cap \Lambda_N}(\sigma|\textrm{ext})$ as $- \HH_{Q}(\sigma')$ for the configuration 
\[
\sigma'_y=
\begin{cases}
\sigma_y \text{ if $y \in Q\cap \Lambda_N$}\\
e_1 \text{ if $y \in Q\cap \Lambda_N^c$}.
\end{cases}
\]
Thus
\[
\sum_{Q\in A_2: \Xi_{\ell}(Q)=1} \left[\textrm{E}_0(Q)+ \HH_{Q}(\sigma|\textrm{ext})\right] \gtrsim 
\epsilon^2|\log \epsilon| \ell^3 |A_2|.
\]

On the other hand, using \cref{H:FML} once again,
\[
\sum_{Q\in A_2: \Xi_{\ell}(Q)=0} \left[\textrm{E}_0(Q)+ \HH_{Q}(\sigma|\textrm{ext})\right] \gtrsim 
- \underbrace{\sum_{Q\in A_2: \Xi_{\ell}(Q)=0} \left[ \EE_Q(g^N_Q)+ \epsilon|\alpha(Q)|\ell^3\right]}_{\textrm{III}}
\]
Using conditions \cref{E:rr1,E:rr5} and that fact that $\delta(\Gamma)$ is good at the scale $\ell$,
\[
\textrm{III} \ls \epsilon^2 |\log \epsilon|^{-50}|\Gamma|.
\]
It follows that
\[
\textrm{II}_2 \gtrsim \epsilon^2|\log \epsilon| \ell^3 |A_2|- \epsilon^2 |\log \epsilon|^{-50}|\Gamma|.
\]
The same sort of argument gives
\begin{align}
&\textrm{II}_1 \gtrsim \epsilon^2|\log \epsilon| \ell^3 |A_1|- \epsilon^2 |\log \epsilon|^{-50}||\Gamma| \\
&\textrm{II}_3 \gtrsim \xi^2 \epsilon^2 \ell^3 |A_3|- \epsilon^2 |\log \epsilon|^{-50}||\Gamma|
\end{align}
where we have used \cref{E:FL} instead of \cref{H:FML} to estimate the contribution from $A_3$.
\Cref{E:maxim} and hence \cref{L:EComp} follow.
\end{proof}

\section{Estimates on the Randomness}
\label{S:Randomness}
In this section we derive elementary probabilistic estimates which underly the rest of the paper.
Let $l \in \N$ and restrict $\alpha$ to $Q_{l}\subset \Z^d$.  With an eye toward future work we record bounds for all $d \geq 2$ though we only use the $d=3$ case here.
\begin{align*}
&\Gld_{Q_l}=\frac{\gld_{Q_l}}{\epsilon}= (-\Delta^{D}_{Q_l}+ \lambda)^{-1} \cdot\alpha,\\
&\Gln_{Q_l}=\frac{\gln_{Q_l}}{\epsilon}=(-\Delta^{N}_{Q_l}+ \lambda)^{-1} \cdot [\alpha- \alpha_{Q_{l}}].
\end{align*}
Here $\lambda \in [0, 1)$.
From now on $Q_l$ is fixed and we drop this subscript from our notation.

It is important for us to have fairly precise probabilistic estimates on the quantities
\[
\|\nabla \Gld\|_{\infty}, \:\:
\|\Gld\|_{\infty}, \:\:
\|\Gld\|_2,\:\:
\|\nabla \Gld\|_2,
\]
and similarly for $\Gln $, to have estimates on $ \alpha_{Q_l}\sqrt{|Q_l|}$ and probabilistic bounds on fluctuations of the low momentum modes.
In $d=2$ we record somewhat more refined information, in particular bounds on the density of points with atypical fluctuations.

For general finite regions $R \subset \Z^d$, we introduced potentials of the form
\[
m_x = \sum_{\stackrel{e\cap R \neq \varnothing}{x\in e}} [\nabla_e \gld_R]^2
\]
in \cref{S:Energetics} and used its typical behavior heavily in \cref{S:Collar}.  Here we specialize $R= Q_l$.  For notational convenience set
\[
R_\lambda(x)=\lambda^{-\frac 12} \wedge \dist(x, \pO Q_l)
\]
Given $A>0$, consider the event
\[
\mathcal A_r=\mathcal A_r(A, Q_l)=\left\{\omega: r^{-d}\sum_{\|y-x\|_\infty \leq r} m_y \geq A \epsilon^2\log^{\delta_{2, d}} R_{\lambda}(x)  \text{ for all $x\in Q_l$ s.t. }{\dist_{\infty}(x, \pO Q_l) \geq \frac{l}{16} } \right\}
\]
where $\delta_{2, d}$ is $1$ if $d=2$ and $0$ otherwise.

Let
\begin{align*}
&\varsigma_2^2:=\varsigma_{2, \lambda, l}^2=\int_{[l^{-1}, 2 \pi]^d} \textrm{d}^dk (\|k\|^2+ \lambda)^{-2},\\
&\varsigma_{\nabla}^2:=\varsigma_{\nabla, \lambda, l}^2=\int_{[l^{-1}, 2 \pi]^d}\frac{\|k\|^2}{(\|k\|^2+ \lambda)^{2}},
\end{align*}

The following lemma summarizes the bounds we need.  
\begin{lemma}
\label{L:RandBasic}
Let $d \geq 2$ and $\delta \in (0, 1)$ be fixed.  For any $\lambda \in [0, 1)$ we have the following probabilistic estimates for either choice of boundary conditions:
\begin{enumerate}
\item
Let $M \in (1, \infty)$ be fixed.  For any $x \in Q_{l}$ and any edge $e$
\begin{align*}
&\bP\left(|\Gl_x| \geq  M \varsigma_2\right) \ls e^{-cM^2}\\
&\bP\left(|\nabla_e \Gl| \geq M \varsigma_{\nabla}\right) \ls e^{-cM^2}.
\end{align*}

\item 
\[
\mathbb P\left(\|\Gl\|_2^2 \geq m \varsigma_{2}^2 l^d\right) \ls \begin{cases}e^{-cm l^{d-4-\delta}} \quad \text{ for $d \geq 5$},\\
l^{-c m}\quad \text{ for $d=4$},\\
 \log \log l e^{-cm} \text{ for $ d\in \{2,3\}$}.
\end{cases}
\]

\item
For all $d \geq 2$, there is $c_d>0$ so that
\[
\mathbb P\left(\|\nabla \Gl\|_2^2 \leq c_d \varsigma_{\nabla}^2 l^d\right) \ls 
e^{-c l^{\ffrac d2}}.
\] 
\item
\[
\mathbb P\left(\|\nabla \Gl\|_2^2 \geq m \varsigma_{\nabla}^2 l^d\right) \ls \begin{cases}
e^{-cm l^{d-2-\delta}}, \quad \text{for $d \geq 3$},\\
\log \log l e^{-c m} \quad \text{ for $d=2$}.
\end{cases}
\]

\item
\label{L:AvePot}
There is $A_d>0$ so that if $0\leq \lambda \leq l^{-1}$, 
\[
\mathbb P(\cup_{\{\log^{90} l \leq r \leq \ffrac l4 \}}\AA_r^c(A_d)) \ls e^{-c\log^{60} l}.
\]
\end{enumerate}
\end{lemma}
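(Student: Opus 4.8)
The plan is to reduce every assertion to a statement about a linear or quadratic functional of the i.i.d.\ Gaussian vector $\alpha$ restricted to $Q_l$, and then apply standard Gaussian concentration. Fix $Q_l$ and $\lambda$ and write $G=(-\Delta+\lambda)^{-1}$ in the self-adjoint (Dirichlet or Neumann) sense, so that $\Gl=G\alpha$ in the Dirichlet case and $\Gl=G(\alpha-\alpha_{Q_l})$ in the Neumann case, the latter differing from the former only by the rank-one projection off the constant mode, which is harmless throughout. Diagonalizing $G$ in the (discrete sine/cosine) Fourier basis on $Q_l$, all quantities of interest become $\sum_i\mu_i g_i^2$ (for the $L^2$-type norms and for the local energy averages) or $\sum_i\nu_i g_i$ (for pointwise values), with $(g_i)$ i.i.d.\ standard normals and $\mu_i,\nu_i$ explicit functions of the Laplacian eigenvalues; comparison of the resulting Riemann sums with the momentum integrals defining $\varsigma_2^2$ and $\varsigma_\nabla^2$ identifies the relevant scales.

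For (1), $\Gl_x$ and $\nabla_e\Gl$ are centered Gaussians whose variances are, up to universal constants, $\varsigma_2^2$ and $\varsigma_\nabla^2$ respectively (the diagonal of the squared discrete Green's function is dominated by the corresponding momentum integral), so the stated bounds are the elementary Gaussian tail $\bP(|Z|\geq M\sigma)\leq 2e^{-M^2/2}$. For (2) and (4) we write the quantity as a positive quadratic form $\sum_i\mu_i g_i^2$ with $\sum_i\mu_i\asymp\varsigma_2^2 l^d$ (resp.\ $\varsigma_\nabla^2 l^d$) and apply a Hanson--Wright / Laplace-transform upper-tail bound; the dimensional dichotomy is exactly the transition in the convergence of $\int_{[l^{-1},2\pi]^d}(\|k\|^2+\lambda)^{-p}\,\textrm d^d k$ (hence of $\sum_i\mu_i^2$ and of $\sup_i\mu_i/\sum_i\mu_i$) as $d$ crosses the critical value, the residual $\log\log l$ in $d\in\{2,3\}$ coming from a dyadic decomposition over frequency shells in which only $O(1)$ of the lowest shells carry an order-one fraction of the mean, so one cannot do better than $e^{-cm}$ there. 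For (3) we bound the Laplace transform $\E\,e^{-t\|\nabla\Gl\|_2^2}=\prod_i(1+2t\nu_i)^{-1/2}$ from above and optimize in $t$; since a positive fraction of the $\nu_i$ (those with $\|k\|\asymp1$) are comparable to their maximum, the product is $e^{-cl^d}$ at $t$ of order one, and choosing $c_d$ small makes $e^{ta}$ with $a=c_d\varsigma_\nabla^2 l^d$ negligible, which gives the small-ball bound with room to spare.

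The real work is (5). Fix $x$ with $\dist_\infty(x,\pO Q_l)\geq l/16$ and $r$ with $\log^{90}l\leq r\leq l/4$, and note $r^{-d}\sum_{\|y-x\|_\infty\leq r}m_y=\epsilon^2\,r^{-d}\sum_{\|y-x\|_\infty\leq r}\sum_{e\ni y}[\nabla_e\Gl]^2$ is $\epsilon^2$ times a positive quadratic form $\langle\alpha,M_{x,r}\alpha\rangle$ with $\E\langle\alpha,M_{x,r}\alpha\rangle\asymp\varsigma_\nabla^2\asymp\log^{\delta_{2,d}}R_\lambda(x)$ (here $\lambda\leq l^{-1}$ keeps $R_\lambda(x)=\lambda^{-1/2}\wedge\dist(x,\pO Q_l)$ comparable to the scale the ball resolves). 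The goal is to show $\langle\alpha,M_{x,r}\alpha\rangle\geq A_d\,\E\langle\alpha,M_{x,r}\alpha\rangle$ except with probability at most $e^{-c\log^{60}l}$, uniformly in $x,r$; summing over the $\lesssim l^d$ admissible centers and $\lesssim l$ radii (both polynomial in $l$, hence absorbed since $\log^{60}l\gg\log l$) gives the claim. To obtain the lower-tail estimate I would project $\nabla\Gl$ onto a frequency band $\|k\|\asymp r^{-1}$ — a band the ball $B_r(x)$ genuinely resolves and that already carries a fixed fraction of $\varsigma_\nabla^2$ — use translation-covariance and the near-locality of the band-limited energy to realize $\langle\alpha,M_{x,r}\alpha\rangle$ as a constant times an average of $\asymp r^d$ identically distributed, finite-range-correlated, order-$\varsigma_\nabla^2$ random variables, and then apply a Paley--Zygmund / exponential-moment lower-tail bound to that average; the correlation range being $O(r)$ makes the number of effectively independent blocks a fixed power of $r\geq\log^{90}l$, which yields the $e^{-c\log^{60}l}$ rate.

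The main obstacle is precisely this uniform lower bound in (5): one must extract, from a quadratic form whose eigenvalues span many scales and which — because of the $\lambda\to0$ regime and the long-range correlations of $\Gl$ in $d\in\{2,3\}$ — is far from a sum of i.i.d.\ pieces, a lower-tail probability decaying faster than any fixed power of $l$, so that it survives the union bound over all admissible centers and radii. The frequency-localization step that turns the ball-average of the gradient energy into a genuine average of many comparable, weakly dependent contributions, together with the verification that the chosen band still contributes an order-one share of $\varsigma_\nabla^2$ (in $d=2$ this is where the factor $\log(\lambda^{-1/2}\wedge l)$ must be accounted for), is the technical heart of the argument; parts (1)--(4) are routine Gaussian concentration once the means are computed.
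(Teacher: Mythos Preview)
Your treatment of (1)--(4) matches the paper's: Fourier diagonalization in the Dirichlet/Neumann eigenbasis, Gaussian tails for (1), and Laplace-transform bounds on weighted $\chi^2$-sums for (2)--(4), with the dimensional cases arising from a dyadic shell decomposition of momentum space. The paper only sketches this, so your write-up is in fact more explicit.

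For (5), however, the paper does something quite different and considerably simpler. Rather than attacking the quadratic form $\langle\alpha,M_{x,r}\alpha\rangle$ directly, it reduces (5) deterministically to (1) and (3). For a sub-cube $Q\subset Q_l$ of side $r$, write $\gld_{Q_l}=\gld_Q+h$ on $Q$, where $\gld_Q$ is the Dirichlet solution on $Q$ and $h$ is the harmonic extension of $\gld_{Q_l}|_{\pO Q}$. Since $h$ is harmonic in $Q$ and $\gld_Q$ vanishes on $\pO Q$, the cross term in the Dirichlet energy vanishes, giving
\[
\sum_{x\in Q} m_x \;\ge\; \EE_{Q\cup\pO Q}(\gld_{Q_l}) \;\ge\; \EE_{Q\cup\pO Q}(\gld_Q).
\]
Now (3), applied to the cube $Q$, gives $\EE_{Q\cup\pO Q}(\gld_Q)\ge c_d\epsilon^2\varsigma_{\nabla,\lambda,r}^2 r^d$ except with probability $e^{-cr^{d/2}}$. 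A union bound over the $O(l^{d+1})$ admissible sub-cubes (and over edges for the auxiliary $\ell^\infty$ bound from (1)) then yields the stated estimate, since $r^{d/2}\ge \log^{45d}l$.

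Your route via frequency localization has a concrete gap as stated. The band $\|k\|\asymp r^{-1}$ does \emph{not} carry a fixed fraction of $\varsigma_\nabla^2$: in $d=3$ the density $\|k\|^2/(\|k\|^2+\lambda)^2\,k^{d-1}\asymp 1$ puts almost all the mass at high frequency, so the shell at scale $r^{-1}$ contributes only $O(r^{-1})$; in $d=2$ each dyadic shell contributes $O(1)$ out of $\varsigma_\nabla^2\asymp\log l$. Moreover, a field band-limited to $\|k\|\asymp r^{-1}$ has correlation length $\asymp r$, so on a ball of radius $r$ you see $O(1)$ effectively independent degrees of freedom, not a power of $r$; this cannot beat a constant lower-tail probability and so will not survive the union bound. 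Your scheme becomes viable if you project instead onto the UV band $\|k\|\asymp 1$ (in $d\ge3$ this does carry a fixed fraction, and the correlation length is $O(1)$, giving $\asymp r^d$ independent blocks), but at that point you are essentially redoing (3) on the sub-cube --- which is exactly what the paper's orthogonality trick accomplishes in one line.
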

We will say a bit in the way of proof about these bounds in a bit, but let us first address \cref{P:Dirt4} and \cref{L:BadSetBound}.

\begin{proof}[Proof of \cref{P:Dirt4}]
We show 
\[
\mathbb P\left( \left[F_\lambda^{\nabla};  \mathbf 1_{\{F_\lambda^{\nabla} \geq  \epsilon^2 |\log \epsilon|\}}\right]_{Y} \geq 8^6 \epsilon^{\ffrac 94} N_Y^{L_0} \right) \ls e^{-c\log^2 \epsilon N_{L_0}^Y}
\]
only.
All remaining bounds are proved in a similar or simpler way using \cref{L:RandBasic}.
For each $\eta \in \frac{{L_0}}{16} \{- 32, \dotsc, 32\}^3$ let
\[
\FF_\eta=\left\{Q\left(r+ \eta\right): Q \subset Y, Q\in \QQ_{L_0}\right\}
\]
for some choice of $\eta \in \frac{{L_0}}{16} \{- 32, \dotsc, 32\}^d$. Let 
\[
m(Q)= \min\{ m\in \N: \: \|\nabla \gll_{Q}\|^2_{2}  <  m \epsilon^2 |\log \epsilon|{L_0}^3\}
\]
To obtain the estimate, it is enough to bound probabilities for the finite collection of events indexed by $\eta$
\[
\left\{\sum_{Q \in \FF_\eta}m(Q) \mathbf 1\{ m(Q) \geq  2\} \\
\geq \epsilon^{\ffrac 94} N_Y^{L_0}\right\}
\]

For any subset $\AA\subset \FF_{\eta}$, any choice $\{n(Q): Q \in \AA, n(Q) \geq 2\}$ and for $\epsilon$ small enough, 
\[
\mathbb P(m_{Q}= n(Q) \: \forall Q \in \AA) \leq C^{|\AA|}e^{-c\sum_{Q_{\ell} \in \AA}n_{Q_\ell}|\log \epsilon| {L_0} }
\]
This implies, after some standard computations to take into account the entropy of the family of subcubes which contribute to the sum, that
\begin{equation}
\label{E:StupidBound}
\mathbb P\left(\sum_{Q \in \FF_\eta} m(Q) \mathbf 1\{m(Q) \geq 2\} \geq  M N^{L_0}_Y\right) \leq C^{N^{L_0}_Y} e^{-cM |\log \epsilon| {L_0} N^{L_0}_Y}
\end{equation}
for any $M > 0$. Taking $M= \epsilon^{\ffrac 34}$ the exponent on the RHS is still at least $ \epsilon^{-\ffrac 14}N^L_Y$ and the claim follows.

\end{proof}

\begin{proof}[Proof of \cref{L:BadSetBound}]
We claim first of all that
\[
\mathbb P\left(A(x)\right) \leq
Ce^{-c|\log \epsilon|^{2}}
\]
and
\begin{equation}
\label{E:Cor}
\left|\mathbb P\left(A(x)\cap A(y)\right)-\mathbb P\left(A(x)\right) \mathbb P\left( A(y)\right)\right|  \leq
C e^{-c\log^2 \epsilon\dist_{L}(x,
y)}
\end{equation}
Here $\dist_{L}(x, y)$ denotes the minimal number of
blocks in an $L$-measurable block path from $Q(x)$ to $Q(y)$.

Consider the collection of
bounded $L$-measurable connected subsets of $\Z^3$ containing
$x$, $\{Y\}_{ Y \ni x}.$
For each such $Y$, we can apply the probabilistic estimates of \cref{P:Dirt4}. The number of $L$-measurable connected sets $Y$
containing $x$ with $N^L_Y=r$ is well known to have the asymptotic
$a_0^r$ for some fixed, dimension dependent constant $a_0$. Thus
\[
\mathbb P(Q(x) \text{ is in some $Y$ which is not clean})\leq
C \sum_{r \geq 1} (2a_0)^{r} e^{-c\log^2 \epsilon r} \quad \text{ if $d=3$}
\]
Modifying
this estimate slightly via the discrete isoperimetric inequality,
\begin{equation} \label{E:ClustBound} \mathbb P(A(x) \text{ is in
$c(Y)$ for some $Y$ which is dirty})\leq
C\sum_{r \geq 1}
r^{d/(d-1)}(2a_0)^{r} e^{-c\log^2 \epsilon r} \quad \text{ if $d=3$}.
\end{equation}
The first claim follows.

Next we prove the correlation bound.  Using the fact that
the events  $\{Y_i \text{ is } \text{dirty}\}$ are
independent if $\delta_{2L}(c(Y_1)) \cap \delta_{2L}(c(Y_2)) = \varnothing$ we have
\begin{multline}
\label{E:Cor1}
\left|\mathbb P(A(x_1), A(x_2))-\mathbb P(A(x_1)) \mathbb P(A(x_2))\right|
 \leq \\
 \mathbb P(Q(x_1), Q(x_2) \text{ are in $c(Y_1), c(Y_2)$ for some dirty $Y_1, Y_2$ with $\delta_{2L}(c(Y_1)) \cap \delta_{2L}(c(Y_2)) = \varnothing$}).
 \end{multline}
An estimate similar to \cref{E:ClustBound} then gives
\begin{equation}
\label{E:Cor2}
\left|\mathbb P(A(x_1), A(x_2))-\mathbb P(A(x_1)) \mathbb P(A(x_2))\right| \leq
C|Q_L| e^{-c\log^2 \epsilon \dist_L(x_1, x_2)}.
\end{equation}

The first bound implies
\[
\mathbb E[ |\mathbb D_N|] \ls |\Lambda_N|e^{-c|\log \epsilon|^{2}}
\]
The correlation bound implies that
\[
\text{Var}\left[|\D_N|\right] \ls L^3|\Lambda_N| e^{-c|\log \epsilon|^{2}} 
\]
By taking $N=2^{k}$, applying Chebyshev's inequality to estimate deviations of
$|\mathbb D_N|$ and then the Borel-Cantelli lemma along this
subsequence, we have that, for almost every $\omega\in \Omega$, there
is $N_0(\omega)\in \N$ so that
\[
 \frac{|\mathbb D_N|}{|\Lambda_N|}\leq
C e^{-c|\log \epsilon|^{2}}
\]

\end{proof}

\begin{proof}[Proof of \cref{L:RandBasic} (1),(2), (3), and (4)]\
These bounds are elementary computations and we only sketch the basic argument.  Statement (1) simply relies on the fact that $G^{\lambda}_x, \nabla_e G^{\lambda}$ are Gaussian variables with mean zero and variance bounded above by a constant multiple of $\varsigma_2, \varsigma_{\nabla}$ respectively.  The idea of (2), (3) and (4) is to expand the field $G^{\lambda}$ in terms of either Dirichlet or Neumann Laplacian eigenfunctions depending on the boundary conditions.  By (a relative of) Parseval's identity, we may pass to momentum space and express all quantities  of interest as weighted sums of squares of i.i.d standard Gaussian variables (i.e. weighted by the eigenvalues of $[-\Delta + \lambda]^{-2}, -\Delta [-\Delta + \lambda]^{-2}$).  Here is where, for convenience, we use the Gaussian assumption: Fourier transforms of i.i.d. Gaussians are i.i.d. Gaussians.

These latter sums are estimated by first separating summands according to the momentum space annuli 
\[
A_s:=\{k\in \ffrac {2\pi} l \{1, \dotsc l\}^d: 2^{-(s+1)} \leq \|k\|_2 \leq  2^{-{s}}\}
\]  
This is useful because eigenvalues of $\Delta$ corresponding to these momenta are the same up to a multiplicative constant independent of $s$ and we can treat the contribution from each annulus as a constant multiple of an i.i.d. sum of squares of Gaussians indexed by wave vectors in the annuli.  If the cardinality of $A_s$ is big enough, the corresponding sum is highly concentrated around its mean while if the cardinality of $A_s$ is small the corresponding sum of squares must have a reasonably large fluctuation to contribute to the overall summation.  The extent to which these sets really contribute is reflected in the various cases stated in the Lemma.
\end{proof}

\begin{proof}[Proof of \cref{L:RandBasic} (5)]
Let us define 
\begin{align}
&\HH_r= \{ Q \subset Q_l: \text{Q is a cube of sidelength $r$}\},\\
&F=\cap_{r \geq \log^{90} l} \cap _{Q \in \HH_r} \{ \omega: \|\nabla \gld_Q\|_2^2 \geq c_d \epsilon^2 \varsigma_{\nabla, \lambda, r}^2 r^d\},\\
&F_1=\{ \omega: \|\nabla_e \gld_{Q_l}\|_{\infty, Q_l\cup \pO Q_l} \leq \epsilon \log^{30} l \}.
\end{align}
Then by \cref{L:RandBasic} (1),(3), 
\[
\mathbb P(F^c \cup F_1^c) \ls \exp(-c \log ^{55} l).
\]
We prove that $F\cap F_1 \subset \AA_r(A_d)$ for appropriate choices of $r, A_d>0$. On $F\cap F_1$, 
let $Q\in  \cup_{r \geq \log^{90} l}   \HH_r$.  For $x \in Q$, we may express the field $\gld_{Q_l, x}$ via
\[
\gld_{Q_l, x}= \gld_{Q, x}+ g^{(1)}_x  
\]
where $g^{(1)}$ is satisfies the Laplace equation $-\Delta g^{(1)}\equiv 0$ on $Q$ subject to the boundary condition $g^{(1)}_x= \gld_{Q_l, x}$ for  $x \in \pO Q$.  

Notice that
\[
\sum_{e \cap Q \neq \varnothing} [\nabla_e(\gld_{Q, x}+ g^{(1)}_x  )]^2 = \EE_{Q\cup \pO Q}(\gld_{Q, x})+ 
\sum_{e \cap Q \neq \varnothing} [\nabla_e g^{(1)}_x  ]^2
\]
since the cross term vanishes.  This is because $g^{(1)}$ is harmonic in $Q$ and $\gld_{Q}$ vanishes on $\pO Q$.
Hence
\[
\sum_{x \in Q} m^2_x \geq \EE_{Q\cup \pO Q}(\gld_Q)\geq c_d \epsilon^2 \varsigma_{\nabla, \lambda, r}^2 r^d
\]
because we restricted attention to $F\cap F_1$.  Thus $F\cap F_1\subset \AA_r(c_d, Q_l)$ whenever $\log^{90} l \leq r \leq \ffrac l4$.
\end{proof}

\noindent \textbf{Acknowledgements:} I'd like to thank
M.~Biskup, D.~Ioffe, G.~Kozma and S.~Shlosman for helpful discussions on this and related problems during various stages of its development.

\vfill

\noindent {\sc Nicholas Crawford:}
{\tt nickc@tx.technion.ac.il}\\

\end{document}